\definecolor{Bleu}{RGB}{0,0,175} 
\definecolor{CBmagenta}{HTML}{EE3377}
\definecolor{CBred}{HTML}{CC3311}
\definecolor{CBorange}{HTML}{EE7733}
\definecolor{CBteal}{HTML}{009988}
\definecolor{CBcyan}{HTML}{33BBEE}
\definecolor{CBblue}{HTML}{0077BB}
\definecolor{CBgrey}{HTML}{BBBBBB}
\definecolor{CB5red}{HTML}{DA0000} 
\definecolor{CB5blue}{HTML}{0066FF} 
\newcommand\DoToC{%
  \startcontents
  \printcontents{}{1}{\vskip 1.5em\hrule\vskip .75em}
  \vskip .75em\hrule\vskip 2em
}
\titlespacing*{\section} {0pt}{3.5ex plus 1ex minus .2ex}{*1} 
\toks@\expandafter{\@endtheorem\@endpetrue}
\edef\@endtheorem{\the\toks@}
\titlespacing*{\paragraph}{0pt}{1ex}{0.75ex}
\newcommand{\slightspacing}{\setstretch{1.15}}
\tikzstyle{ov}=[shape=rectangle,
\tikzstyle{av}=[shape=rectangle,
\tikzstyle{lv}=[shape=circle,draw=black!50,thick]
\newcommand{\nothere}[1]{}
\newtheorem{theorem}{Theorem}
\newtheorem{lemma}{Lemma}
\title{\LARGE\bfseries Inference on Variable Importance for \\Treatment Effect Heterogeneity: \\
Shapley Values and Beyond
\vspace{.5em}}
\author{\Large Pawe{\l} Morzywo{\l}ek$^{1,2}$, Peter B. Gilbert$^{2,3}$, Alex Luedtke$^{2,4}$\vspace{-.25em}}
\affil{\normalsize $^1$University of Copenhagen,\hspace{1.5ex}$^2$University of Washington,\hspace{1.5ex}$^3$Fred Hutchinson Cancer Center,\hspace{1.5ex}$^4$Harvard University}
\date{\today}
\begin{document}
\allowdisplaybreaks
\maketitle

\begin{abstract}
We provide an inferential framework to assess variable importance for heterogeneous treatment effects. This assessment is especially useful in high-risk domains such as medicine, where decision makers hesitate to rely on black-box treatment recommendation algorithms. The variable importance measures we consider are local in that they may differ across individuals, while the inference is global in that it tests whether a given variable is important for any individual. Our approach builds on recent developments in semiparametric theory for function-valued parameters, and is valid even when statistical machine learning algorithms are employed to quantify treatment effect heterogeneity. We demonstrate the applicability of our method to infectious disease prevention strategies.
\end{abstract}

\section{Introduction} \label{Section1}

We present an inferential framework for evaluating variable importance for heterogeneous treatment effects. Our approach focuses on local variable importance measures that vary among individuals, coupled with global inference that assesses the overall significance of a variable across all individuals. Our global tests evaluate null hypotheses of the form
\begin{align*}
 H_0 : \textnormal{the variable $X_j$ is of no importance for predicting an individual's treatment effect}
\end{align*}
against the complementary alternatives. Each such null is formalized by assigning a scalar measure of importance to a variable $X_j$, where zero indicates no importance. Accompanying function-valued local measures of variable importance are also provided, with corresponding confidence bands.

This work is motivated by the growing interest in quantifying treatment effect heterogeneity, which aids individualized decision making. Recent years have seen substantial advances in this line of literature \citep{Nie2021, Kennedy2023, Morzywolek2023}. However, decision makers, particularly in high-risk domains like medicine, may be hesitant to trust black-box decision support systems unless they understand the rationale behind recommendations. Therefore, statisticians and data analysts must not only provide accurate risk predictions and treatment effect estimates but also offer insights into the key variables influencing these predictions. 

In the literature on variable importance, it is common to distinguish between measures tied to a particular fitted model \citep{Breiman2001, Ishwaran2007, Strobl2007, Gromping2009, Benard2023} and `model-agnostic' measures, which are defined independently of any specific model \citep{Williamson2021a, Williamson2021, Williamson2020, Hines2025, Verdinelli2023, Li2023, Ziersen2024, Dai2024, Boileau2025}. In this work, we are interested in gaining insight into the data-generating process, rather than explaining predictions coming from a particular model. Hence, we will focus on model-agnostic measures. Some of the most popular such measures are leave-out-covariates (LOCO) \citep{Lei2018, Rinaldo2019, Williamson2021a, Williamson2021, Verdinelli2023} and Shapley values \citep{Shapley1953, Lundberg2017, Covert2020, Williamson2020, Verdinelli2023}.

Model-agnostic variable importance measures can be further classified into two main types. Global measures quantify the relevance of a given feature across the whole population \citep{Covert2020, Williamson2021a, Hines2025, Verdinelli2023, Li2023, Ziersen2024}. Local measures---also called individual measures---assess the importance of a variable for specific individuals with particular covariate profiles \citep{Dai2024}. Interestingly, local variable importance measures can still be used to compute a global variable importance measure after being averaged out across the population.

In recent years, several papers have explored variable importance measures for heterogeneous treatment effects. For example, \cite{Hines2025}, \cite{Li2023}, \cite{Ziersen2024}, and \cite{Boileau2025} focus on global, model-agnostic variable importance measures, while \cite{Benard2023} addresses variable importance in causal random forests. To the best of our knowledge, our paper is the first to study statistical inference for model-agnostic local variable importance measures in the context of heterogeneous treatment effects.

Statistical inference for local variable importance measures is challenging because these measures are function-valued. They take an individual's covariates as input and return the importance of a variable. Testing the null hypothesis of no importance is particularly difficult since the null lies on the boundary of the parameter space. This may lead to vanishing efficient influence function, requiring a higher-order expansion to develop valid, non-conservative tests \citep{Luedtke2019, Williamson2021a, Williamson2021, Verdinelli2023}. A common alternative has been to construct conservative tests using sample splitting \citep{Williamson2021, Hines2025, Ziersen2024}. Another approach leverages recent advances in semiparametric theory for pathwise differentiable Hilbert-valued parameters \citep{Luedtke2024}. This approach overcomes the boundary issue faced by other methods and facilitates statistical inference, even when flexible, data-adaptive algorithms are used to estimate nuisance parameters. We build on this framework, but differ from \cite{Luedtke2024} in how regularization is achieved. Specifically, rather than using an orthogonal expansion of the parameter with decaying scaling coefficients, we regularize through an embedding into a reproducing kernel Hilbert space (RKHS). This embedding is injective when the kernel is universal \citep{sriperumbudur2011universality}, ensuring no information is lost through regularization. We show that the resulting regularized target is amenable to semiparametric analysis.

Our contributions are:
\begin{enumerate}
    \item local, model-agnostic importance measures for a variable or set of variables,
    \item regularization of a Hilbert-valued parameter through an RKHS embedding to obtain its efficient influence function,
    \item Wald-type tests of no importance for a variable or set of variables, and
    \item confidence bands for the RKHS embeddings of local variable importance measures.
\end{enumerate}
Our work contributes to the literature on the inference for function-valued parameters \citep{Hudson2021, Williamson2021, Hudson2023, Hudson2023a,Luedtke2024}. We also demonstrate the applicability of our methodology in the context of infectious disease prevention strategies.

\section{Problem Setup} \label{Section2}

\subsection{Notation}

We consider independent observations $Z_1, Z_2, \dots, Z_n$ drawn from a distribution $P_0$, which is assumed to lie in a non-parametric statistical model $\mathcal{M}$ of mutually absolutely continuous distributions. Each $Z$ takes the form $(X,A,Y)$, with $X \coloneqq ( X_1, X_2, \dots, X_d ) \in \mathcal{X} \subset \mathbb{R}^d$ a vector of baseline covariates, $A$ a binary treatment, and $Y$ a real-valued outcome. We use $z=(x,a,y)$ to denote generic realizations of the random variable $Z=(X,A,Y)$. For a distribution $P\in\mathcal{M}$, we let $P_X$ denote the marginal distribution of $X$. Furthermore, let $\mu_P ( a, x ) \coloneqq \mathbb{E}_P ( Y \vert A=a, X=x )$ be the outcome regression and $\nu_P ( \mathcal{V} )$ be the function
\begin{align*}
     \nu_P ( \mathcal{V} ) ( x_\mathcal{V} ) \coloneqq \mathbb{E}_P \left\lbrace \mu_P ( 1, X ) - \mu_P ( 0, X ) \mid X_\mathcal{V} = x_\mathcal{V} \right\rbrace,
\end{align*}
where $\mathcal{V} \subseteq [d] \coloneqq \lbrace 1, \dots, d \rbrace$ is a subset of indices and $X_\mathcal{V} \coloneqq \lbrace X_j \colon j \in \mathcal{V} \rbrace$ is a subset of baseline covariates indexed by $\mathcal{V}$ \citep{Coston2020, Morzywolek2023}. Furthermore, define $X_{-\mathcal{V}} \coloneqq \lbrace X_j \colon j \in [d] \backslash \mathcal{V} \rbrace$. For notational convenience, whenever a function is indexed by `$\hat{P}_n$', we write $f_n$ in place of $f_{\hat{P}_n}$; similarly, we write $f_0$ in place of $f_{P_0}$. Under the assumptions of conditional exchangeability, positivity and consistency \citep{Neyman1923, Rubin1974,Robins1986, Hernan2020}, the function $\nu_P ( \mathcal{V} )$ corresponds to the conditional average treatment effect (CATE). 

Often it is of interest to identify and quantify how different variables contribute to treatment effect heterogeneity. Below, we discuss a few commonly used variable importance measures. Although these measures were originally developed for general prediction tasks, we adapt them to the setting of treatment effect heterogeneity. They can be used to test the null hypothesis of no importance for a given variable. Such tests can indicate whether a variable contributes to treatment effect heterogeneity under the chosen importance measure. They may also help assess whether measuring that variable is useful when the goal is to estimate treatment effect heterogeneity in future patients.

\subsection{Leave-Out-Covariates (LOCO) variable importance measures}

LOCO measures seek to assess whether the CATEs are identical when conditioned on the nested covariate subsets $X_{\mathcal{V}_1}$ and $X_{\mathcal{V}_2}$, where $\mathcal{V}_2 \subseteq \mathcal{V}_1$. If
\begin{align*}
    H_0\ \colon \ \gamma_{LOCO} ( P; \mathcal{V}_1, \mathcal{V}_2 ) ( x_{\mathcal{V}_1} ) \coloneqq \nu_P ( \mathcal{V}_1 ) ( x_{\mathcal{V}_1} ) - \nu_P ( \mathcal{V}_2 ) ( x_{\mathcal{V}_2} ) = 0 \hspace{1em} \text{ $P$-almost surely ($P$-a.s.),}
\end{align*}
then variables in $\mathcal{V}_1\backslash \mathcal{V}_2$ are considered globally unimportant, and otherwise they are important for at least some subset of the population. Hence, a global variable importance test evaluates $H_0$ against the complementary alternative. There are several variable importance measures and the corresponding tests of no importance based on $\gamma_{LOCO}$ for particular choices of index subsets $\mathcal{V}_1$ and $\mathcal{V}_2$. These measures can be used to assess:
\begin{enumerate}
    \item \textbf{unconditional effect modification:} whether the treatment effect varies across different levels of a set of variables $X_{\mathcal{V}_1}$, as compared to the trivial set $X_{\mathcal{V}_2}$ with $\mathcal{V}_2=\emptyset$. When $\mathcal{V}_1=\{i\}$, this variable importance measure is referred to as keep-one-in (KOI) \citep{Hines2025}.
    \item \textbf{conditional effect modification:} whether an additional set of variables $X_{\mathcal{V}_1\backslash\mathcal{V}_2}$ modifies the treatment effect, conditional on a nontrivial set of baseline covariates $X_{\mathcal{V}_2}$. When $\mathcal{V}_1 = [d]$ and $\mathcal{V}_2 = [d] \backslash \lbrace i \rbrace$ the resulting variable importance measure is referred to as leave-one-out (LOO) \citep{Hines2025}.
\end{enumerate}
Many variable importance measures in the literature are LOCO measures \citep{Lei2018, Rinaldo2019, Williamson2021a, Williamson2021, Hines2025, Verdinelli2023}. Their primary advantage lies in their ease of interpretation and computation. A disadvantage is that they are affected by multicollinearity among variables, which can lead to counterintuitive results. For example, if two variables $X_i$ and $X_j$ are perfectly correlated, then both will have zero LOO variable importance, even if they are highly predictive of the treatment effect \citep{Verdinelli2023}.

\subsection{Shapley values}

Shapley values are another popular variable importance measure \citep{Lundberg2017, Covert2020, Williamson2020, Verdinelli2023}. They stem from cooperative game theory, where they have been developed as means to distribute the payout among a group of cooperative players in a fair manner \citep{Shapley1953}. When using the CATE as the payoff function, the Shapley value of $i$-th feature is given by
\begin{align*}
     \gamma_{SHAP,i} ( P ) ( x ) &\coloneqq \sum_{\mathcal{V} \subseteq [d] \backslash \lbrace i \rbrace} \beta_\mathcal{V}\, \gamma_{LOCO} ( P; \mathcal{V} \cup \lbrace i \rbrace, \mathcal{V} ) ( x_{\mathcal{V} \cup \left\lbrace i \right\rbrace} ),
\end{align*}
where $\beta_\mathcal{V} \coloneqq \frac{1}{d} \binom{d-1}{\lvert \mathcal{V} \rvert}^{-1}$. The above display illustrates that the Shapley value is a $\beta_\mathcal{V}$-weighted LOCO variable importance measure, as discussed in the context of regression in \cite{Verdinelli2023}. Alternatively, one can view the Shapley values as the marginal contribution of the variable $X_i$ to the subset of variables $X_\mathcal{V}$, where $\mathcal{V} \subseteq [d] \backslash \lbrace i \rbrace$, appropriately averaged over the set of all permutations of the features \citep{Mitchell2022, Verdinelli2023}---see Appendix~\ref{app:permutation} for further discussion.

One reason Shapley values are appealing as a variable importance measure is that they uniquely satisfy a specific set of axioms \citep{Shapley1953}. These axioms include efficiency (which differs from the statistical meaning of the term), symmetry, null player, and linearity. Together, these properties are desirable for the payout function in the context of cooperative game theory. We discuss their formulation in the context of variable importance for heterogeneous treatment effects in Appendix~\ref{SectionA2}. 

Table \ref{tab:var_importance_disagreement} provides examples of scenarios where the KOI, LOO, and Shapley variable importance measures will be approximately null. Beyond those scenarios, it is theoretically possible for the Shapley value to be near zero while KOI and LOO are not. This would require a precise cancellation of variable importance across different feature subsets when weighted by $\beta_{\mathcal{V}}$---a situation that, while possible, seems unlikely in practice.

\begin{table}[tb]\small
\centering
\caption{Comparison of variable importance measures under scenarios that cause them to disagree. Measures marked with `$\approx 0$' will necessarily be small, while others may not be.}\label{tab:var_importance_disagreement}
\renewcommand{\arraystretch}{1.2} 
\begin{tabular}{l c c c}
\textbf{Scenario} & \textbf{KOI} & \textbf{LOO} & \textbf{Shapley} \\
\midrule
Variable only predictive through interactions with others 
& $\approx 0$
&  
&  \\
Variable is highly correlated with another variable  
&  
& $\approx 0$
&  \\
Variable is highly correlated with many other variables  
&  
& $\approx 0$
& $\approx 0$ \\
Variable only predictive in combination with strict subset of others 
& $\approx 0$
& $\approx 0$ \\
Variable only predictive in combination with all others 
& $\approx 0$
& 
& $\approx 0$ 
\end{tabular}
\end{table}

\section{Estimation and inference} \label{Section3}

\subsection{Proposed estimator}

We introduce a parameter that makes it possible to express different variable importance measures. For a vector of weights $\omega= \lbrace \omega_\mathcal{V} : \mathcal{V} \subseteq [d] \rbrace$, this parameter is defined as
\begin{align} \label{eqn1}
    \gamma_\omega (P) (x) \coloneqq \sum_{\mathcal{V} \subseteq [d]} \omega_\mathcal{V}\,\nu_P ( \mathcal{V}) (x_\mathcal{V}).
\end{align}
In particular, the parameter $\gamma_{LOCO} ( P; \mathcal{V}_1, \mathcal{V}_2 )$ is a special case of $\gamma_\omega ( P )$ with $\omega_{\mathcal{V}_1} = 1$, $\omega_{\mathcal{V}_2} = -1$, and $\omega_{\mathcal{V}}=0$ for all $\mathcal{V} \not\in\{\mathcal{V}_1,\mathcal{V}_2\}$, and $\gamma_{SHAP,i} ( P )$ is a special case of $\gamma_\omega ( P )$ with Shapley weights $\omega_{
\mathcal{V} \cup \lbrace i \rbrace} = \frac{1}{d} \binom{d-1}{\lvert \mathcal{V} \rvert}^{-1}$ and $\omega_{\mathcal{V}} = - \frac{1}{d} \binom{d-1}{\lvert \mathcal{V} \rvert}^{-1}$, for $\mathcal{V} \subseteq [d] \backslash \lbrace i \rbrace$.

To assess variable importance, we consider testing
\begin{align*}
    H_0 \colon \gamma_{\omega} ( P_0 ) = 0\ \ (P_0\textnormal{-a.s.}) \quad \textnormal{ vs. } \quad H_1\colon \textnormal{not $H_0$.}
\end{align*}
To this end, we wish to construct a Wald-type test based on a root-$n$ consistent estimator of the parameter $\gamma_\omega (P)$. Unfortunately, the root-$n$ consistent estimation of the parameter $\gamma_\omega (P)$ may often be infeasible due to lack of sufficient smoothness of the parameter---see Appendix \ref{SectionA1} for a review of semiparametric theory for Hilbert-valued parameters and a more extensive discussion. As a remedy, we embed the parameter in an RKHS, which enforces additional smoothness and facilitates statistical inference. Let $\mathcal{K} \colon \mathcal{X} \times \mathcal{X} \rightarrow \mathbb{R}$ be a symmetric, continuous positive semi-definite kernel function associated with the RKHS $\mathcal{H}$. The embedding of the parameter of interest into the RKHS $\mathcal{H}$ is given by \citep[][Theorem 106]{berlinet2003reproducing}
\begin{align} \label{eqn3}
    \gamma_\omega^\mathcal{K} (P) (x) \coloneqq \int_\mathcal{X} \mathcal{K} (x, x^\prime)\, \gamma_\omega (P) (x^\prime)\, P_{X} (dx^\prime).
\end{align}
Intuitively, one may view this embedding as follows. The function $\gamma_\omega (P)$ may not be smooth, which can result in statistical inference being infeasible. However, the function resulting from the RKHS embedding is smooth, which will allow us to draw statistical inferences about it. Moreover, provided the kernel is universal \citep{sriperumbudur2011universality}---as, for example, is a Gaussian or Laplace kernel on a compact domain---the embedding is one-to-one, allowing inferences about $\gamma_\omega^\mathcal{K}(P)$ to be mapped back to the original target of inference, $\gamma_\omega (P)$. Figure \ref{figure1} illustrates the effect of the RKHS embedding on two sample functions. In practice we recommend selecting the bandwidth parameter the kernel relies on using the median heuristic \citep{Garreau2018}. Furthermore, for $f \in \mathcal{H}$ and $r\ge 1$, we define the Bochner norm $\left\| f \right\|_{L^r ( P; \mathcal{H} )} \coloneqq \left\lbrace \int \left\| f (z) \right\|_{\mathcal{H}}^r P (dz) \right\rbrace^{1/r}$.

\begin{figure} 
\begin{center}
 \includegraphics[width=\textwidth]{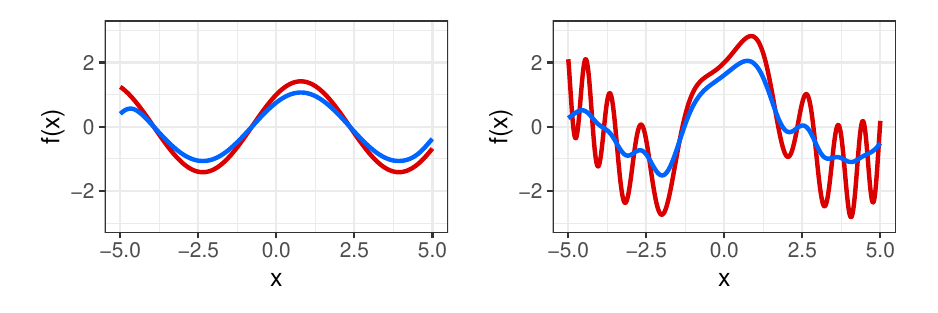}    
\end{center}
\caption{Illustration of {\color{CB5red}\textbf{functions}} and {\color{CB5blue}\textbf{their RKHS embeddings}}. Smooth functions remain almost unchanged (left), whereas rough ones become noticeably smoother (right). For a universal kernel, the RKHS embedding is injective, allowing the original function to be losslessly recovered from its embedding.} \label{figure1}
\end{figure}

Below, we establish the pathwise differentiability of $\gamma_\omega^\mathcal{K}$. To do so, we impose the following:
\begin{enumerate}[label={C\arabic*)},ref=C\arabic*, series=conds]
    \item \label{cond:bounded} \textit{Kernel boundedness:} The kernel satisfies $\sup_{x} \mathcal{K}(x, x) < \infty.$
    \item \label{cond:positivity} \textit{Strong positivity:} there exists a constant \(c > 0\) such that $g_P(a \mid x) \coloneqq \mathbb{P}(A = a \mid X = x) \geq c$ $P_X$-a.s. for each $a \in \lbrace 0,1 \rbrace$.
    \item \label{cond:finitesecondmoment} \textit{Finite conditional second moment of outcome:} $\sup_{P\in\mathcal{M}}\|\mathbb{E}_P[Y^2\mid (A,X)=\cdot\,]\|_{L^\infty(P_{A,X})} < \infty$.
\end{enumerate}
Above, $P_{A,X}$ denotes the marginal distribution of $(A,X)$ under $P$.

\begin{theorem}[Pathwise differentiability] \label{Theorem1}
    Suppose \ref{cond:bounded}-\ref{cond:finitesecondmoment}. Then the function-valued parameter $\gamma_\omega^\mathcal{K}$ is pathwise differentiable at each $P \in \mathcal{M}$ relative to $\mathcal{M}$ with efficient influence function
\begin{align*}
    \phi_{P}^{\omega}(z)(x^\prime)&= \sum_{\mathcal{V} \subseteq [d]} \omega_\mathcal{V} \left[ \kappa_{P,\mathcal{V}}(x_\mathcal{V})(x^\prime) \psi_P(z) + \left\lbrace \mathcal{K} (x^\prime, x) - \kappa_{P,\mathcal{V}}(x_\mathcal{V})(x^\prime) \right\rbrace \nu_P( \mathcal{V})( x_\mathcal{V}) \right] - \gamma_\omega^\mathcal{K}( P)( x^\prime),
\end{align*}
where $\psi_P ( z ) \coloneqq \frac{2a-1}{g_P ( a \mid x)} \lbrace y - \mu_{P} (a, x) \rbrace + \mu_{P} (1, x) - \mu_{P} (0, x)$.
\end{theorem}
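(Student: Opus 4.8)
The plan is to establish pathwise differentiability by computing the Gateaux derivative of $\gamma_\omega^{\mathcal{K}}$ along a one-dimensional parametric submodel and verifying that it can be represented as an inner product with the claimed $\phi_P^\omega$, which must lie in the appropriate Bochner space of $\mathcal{H}$-valued, mean-zero functions. Since $\gamma_\omega^{\mathcal{K}}$ is a finite linear combination of the maps $P \mapsto \nu(P;\mathcal{V})^{\mathcal{K}}$ (the RKHS embedding of the subset-CATE $\nu(P;\mathcal{V})$ with respect to $P_X$), linearity of pathwise differentiation reduces everything to handling a single term $P \mapsto \int_{\mathcal{X}} \mathcal{K}(x',x')\,\nu(P;\mathcal{V})(x'_{\mathcal{V}})\,\ldots$; more precisely, to differentiating $\Psi_{\mathcal{V}}^{\mathcal{K}}(P)(x') \coloneqq \int_{\mathcal{X}} \mathcal{K}(x',x)\,\nu(P;\mathcal{V})(x_{\mathcal{V}})\,P_X(dx)$. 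The first substantive step is therefore to write $\nu(P;\mathcal{V})(x_{\mathcal{V}})$ in a form amenable to differentiation: using strong positivity \ref{cond:positivity}, I would express the CATE through the uncentered efficient-influence-function representation $\psi_P(z) = \frac{2a-1}{g_P(1\mid x)}\{y-\mu_P(a,x)\} + \mu_P(1,x)-\mu_P(0,x)$, noting $\mathbb{E}_P[\psi_P(Z)\mid X=x] = \mu_P(1,x)-\mu_P(0,x)$ and hence $\nu(P;\mathcal{V})(x_{\mathcal{V}}) = \mathbb{E}_P[\psi_P(Z)\mid X_{\mathcal{V}}=x_{\mathcal{V}}]$.

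Next I would take a regular parametric submodel $\{P_t\}$ through $P=P_0$ with score $s \in L^2_0(P)$ and differentiate $\Psi_{\mathcal{V}}^{\mathcal{K}}(P_t)(x')$ at $t=0$. There are three sources of $t$-dependence: (i) the outer integrating measure $P_{t,X}$, contributing $\int \mathcal{K}(x',x)\,\nu(P;\mathcal{V})(x_{\mathcal{V}})\,s_X(x)\,P_X(dx)$ where $s_X$ is the marginal score of $X$; (ii) the conditional law defining the conditional expectation $\nu(P_t;\mathcal{V})$; and (iii) the dependence of $\psi_{P_t}$ itself on $t$ through $g_{P_t}$ and $\mu_{P_t}$. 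For (iii), the standard double-robustness / Neyman-orthogonality property of the AIPW score $\psi_P$ means that perturbing the nuisances $g,\mu$ contributes nothing to the derivative of $\mathbb{E}_P[\psi_P(Z)\mid X=x]$; this is where \ref{cond:positivity} and the finite-second-moment condition \ref{cond:finitesecondmoment} are used to justify interchanging derivative and integral and to control remainder terms. For (ii), differentiating a conditional expectation $\mathbb{E}_{P_t}[\,\cdot \mid X_{\mathcal{V}}=x_{\mathcal{V}}]$ produces a term involving the conditional score $s(z) - \mathbb{E}_P[s(Z)\mid X_{\mathcal{V}}=x_{\mathcal{V}}]$ paired against $\psi_P(z)$. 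Collecting (i)–(iii), reorganizing the conditional-expectation bookkeeping via the tower property, and introducing the shorthand $\kappa_{P,X\mid X_{\mathcal{V}}}(x') \coloneqq \mathbb{E}_P[\mathcal{K}(x',X)\mid X_{\mathcal{V}}=x_{\mathcal{V}}]$, I expect the derivative to collapse to $\langle\!\langle \mathbb{E}_P[\,\kappa_{P,X\mid X_{\mathcal{V}}}(x')\psi_P(Z) + \{\mathcal{K}(x',X)-\kappa_{P,X\mid X_{\mathcal{V}}}(x')\}\nu(P;\mathcal{V})(X_{\mathcal{V}})\,\cdot\, s(Z)]\rangle\!\rangle$, i.e. the per-$\mathcal{V}$ piece of $\phi_P^\omega$ before centering. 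Summing over $\mathcal{V}$ with weights $\omega_{\mathcal{V}}$ and subtracting the overall mean $\gamma_\omega^{\mathcal{K}}(P)(x')$ (to land in $L^2_0$) yields the stated $\phi_P^\omega$.

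Finally I would verify the two properties that make this the \emph{efficient} influence function: that $\phi_P^\omega \in L^2(P;\mathcal{H})$ — which follows from \ref{cond:bounded} (bounding $\|\mathcal{K}(x',\cdot)\|_{\mathcal{H}}^2 = \mathcal{K}(x',x') \le \sup_x\mathcal{K}(x,x) <\infty$, hence all kernel and $\kappa$ terms are uniformly bounded in $\mathcal{H}$-norm) together with \ref{cond:positivity} and \ref{cond:finitesecondmoment} (giving $\psi_P \in L^2(P)$ since $y-\mu_P(a,x)$ has bounded conditional second moment and $g_P$ is bounded away from $0$) — and that it lies in the closed linear span of scores, i.e. the nonparametric tangent space, which here is all of $L^2_0(P)$; since $\phi_P^\omega$ is by construction an element of $L^2_0(P;\mathcal{H})$ realized as a measurable function of $Z$, this is immediate in the nonparametric model $\mathcal{M}$. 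The main obstacle I anticipate is step (ii)–(iii): carefully differentiating the composition "conditional expectation of an AIPW score whose own definition depends on $t$" and rigorously justifying the dominated-convergence / interchange-of-limits arguments under exactly conditions \ref{cond:bounded}–\ref{cond:finitesecondmoment}, rather than the algebra of assembling the final expression, which is routine once the three contributions are identified. A clean way to organize this is to first prove the result for the building block $\nu(\cdot;\mathcal{V})^{\mathcal{K}}$ as a lemma and then invoke linearity, which is presumably how the paper's Appendix~\ref{SectionA1} machinery is set up.
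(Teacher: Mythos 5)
Your proposal is correct in outline and arrives at the right efficient influence function, but it takes a genuinely different route from the paper. You differentiate $\nu^{\mathcal{K}}(\cdot;\mathcal{V})$ directly along a quadratic-mean-differentiable submodel, split the $t$-dependence into the outer marginal $P_{t,X}$, the conditional law defining $\nu(P_t;\mathcal{V})$, and the nuisances inside $\psi_{P_t}$, and kill the third contribution by Neyman orthogonality of the AIPW pseudo-outcome; collecting the remaining two terms via the tower property does reproduce $\phi_P^{\mathcal{V}}$ exactly. The paper instead proves a separate lemma (Lemma~\ref{lem:pd}) that writes $\nu^{\mathcal{K}}(\cdot;\mathcal{V})$ as a composition of six primitive maps (conditional mean, fixing the binary argument, domain lifts, a second conditional mean, and the kernel embedding), each of which is \emph{totally} pathwise differentiable by results in \cite{luedtke2024simplifying}, and then obtains the adjoint of the local parameter by the chain rule of that reference; the EIF is read off by applying the adjoint to $\mathcal{K}(\cdot,x')$, and linearity over $\mathcal{V}$ plus an explicit expansion showing $\|\phi_P^{\mathcal{V}}\|_{L^2(P;\mathcal{H})}<\infty$ completes the argument. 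What the paper's route buys is precisely the step you flag as your main obstacle: pathwise differentiability of an $\mathcal{H}$-valued parameter requires $o(\epsilon)$ control of the remainder \emph{in the Hilbert norm, uniformly over the path}, not just pointwise Gateaux differentiability at each $x'$, and the compositional calculus offloads that remainder control to the cited primitives. Your self-contained route would need to carry out that dominated-convergence/remainder analysis by hand under \ref{cond:bounded}--\ref{cond:finitesecondmoment}; since you identify this as the outstanding work rather than claiming it is done, your proposal is a sound but incomplete alternative derivation of the same formula. (Minor bookkeeping note: the $\{\mathcal{K}(x',x)-\kappa_{P,X\mid X_\mathcal{V}}(x')\}\nu(P;\mathcal{V})(x_\mathcal{V})$ term arises jointly from your contributions (i) and (ii) after centering, not from (i) alone, but this does not affect the conclusion.)
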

Here, $\kappa_{P,\mathcal{V}} (x_\mathcal{V}) (\cdot) \coloneqq \mathbb{E}_P \lbrace \mathcal{K} (\cdot, X) \mid X_{\mathcal{V} } = x_\mathcal{V}\rbrace$ denotes the kernel conditional mean embedding \citep{Song2009, Muandet2017, Park2020, Park2021}. Heuristically, the pathwise differentiability of the parameter is important as it informs the von Mises expansion $\gamma_\omega^\mathcal{K} (P_0) (\cdot) \approx \gamma_\omega^\mathcal{K}(\hat{P}_n) (\cdot) + P_0 \phi_{n}^{\omega} (\cdot)$, where $\hat{P}_n$ is an estimate of $P_0$ and $Pf$ denotes the integral $\int f (z) dP(z)$ for any probability measure $P$ and $P$-integrable real-valued function $f$.

Replacing the mean under the unknown $P_0$ by an empirical mean, this approximation suggests the form of the one-step estimator $\hat{\gamma}^\mathcal{K}_\omega (\cdot) \coloneqq \gamma^\mathcal{K}_\omega (\hat{P}_n) (\cdot) + P_n \phi_{n}^\omega (\cdot)$, which simplifies as
\begin{align*}
    \hat{\gamma}^\mathcal{K}_\omega (\cdot) = \frac{1}{n} \sum_{i=1}^n \sum_{\mathcal{V} \subseteq [d]} \omega_\mathcal{V} \left[ \nu_n (\mathcal{V}) ( X_{\mathcal{V},i} ) \mathcal{K} ( \cdot, X_i ) + \left\lbrace \psi_n ( Z_i ) - \nu_n (\mathcal{V}) ( X_{\mathcal{V},i} ) \right\rbrace \kappa_{n,\mathcal{V}} (X_{\mathcal{V},i})(\cdot) \right].
\end{align*}
Computing $\hat{\gamma}^\mathcal{K}_\omega$ can be computationally expensive for certain choices of weights, such as the Shapley weights $\beta_{\mathcal{V}}$. In Appendix \ref{app:computationalconsiderations}, we discuss computational considerations and provide recommendations for performing these computations efficiently.

\subsection{Weak convergence}

We now study the large-sample properties of the one-step estimator $\hat{\gamma}_\omega^\mathcal{K}$. For simplicity, our study focuses on the sample splitting case where $\hat{P}_n$ is obtained using an independent set of iid draws from $P_0$. Similar results apply if cross-fitting is employed---see \cite{Luedtke2024}.

The result below states that $\hat{\gamma}_\omega^\mathcal{K}$ is asymptotically linear---and therefore asymptotically Gaussian---which enables statistical inference. We discuss its conditions at the end of this subsection.

\begin{theorem}[Asymptotic linearity and weak convergence]
\label{Theorem2} 
    Suppose \ref{cond:bounded}-\ref{cond:finitesecondmoment} and
    \begin{enumerate}[label={C\arabic*)},ref=C\arabic*,resume*=conds]
    \item \label{cond:rategmu} $\max_{a\in\{0,1\}} \left\lbrace \left\| 1 - \frac{g_0 (a \mid x)}{g_n (a \mid x)} \right\|_{L^4 ( P_{0,X} )} \| \mu_{0} (a,x) - \mu_{n} (a,x) \|_{L^4 ( P_{0,X} )} \right\rbrace = o_{P} ( n^{-1/2} )$,
    \item \label{cond:ratekappanu} $\| \kappa_{n,\mathcal{V}} - \kappa_{0,\mathcal{V}} \|_{L^2(P_{0,X};\mathcal{H})} \left\| \nu_n ( \mathcal{V} ) - \nu_0 ( \mathcal{V} ) \right\|_{L^2 ( P_{0,X} )} = o_{P} ( n^{-1/2} )$,
    \item \label{cond:consistentkappa} $\left\| \kappa_{n,\mathcal{V}} - \kappa_{0,\mathcal{V}} \right\|_{L^4(P_{0,X};\mathcal{H})} = o_P(1)$ and $\| \kappa_{n,\mathcal{V}} \|_{L^2(P_{0,X};\mathcal{H})}=O_P(1)$,
    \item \label{cond:consistentpsi} $\left\| \psi_n - \psi_0 \right\|_{L^2(P_0)} = o_P(1)$ and $\| \psi_n \|_{L^4(P_{0})}  = O_P(1)$, and
    \item \label{cond:consistentnu} $\left\| \nu_n (\mathcal{V}) - \nu_0 (\mathcal{V}) \right\|_{L^2(P_{0,X})} = o_P(1)$ and $\| \nu_n (\mathcal{V})\|_{L^4 (P_{0,X})} = O_P(1)$.
    \end{enumerate}
    Then, the following hold:
    \begin{enumerate}[label=(\roman*)]
        \item Asymptotic linearity: $\hat{\gamma}_\omega^\mathcal{K} - \gamma_\omega^\mathcal{K} ( P_0 ) = \frac{1}{n} \sum_{i=1}^n \phi_{0}^\omega ( Z_i ) + o_P ( n^{-1/2} )$, and
        \item Weak convergence: $n^{1/2} \left[ \hat{\gamma}_\omega^\mathcal{K} - \gamma_\omega^\mathcal{K} ( P_0 ) \right] \rightsquigarrow \mathbb{H}_\omega$, where $\mathbb{H}_\omega$ is a tight $\mathcal{H}$-valued Gaussian random variable such that, for each $h \in \mathcal{H}$, $\langle \mathbb{H}_\omega, h \rangle_\mathcal{H} \sim \mathcal{N} \left( 0, P_0 [ \langle \phi_{0}^\omega, h \rangle_\mathcal{H}^2 ] \right)$.
    \end{enumerate}
\end{theorem}
Using the above weak convergence result for the one-step estimator $\hat{\gamma}_\omega^\mathcal{K}$, we can construct an asymptotically valid confidence set. Consider the following set:
\begin{equation} \label{eqn4}
    \mathcal{C}_n^\omega ( \xi_\omega ) \coloneqq \left\lbrace h \in \mathcal{H} \colon \| \hat{\gamma}_\omega^\mathcal{K} - h \|^2_\mathcal{H} \leq \xi_\omega / n \right\rbrace,
\end{equation}
where $\xi_\omega \geq 0$ is a specified threshold. Let $\xi_{\omega, 1 - \alpha}$ be the $( 1 - \alpha )$-quantile of $\| \mathbb{H}_\omega \|_\mathcal{H}^2$.

\begin{theorem}[Asymptotically valid confidence set] \label{Theorem3} Suppose the conditions of Theorem \ref{Theorem2} hold. If $\hat{\xi}_{\omega} \rightarrow \xi_{\omega, 1-\alpha}$ in probability, then for any weights $\omega$,
\begin{align*}
    \lim_{n \rightarrow \infty} P_0^n \left\lbrace \gamma^\mathcal{K}_\omega ( P_0 ) \in \mathcal{C}_n^\omega ( \hat{\xi}_{\omega} ) \right\rbrace = 1 - \alpha.
\end{align*}
\end{theorem}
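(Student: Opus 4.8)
The plan is to deduce Theorem~\ref{Theorem3} from the weak convergence established in Theorem~\ref{Theorem2}(ii) via the continuous mapping theorem together with a consistency-of-threshold argument. First I would observe that membership of $\gamma_\omega^\mathcal{K}(P_0)$ in $\mathcal{C}_n^\omega(\hat\xi_{\omega,n})$ is exactly the event $n\,\|\hat\gamma_\omega^\mathcal{K}-\gamma_\omega^\mathcal{K}(P_0)\|_\mathcal{H}^2\le\hat\xi_{\omega,n}$. By Theorem~\ref{Theorem2}(ii), $n^{1/2}[\hat\gamma_\omega^\mathcal{K}-\gamma_\omega^\mathcal{K}(P_0)]\rightsquigarrow\mathbb{H}_\omega$ in $\mathcal{H}$, and since $f\mapsto\|f\|_\mathcal{H}^2$ is continuous on $\mathcal{H}$, the continuous mapping theorem gives $n\,\|\hat\gamma_\omega^\mathcal{K}-\gamma_\omega^\mathcal{K}(P_0)\|_\mathcal{H}^2\rightsquigarrow\|\mathbb{H}_\omega\|_\mathcal{H}^2$, a real-valued limit.

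Next I would handle the random, data-dependent threshold $\hat\xi_{\omega,n}$. Writing $T_n\coloneqq n\,\|\hat\gamma_\omega^\mathcal{K}-\gamma_\omega^\mathcal{K}(P_0)\|_\mathcal{H}^2$, we have the joint behaviour $(T_n,\hat\xi_{\omega,n})\rightsquigarrow(\|\mathbb{H}_\omega\|_\mathcal{H}^2,\xi_{\omega,1-\alpha})$, where the second coordinate is degenerate (a constant) by the hypothesis $\hat\xi_{\omega,n}\to\xi_{\omega,1-\alpha}$ in probability; convergence in distribution of $T_n$ plus convergence in probability of $\hat\xi_{\omega,n}$ to a constant yields joint convergence by Slutsky's theorem. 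Then $T_n-\hat\xi_{\omega,n}\rightsquigarrow\|\mathbb{H}_\omega\|_\mathcal{H}^2-\xi_{\omega,1-\alpha}$, so
\begin{align*}
P_0^n\{\gamma_\omega^\mathcal{K}(P_0)\in\mathcal{C}_n^\omega(\hat\xi_{\omega,n})\}
= P_0^n\{T_n-\hat\xi_{\omega,n}\le 0\}
\to \Pr\{\|\mathbb{H}_\omega\|_\mathcal{H}^2\le\xi_{\omega,1-\alpha}\}
\end{align*}
provided $0$ is a continuity point of the limiting distribution, i.e.\ provided $\Pr\{\|\mathbb{H}_\omega\|_\mathcal{H}^2-\xi_{\omega,1-\alpha}=0\}=0$. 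By definition of $\xi_{\omega,1-\alpha}$ as the $(1-\alpha)$-quantile, the right-hand side equals $1-\alpha$ as soon as the law of $\|\mathbb{H}_\omega\|_\mathcal{H}^2$ is continuous (atomless) at its $(1-\alpha)$-quantile.

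The main obstacle, therefore, is establishing that $\|\mathbb{H}_\omega\|_\mathcal{H}^2$ has no atom at $\xi_{\omega,1-\alpha}$ (equivalently, that its cdf is continuous there), which is what licenses both the continuous-mapping step at the boundary and the identification of the limit probability with $1-\alpha$ rather than merely $\ge 1-\alpha$. I would argue this by noting that $\mathbb{H}_\omega$ is a tight centered Gaussian random element of the separable Hilbert space $\mathcal{H}$, so it admits a Karhunen--Loève-type representation $\|\mathbb{H}_\omega\|_\mathcal{H}^2\overset{d}{=}\sum_{k}\lambda_k\chi_k^2$ with $\chi_k$ i.i.d.\ standard normal and $\lambda_k\ge 0$ the eigenvalues of the covariance operator; such a distribution is either degenerate at $0$ (only in the trivial case $\mathbb{H}_\omega\equiv 0$) or absolutely continuous on $(0,\infty)$, hence atomless at every positive point, and in particular at $\xi_{\omega,1-\alpha}>0$. (The degenerate case $\mathbb{H}_\omega\equiv 0$ corresponds to a vanishing efficient influence function; one would note that under the stated conditions this is excluded, or handle it separately by observing the coverage statement is then trivial.) Once this continuity is in hand, the remaining steps are the routine Slutsky and continuous-mapping manipulations sketched above, and the conclusion follows.
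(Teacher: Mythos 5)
Your argument is correct in substance, but it takes a genuinely different route from the paper only in the sense that the paper does not argue at all: its proof of Theorem~\ref{Theorem3} is a one-line deferral to Theorem~3 of \cite{Luedtke2024}. You have essentially reconstructed the content of that cited result: rewriting coverage as the event $T_n \le \hat\xi_{\omega,n}$ with $T_n = n\|\hat\gamma_\omega^\mathcal{K}-\gamma_\omega^\mathcal{K}(P_0)\|_\mathcal{H}^2$, applying the continuous mapping theorem to the weak convergence from Theorem~\ref{Theorem2}(ii), using Slutsky to absorb the consistent random threshold, and then invoking atomlessness of $\|\mathbb{H}_\omega\|_\mathcal{H}^2$ (via the Karhunen--Lo\`eve representation $\sum_k \lambda_k\chi_k^2$, which is absolutely continuous as soon as one eigenvalue of the covariance operator is positive) to identify the limit with exactly $1-\alpha$ rather than merely a bound. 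That last step is indeed the crux, and your handling of it is sound. The one loose end is the degenerate case $\mathbb{H}_\omega\equiv 0$ (i.e.\ $\phi_0^\omega = 0$ $P_0$-a.s.): there the coverage statement is \emph{not} trivial --- $\xi_{\omega,1-\alpha}=0$, both $T_n$ and $\hat\xi_{\omega,n}$ tend to zero, and the limiting probability of $T_n\le\hat\xi_{\omega,n}$ is indeterminate, so the exact-$(1-\alpha)$ conclusion can fail. The conditions of Theorem~\ref{Theorem2} do not explicitly exclude this case (strict positive-definiteness of the covariance operator only appears later as \ref{cond:spdcovariance} in Theorem~\ref{Theorem:confint}), so you should state non-degeneracy as an implicit hypothesis rather than claim the case is trivial; the cited external theorem carries an analogous requirement. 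With that caveat noted, your proof is a valid and more self-contained alternative to the paper's citation.
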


The bootstrap can be used to obtain the consistent estimator of $\xi_{\omega, 1-\alpha}$ required by the theorem. Let $Z_1^\sharp, Z_2^\sharp, \dots, Z_n^\sharp$ be an iid sample from the empirical measure $P_n$ and let $P_n^\sharp$ be the empirical distribution of $Z_1^\sharp, Z_2^\sharp, \dots, Z_n^\sharp$. Furthermore, define $\mathbb{H}_{\omega,n}^\sharp \coloneqq n^{1/2} ( P_n^{\sharp} - P_n ) \phi_n^\omega$. Under conditions, the bootstrap estimator $\hat{\xi}_{\omega}$ obtained as $( 1 - \alpha )$-quantile of $\|\mathbb{H}_{\omega,n}^\sharp\|_\mathcal{H}^2$ is a consistent estimator of the threshold $\xi_{\omega, 1-\alpha}$---see the Appendix \ref{SectionA4} for further discussion.

Theorem \ref{Theorem2} relies on conditions \ref{cond:rategmu}-\ref{cond:consistentnu}, which impose requirements on the magnitudes and convergence rates of the nuisance estimators. Specifically, these conditions involve the $L^2 (P_{0,X};\mathcal{H})$-norm of $\kappa_{n,\mathcal{V}}$, $L^4 (P_{0,X})$-norms of $\psi_n$ and $\nu_n$, the $L^2 (P_{0,X};\mathcal{H})$ and $L^4 (P_{0,X};\mathcal{H})$-distances between $\kappa_{n,\mathcal{V}}$ and the kernel conditional mean embedding $\kappa_{0,\mathcal{V}}$, $L^4 (P_{0,X})$-distances between $g_n (a \mid x)$ and the propensity score $g_0 (a \mid x)$ and $\mu_n (a, x)$ and the outcome regression $\mu_0 (a, x)$, the $L^2 (P_{0,X})$-distance between $\nu_n$ and $\nu_0$, and $L^2 (P_{0})$-distance between $\psi_n$ and $\psi_0$. Such conditions have been studied extensively across a wide range of semiparametric and nonparametric problems \citep[e.g.,][]{vanderLaan2006,chernozhukov2018double,Kennedy2023,Luedtke2024}; they often hold when the relevant nuisance functions are sufficiently smooth and are estimated using approaches that exploit this smoothness. For example, suppose $X$ is continuous and $\mathbb{R}^{d}$-valued, and that $g_0$, $\mu_0$, $\nu_0$ and $\kappa_{0,\mathcal{V}}$ are H\"older smooth with exponents $b_g$, $b_\mu$, $b_\nu$ and $b_\kappa$, respectively. If these quantities are estimated using kernel regression with kernels of sufficiently high orders, then \ref{cond:rategmu} holds if $\frac{b_g}{2b_g+d} + \frac{b_\mu}{2b_\mu+d} > \frac{1}{2}$, while \ref{cond:ratekappanu} holds provided $\frac{b_\kappa}{2b_\kappa+d_\mathcal{V}} + \frac{b_\nu}{2b_\nu+d_\mathcal{V}} > \frac{1}{2}$, where $d_\mathcal{V}$ is the dimension of $X_\mathcal{V}$. Condition \ref{cond:consistentkappa} requires consistency and boundedness of $\kappa_{n,\mathcal{V}}$. In case of kernel-based estimators, this is ensured by choosing a bandwidth $h$ such that $h \rightarrow 0$ and $n h^{d_\mathcal{V}} \rightarrow \infty$, which yields consistency, while boundedness follows from standard stability and moment conditions on the kernel. Condition \ref{cond:consistentpsi} requires consistency of $\psi_n$, which follows from the consistency of either $g_n$ or $\mu_n$ via the double-robustness property of $\psi_n$, together with boundedness implied by \ref{cond:positivity}. Finally, \ref{cond:consistentnu} requires consistency of $\nu_n$. For kernel regression, this again follows from choosing bandwidth $h$ such that $h \rightarrow 0$ and $n h^{d_\mathcal{V}} \rightarrow \infty$, and boundedness following from standard stability and moment conditions on the kernel.

\subsection{Statistical inference: confidence band and hypothesis test} \label{Section3.3}

A confidence band for the RKHS-embedded parameter evaluation $\gamma_\omega^\mathcal{K}(P_0)$ can be defined as
\begin{equation} \label{eqn5}
    \tilde{\mathcal{C}}_n^\omega ( \hat{\xi}_{\omega} ) \coloneqq \left\lbrace h \in \mathcal{H} \colon \| \hat{\gamma}_\omega^\mathcal{K} - h \|^2_\infty \leq \hat{\xi}_{\omega} \sup_x \mathcal{K} (x,x) / n \right\rbrace,
\end{equation}
where $\| \cdot \|_\infty$ is the supremum norm. It follows from Theorem \ref{Theorem3} and the fact that $\| \cdot \|_\infty \leq \sqrt{\sup_x \mathcal{K} (x,x)} \| \cdot \|_\mathcal{H}$, by Cauchy-Schwarz inequality, that the confidence band $\tilde{\mathcal{C}}_n^\omega ( \hat{\xi}_{\omega} )$ is asymptotically valid under the conditions of Theorems \ref{Theorem2} and \ref{Theorem3}.

To test for variable importance, we seek to evaluate
\begin{align*}
    H_0 \colon \gamma_{\omega}^\mathcal{K} ( P_0 ) = 0\ \ \textnormal{ vs. }\ \ H_1\colon \textnormal{not $H_0$.}
\end{align*}
We can construct a test based on the confidence set given in (\ref{eqn4}), rejecting $H_0$ if and only if $\gamma_{\omega}^\mathcal{K} ( P_0 ) \notin \mathcal{C}_n^\omega ( \hat{\xi}_{\omega} )$. By Theorem~\ref{Theorem3}, this test asymptotically controls Type I error at level $\alpha$, when $\mathcal{C}_n^\omega ( \hat{\xi}_{\omega} )$ has asymptotically valid coverage. Furthermore, this test is consistent against fixed alternatives and has asymptotically non-trivial power against local alternatives that decay at an $n^{-1/2}$ rate---see Theorem \ref{Theorem5} and Theorem \ref{Theorem4} in Appendix \ref{SectionA5}. 

\subsection{Confidence interval for the norm of Hilbert-valued parameter}

Beyond null hypothesis testing, the magnitude of a variable's importance may be of interest. This can be quantified by computing the norm of the RKHS embedding of the local variable importance measure, $\|\gamma_\omega^\mathcal{K} ( P_0 )\|_\mathcal{H}$. A $( 1-\alpha )$-confidence interval for this quantity is given by
\begin{align*}
   \left\| \hat{\gamma}_{\omega}^\mathcal{K} \right\|_\mathcal{H} \pm \sqrt{\hat{\xi}_{\omega}/n}.
\end{align*}
This is justified by the reverse triangle inequality and the fact that the confidence set studied in Theorem \ref{Theorem3} is asymptotically valid, yielding $\lvert \|\hat{\gamma}_\omega^\mathcal{K}\|_\mathcal{H} - \|\gamma_\omega^\mathcal{K} ( P_0 )\|_\mathcal{H} \rvert \leq \| \hat{\gamma}_{\omega}^\mathcal{K} - \gamma_\omega^\mathcal{K} ( P_0 )\|_\mathcal{H} \leq (\hat{\xi}_{\omega}/n)^{1/2}$ with probability tending to $1-\alpha$. This interval is asymptotically non-conservative in the special case where $\gamma_\omega^\mathcal{K} ( P_0 ) = 0$, but can otherwise be conservative.

An alternative approach uses the functional delta method to construct confidence intervals for the norm of a Hilbert-valued parameter.
\begin{lemma}[Delta method] 
\label{lemma2}
    Let $\rho \colon \mathcal{H} \rightarrow \mathbb{R}$ be Hadamard-differentiable at $\gamma_\omega^\mathcal{K} ( P_0 )$. Suppose $n^{1/2} \left[ \hat{\gamma}_\omega^\mathcal{K} - \gamma_\omega^\mathcal{K} ( P_0 ) \right] \rightsquigarrow \mathbb{H}_\omega$. Then, $n^{1/2} \left[ \rho ( \hat{\gamma}_\omega^\mathcal{K} ) - \rho ( \gamma_\omega^\mathcal{K} ( P_0 ) ) \right] \rightsquigarrow \rho^\prime_{\gamma_\omega^\mathcal{K} ( P_0 )} ( \mathbb{H}_\omega )$.
\end{lemma}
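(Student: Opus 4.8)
The plan is to recognize Lemma~\ref{lemma2} as a direct instance of the functional delta method for weakly converging sequences in a metric space, as formalized e.g.\ in \citet[Theorem 3.9.4]{vanderVaartWellner1996}. First I would set $r_n \coloneqq n^{1/2}$ and let $T_n \coloneqq \hat{\gamma}_\omega^\mathcal{K}$, viewed as a random element of the Hilbert space $\mathcal{H}$ equipped with its norm topology, which is a separable Banach space; the hypothesis $n^{1/2}[\hat{\gamma}_\omega^\mathcal{K} - \gamma_\omega^\mathcal{K}(P_0)] \rightsquigarrow \mathbb{H}_\omega$ says precisely that $r_n(T_n - \gamma_\omega^\mathcal{K}(P_0)) \rightsquigarrow \mathbb{H}_\omega$. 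The map $\rho \colon \mathcal{H} \to \mathbb{R}$ is assumed Hadamard-differentiable at $\gamma_\omega^\mathcal{K}(P_0)$ with derivative $\rho'_{\gamma_\omega^\mathcal{K}(P_0)}$, which by definition is a continuous linear map $\mathcal{H} \to \mathbb{R}$.

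The key steps, in order, are: (i) verify the measurability/tightness prerequisites of the delta-method theorem---$\mathbb{H}_\omega$ is tight by assumption (it is a tight Gaussian element of $\mathcal{H}$ by Theorem~\ref{Theorem2}(ii)), and the separability of $\mathcal{H}$ together with continuity of $\rho$ ensures the relevant maps are Borel measurable, so no outer-expectation subtleties obstruct the argument; (ii) apply the functional delta method to conclude $r_n[\rho(T_n) - \rho(\gamma_\omega^\mathcal{K}(P_0))] \rightsquigarrow \rho'_{\gamma_\omega^\mathcal{K}(P_0)}(\mathbb{H}_\omega)$. It is worth recording why the delta-method theorem applies: write
\begin{align*}
  r_n\left[\rho(T_n) - \rho(\gamma_\omega^\mathcal{K}(P_0))\right] = \rho'_{\gamma_\omega^\mathcal{K}(P_0)}\left(r_n(T_n - \gamma_\omega^\mathcal{K}(P_0))\right) + R_n,
\end{align*}
where $R_n \coloneqq r_n\{\rho(\gamma_\omega^\mathcal{K}(P_0) + r_n^{-1} h_n) - \rho(\gamma_\omega^\mathcal{K}(P_0)) - \rho'_{\gamma_\omega^\mathcal{K}(P_0)}(h_n)\}$ evaluated at $h_n \coloneqq r_n(T_n - \gamma_\omega^\mathcal{K}(P_0))$. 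Since $h_n \rightsquigarrow \mathbb{H}_\omega$ with $\mathbb{H}_\omega$ tight, by the almost-sure-representation theorem one may pass to versions with $h_n \to \mathbb{H}_\omega$ almost surely along compact sets, and Hadamard differentiability gives $R_n \xrightarrow{P} 0$; the continuous-mapping theorem applied to the continuous linear functional $\rho'_{\gamma_\omega^\mathcal{K}(P_0)}$ then yields $\rho'_{\gamma_\omega^\mathcal{K}(P_0)}(h_n) \rightsquigarrow \rho'_{\gamma_\omega^\mathcal{K}(P_0)}(\mathbb{H}_\omega)$, and Slutsky's lemma finishes the argument.

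The statement is essentially an invocation of a standard theorem, so there is no deep obstacle; the only point requiring any care is confirming that the hypotheses of the abstract delta-method theorem are genuinely met in the Hilbert-valued setting---specifically that $\mathcal{H}$-valued weak convergence in the sense used in Theorem~\ref{Theorem2} coincides with the Hoffmann-J\o{}rgensen notion of weak convergence required by \citet{vanderVaartWellner1996}, and that tightness of $\mathbb{H}_\omega$ (rather than merely of each one-dimensional projection) is available. Both hold here: $\mathcal{H}$ is a Polish space, so weak convergence is the usual notion and tightness of $\mathbb{H}_\omega$ is part of the conclusion of Theorem~\ref{Theorem2}. Hence the lemma follows. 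In applications one takes $\rho = \|\cdot\|_\mathcal{H}$, which is Hadamard-differentiable at every $\gamma_\omega^\mathcal{K}(P_0) \neq 0$ with derivative $h \mapsto \langle h, \gamma_\omega^\mathcal{K}(P_0)/\|\gamma_\omega^\mathcal{K}(P_0)\|_\mathcal{H}\rangle_\mathcal{H}$, giving a normal limit and hence a non-conservative confidence interval for $\|\gamma_\omega^\mathcal{K}(P_0)\|_\mathcal{H}$ away from the origin.
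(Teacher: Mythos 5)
Your proposal is correct and matches the paper's approach: the paper's proof is simply the one-line observation that the lemma follows from Theorem \ref{Theorem2} by the (functional) delta method, and you have spelled out exactly that invocation, with the standard verification of its hypotheses in the separable Hilbert space setting. No issues.
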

In particular, for $\rho = \| \cdot \|_\mathcal{H}^2$, it holds that $n^{1/2} ( \| \hat{\gamma}_{\omega}^\mathcal{K} \|_\mathcal{H}^2 - \| \gamma_\omega^\mathcal{K} ( P_0 ) \|_\mathcal{H}^2 ) \rightsquigarrow 2 \langle \gamma_\omega^\mathcal{K} ( P_0 ), \mathbb{H}_\omega \rangle_\mathcal{H}$. By the continuous mapping theorem, this also implies $n^{1/2} \left\lvert \| \hat{\gamma}_{\omega}^\mathcal{K} \|_\mathcal{H}^2 - \| \gamma^\mathcal{K}_\omega ( P_0 ) \|_\mathcal{H}^2 \right\rvert \rightsquigarrow 2 \lvert \langle \gamma_\omega^\mathcal{K} ( P_0 ), \mathbb{H}_\omega \rangle_\mathcal{H} \rvert=: 2 \mathbb{W}_\omega$. If $\sigma_\omega^2\coloneqq E_0[\langle \phi_{0}^\omega ( Z ), \gamma_\omega^\mathcal{K} ( P_0 ) \rangle_\mathcal{H}^2]>0$, then $\mathbb{W}_\omega$ follows a half-normal distribution with parameter $\sigma_\omega^2$. That result is the key to establishing the following consequence of Theorem~\ref{Theorem2}.

\begin{theorem}[Confidence interval for the variable importance measure] \label{Theorem:confint}

Suppose \ref{cond:bounded}-\ref{cond:finitesecondmoment} and:
\begin{enumerate}[label={C\arabic*)},ref=C\arabic*,resume*=conds]
    \item \label{cond:consistentest} $\hat{\varsigma}_{\omega}$ is a consistent estimator of $\varsigma_{\omega, 1-\alpha}$, the $(1-\alpha)$-quantile of $\mathbb{W}_\omega$
    \item \label{cond:spdcovariance} $\Sigma \coloneqq P_0 \left[ \phi_0^\omega \otimes \phi_0^\omega \right]$---the covariance operator of $\phi_0^\omega (Z)$---is strictly positive definite.
\end{enumerate}
Define
\begin{align*}
    \mathcal{C}_{\omega,n}^{>0}&\coloneqq \left\lbrace x \in \mathbb{R}_{>0}\, \colon \left\lvert \|\hat{\gamma}_{\omega}^\mathcal{K} \|_\mathcal{H}^2 - x^2 \right\rvert \leq \frac{2 \hat{\varsigma}_{\omega}}{\sqrt{n}} \right\rbrace,\textnormal{ and }\hspace{2em}  \mathcal{C}_{\omega,n}^{0} \coloneqq\begin{cases}
        \{0\}, &\text{ if } \| \hat{\gamma}_{\omega}^\mathcal{K} \|_\mathcal{H} \leq \sqrt{\hat{\xi}_{\omega}/n} \quad \\
        \emptyset, &\mbox{ otherwise.}
    \end{cases}
\end{align*}
An asymptotically valid $(1-\alpha)$ level confidence set for the variable importance measure $\| \gamma_\omega^\mathcal{K} ( P_0 ) \|_\mathcal{H}$ is given by $\mathcal{C}_{\omega,n}^{>0}\cup \mathcal{C}_{\omega,n}^{0}$, and its convex hull is an asymptotically valid confidence interval.
\end{theorem}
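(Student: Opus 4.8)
\section*{Proof proposal for Theorem~\ref{Theorem:confint}}

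The plan is to establish coverage by conditioning on whether the target function $\gamma_\omega^\mathcal{K}(P_0)$ is the zero element of $\mathcal{H}$, because the delta-method argument that justifies $\mathcal{C}_{\omega,n}^{>0}$ degenerates precisely in that case, and the extra component $\mathcal{C}_{\omega,n}^{0}$ is designed exactly to rescue it. Throughout I will use the weak convergence $n^{1/2}[\hat\gamma_\omega^\mathcal{K}-\gamma_\omega^\mathcal{K}(P_0)]\rightsquigarrow\mathbb{H}_\omega$ from Theorem~\ref{Theorem2}, the delta-method computation recorded in the text (via Lemma~\ref{lemma2} and the continuous mapping theorem) giving $n^{1/2}\bigl|\,\|\hat\gamma_\omega^\mathcal{K}\|_\mathcal{H}^2-\|\gamma_\omega^\mathcal{K}(P_0)\|_\mathcal{H}^2\,\bigr|\rightsquigarrow 2\mathbb{W}_\omega$ with $\mathbb{W}_\omega=|\langle\gamma_\omega^\mathcal{K}(P_0),\mathbb{H}_\omega\rangle_\mathcal{H}|$ half-normal with parameter $\sigma_\omega^2$, and the validity of the norm interval $\|\hat\gamma_\omega^\mathcal{K}\|_\mathcal{H}\pm(\hat\xi_{\omega,n}/n)^{1/2}$ established via Theorem~\ref{Theorem3}.

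\emph{Case $\gamma_\omega^\mathcal{K}(P_0)\neq 0$.} Put $\tau:=\|\gamma_\omega^\mathcal{K}(P_0)\|_\mathcal{H}>0$, so that $\tau\in\mathbb{R}_{>0}$ and it suffices to lower-bound $P_0^n\{\tau\in\mathcal{C}_{\omega,n}^{>0}\}$, since $\mathcal{C}_{\omega,n}^{0}$ only enlarges the confidence set. By definition, $\tau\in\mathcal{C}_{\omega,n}^{>0}$ iff $n^{1/2}\bigl|\,\|\hat\gamma_\omega^\mathcal{K}\|_\mathcal{H}^2-\tau^2\,\bigr|\le 2\hat\varsigma_{\omega,n}$. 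The left-hand side converges weakly to $2\mathbb{W}_\omega$; moreover Condition~\ref{cond:spdcovariance} forces $\sigma_\omega^2=\langle\Sigma\,\gamma_\omega^\mathcal{K}(P_0),\gamma_\omega^\mathcal{K}(P_0)\rangle_\mathcal{H}>0$, so $\mathbb{W}_\omega$ is a genuine (continuous) half-normal. Combining this weak convergence with $\hat\varsigma_{\omega,n}\to\varsigma_{\omega,1-\alpha}$ in probability (Condition~\ref{cond:consistentest}) by a converging-together/Slutsky argument, and using that $\varsigma_{\omega,1-\alpha}$ is a continuity point of the continuous CDF of $\mathbb{W}_\omega$ with $P\{\mathbb{W}_\omega\le\varsigma_{\omega,1-\alpha}\}=1-\alpha$, yields $P_0^n\{\tau\in\mathcal{C}_{\omega,n}^{>0}\}\to 1-\alpha$, hence $P_0^n\{\tau\in\mathcal{C}_{\omega,n}^{>0}\cup\mathcal{C}_{\omega,n}^{0}\}\to 1-\alpha$.

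\emph{Case $\gamma_\omega^\mathcal{K}(P_0)=0$.} Now $\tau=0\notin\mathbb{R}_{>0}\supseteq\mathcal{C}_{\omega,n}^{>0}$, so the confidence set covers $\tau$ iff $\mathcal{C}_{\omega,n}^{0}=\{0\}$, i.e.\ iff $\|\hat\gamma_\omega^\mathcal{K}\|_\mathcal{H}^2\le\hat\xi_{\omega,n}/n$. The latter is exactly the event $\{0\in\mathcal{C}_n^\omega(\hat\xi_{\omega,n})\}$ from \eqref{eqn4}, and since here $\gamma_\omega^\mathcal{K}(P_0)=0$ this event equals $\{\gamma_\omega^\mathcal{K}(P_0)\in\mathcal{C}_n^\omega(\hat\xi_{\omega,n})\}$, whose probability tends to $1-\alpha$ by Theorem~\ref{Theorem3}. (Equivalently, $n\|\hat\gamma_\omega^\mathcal{K}\|_\mathcal{H}^2\rightsquigarrow\|\mathbb{H}_\omega\|_\mathcal{H}^2$ by the continuous mapping theorem, $\hat\xi_{\omega,n}$ is consistent for the continuity point $\xi_{\omega,1-\alpha}>0$ of the law of $\|\mathbb{H}_\omega\|_\mathcal{H}^2$, and one concludes as before.) Hence $P_0^n\{0\in\mathcal{C}_{\omega,n}^{>0}\cup\mathcal{C}_{\omega,n}^{0}\}\to 1-\alpha$.

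\emph{Conclusion and convex hull.} The two cases give $\lim_{n\to\infty}P_0^n\{\|\gamma_\omega^\mathcal{K}(P_0)\|_\mathcal{H}\in\mathcal{C}_{\omega,n}^{>0}\cup\mathcal{C}_{\omega,n}^{0}\}=1-\alpha$ under the stated hypotheses, establishing validity of the confidence set. For the interval claim, note that $x\mapsto x^2$ is strictly increasing on $\mathbb{R}_{>0}$, so $\mathcal{C}_{\omega,n}^{>0}=\{x>0:\|\hat\gamma_\omega^\mathcal{K}\|_\mathcal{H}^2-2\hat\varsigma_{\omega,n}/\sqrt n\le x^2\le\|\hat\gamma_\omega^\mathcal{K}\|_\mathcal{H}^2+2\hat\varsigma_{\omega,n}/\sqrt n\}$ is itself an interval (left-open at $0$ in the degenerate subcase), so the convex hull of $\mathcal{C}_{\omega,n}^{>0}\cup\mathcal{C}_{\omega,n}^{0}$ is an interval, obtained at most by adjoining the endpoint $0$. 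Since this convex hull contains $\mathcal{C}_{\omega,n}^{>0}\cup\mathcal{C}_{\omega,n}^{0}$, its asymptotic coverage is $\ge 1-\alpha$, so it is an asymptotically valid confidence interval. The main obstacle is the boundary case $\gamma_\omega^\mathcal{K}(P_0)=0$: there the half-normal limit collapses to a point mass at $0$, $\varsigma_{\omega,1-\alpha}=0$, and $\mathcal{C}_{\omega,n}^{>0}$ contracts to a vanishing interval around the (generically nonzero) $\|\hat\gamma_\omega^\mathcal{K}\|_\mathcal{H}$ that need not contain $0$; coverage must therefore be routed through $\mathcal{C}_{\omega,n}^{0}$ and the $\chi^2$-type (rather than half-normal) limit law of Theorem~\ref{Theorem3}, and one must verify the relevant continuity of both limit laws at their quantiles.
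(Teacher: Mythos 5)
Your proposal is correct and follows essentially the same route as the paper: split on whether $\gamma_\omega^\mathcal{K}(P_0)$ is zero, use the delta method together with $\sigma_\omega^2>0$ (forced by \ref{cond:spdcovariance}) and \ref{cond:consistentest} to get exact coverage of $\mathcal{C}_{\omega,n}^{>0}$ in the nonzero case, and reduce the null case to the event that the Theorem~\ref{Theorem3} test fails to reject. Your extra remarks on continuity of the limit laws at their quantiles and on the convex hull being an interval are fine and only make explicit what the paper leaves implicit.
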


The consistent estimator $\hat{\varsigma}_{\omega}$ can be obtained via the bootstrap, as the empirical quantile of $\lvert \langle \hat{\gamma}_{\omega}^\mathcal{K}, \mathbb{H}_{\omega,n}^\sharp \rangle_\mathcal{H} \rvert$. Alternatively, it can be taken to be the $1-\alpha$ quantile of a half-normal distribution with parameter $\widehat{\sigma}_\omega^2\coloneqq P_n\langle \phi_n^\omega,\hat{\gamma}^\mathcal{K}_\omega\rangle_{\mathcal{H}}^2$.

To prove the above theorem, we first show that---under the stated conditions---$\sigma_\omega^2=0$ if and only if the variable importance measure is equal to zero. We then separately consider two cases to establish coverage of the union confidence set, $\mathcal{C}_{\omega,n}^{>0}\cup \mathcal{C}_{\omega,n}^{0}$. In the first, the variable importance measure is nonzero, so that $\sigma_\omega^2>0$. In this case, an application of the delta method readily shows $\mathcal{C}_{\omega,n}^{>0}$ covers the variable importance measure at the desired asymptotic level. When the variable importance measure is $0$, the latter set in the union, $\mathcal{C}_{\omega,n}^{0}$, will cover the variable importance measure at the desired level; this follows since $\mathcal{C}_{\omega,n}^{0}$ contains $0$ if and only if the test of no importance from Section~\ref{Section3.3} fails to reject the null.

We have described two strategies for constructing a confidence interval for the variable importance measure $\|\hat{\gamma}_{\omega}^\mathcal{K}\|_\mathcal{H}$: one based on the reverse triangle inequality, and the other based on the delta method. The main advantage of the first is its ease of implementation. A disadvantage is that it tends to yield conservative confidence intervals. In contrast, the delta method approach yields asymptotically exact confidence intervals, but requires separate handling of the null case. A limitation common to both methods is that the resulting intervals are not centered on the original parameter of interest---the norm of the variable importance measure---but rather on a smoothed approximation given by its RKHS embedding.

\section{Numerical experiments} \label{Section4}

We conducted a simulation study to evaluate the performance of the proposed test of no variable importance for a variable $X_1$. We considered three measures mentioned earlier: LOO, KOI, and Shapley values. 

\subsection{Experiment 1: 5-dimensional covariate}

This simulation set-up is partially inspired by the data generating process `DGP 1' from \cite{Hines2025}. The observed data consist of $(X, A, Y)$, where $X \coloneqq (X_1, X_2, X_3, X_4, X_5)$ is a covariate vector sampled from a Gaussian copula with covariance matrix $\Sigma$. The off-diagonal elements of $\Sigma$ are set to $\sigma \in \lbrace 0, 0.4, 0.8 \rbrace$. The treatment satisfies $A| X\sim \textnormal{Bernoulli}(\mathrm{expit}[-0.4 X_1 + 0.1 X_1 X_2])$, where $\mathrm{expit}(x) = 1 / \lbrace 1 + \exp(-x) \rbrace$. The outcome satisfies $Y|A,X\sim \mathcal{N}(X_1 X_2 + 2 X_2^2 - X_1 + A\tau, 1)$, where the CATE is defined as $\tau = \beta g(X_1) + f \left( X_2, X_3, X_4, X_5 \right)$, with $\beta \in \lbrace 0, 1, 5 \rbrace$ and an auxiliary function $f \left( X_2, X_3, X_4, X_5 \right) = 0.2 \left( X_2^2 + X_3 - 2 X_3 X_4 + 4 X_5 \right)$. We consider two kinds of alternatives: smooth, where $g(X_1) = X_1$, and rough, where $g(X_1) = \sin(5 \pi X_1)$.

In all experiments, the proposed test was implemented using twofold cross-fitting to generate pseudo-outcomes, which were then used to estimate the CATE on the full dataset. We conducted $1000$ Monte Carlo simulations for sample sizes $n \in \lbrace 250, 500, 1000, 2000 \rbrace$ and $5000$ bootstrap repetitions. The nuisance functions $g_P$ and $\mu_P$ were estimated using random forests (\texttt{ranger} with default parameters) \citep{Wright2017}, while $\nu_P$ was estimated using kernel ridge regression \citep{Wainwright2019}. Although the regularization parameter $\lambda$ in KRR is typically selected by cross-validation, in our simulations it was set to $\sqrt{\log(n)/n}$ for computational efficiency. The kernel bandwidth was selected using the median heuristic \citep{Garreau2018}.

Figure \ref{figure2a} presents the results of the simulation study. Each row corresponds to a different variable importance measure, with `Shapley approx.' referring to an approximation of the Shapley value computed by sampling 40 permutations of the covariates. The left column displays the empirical Type I error rates under the null hypothesis of no importance for $X_1$, showing that our method effectively controls Type I error across a range of correlation parameter values $\sigma$. The middle column illustrates the empirical rejection probabilities under the smooth alternative $g(X_1) = X_1$ for $\beta = 1$ and $\beta = 5$, while the right column shows the corresponding results under the rough alternative $g(X_1) = \sin(5 \pi X_1)$. Although the method exhibits low power for small sample sizes, power improves substantially as sample size increases. There is no meaningful inflation of Type I error or loss of power from using the Monte Carlo approximation of the Shapley values rather than the exact Shapley values.

Notably, for the KOI and Shapley variable importance measures, the simulation setup with $\beta = 0$ and $\sigma > 0$ does not represent a true null scenario but rather an alternative, due to correlations between variables. In Appendix \ref{app:experiment3}, we discuss a simulation setting where setups with $\beta = 0$ represent the null scenario for all variable importance measures and for all values of $\sigma$. Just as in the simulation presented in the main text, our method effectively controls the Type I error across all variable importance measures and simulation configurations---see Appendix \ref{app:experiment3} for further discussion.

To compare our method with existing approaches, we report results obtained using the treatment effect variable importance measures (TE-VIMs) proposed by \cite{Hines2025} under the simulation setup of Experiment 1. To compute the TE-VIM results, we used the publicly available code provided by the authors for the leave-one-out (LOO) measure, which we adapted for importance testing as described in Section 2.3.2 of \cite{Hines2025}. Results are presented for two nuisance estimation strategies: generalized additive models (GAMs), as used in the original paper, and random forests (RFs), which aligns with the strategy we used for our methods. Figure \ref{figureOH} presents the comparison. TE-VIM appears to perform substantially better when GAMs are used for nuisance parameter estimation. However, even in this setting, the results suggest that TE-VIM does not adequately control the Type I error at the considered sample sizes. Moreover, it exhibits lower power under smooth alternatives but higher power under rough alternatives.

\begin{figure}[tb]
\begin{center}
 \includegraphics[width=\textwidth]{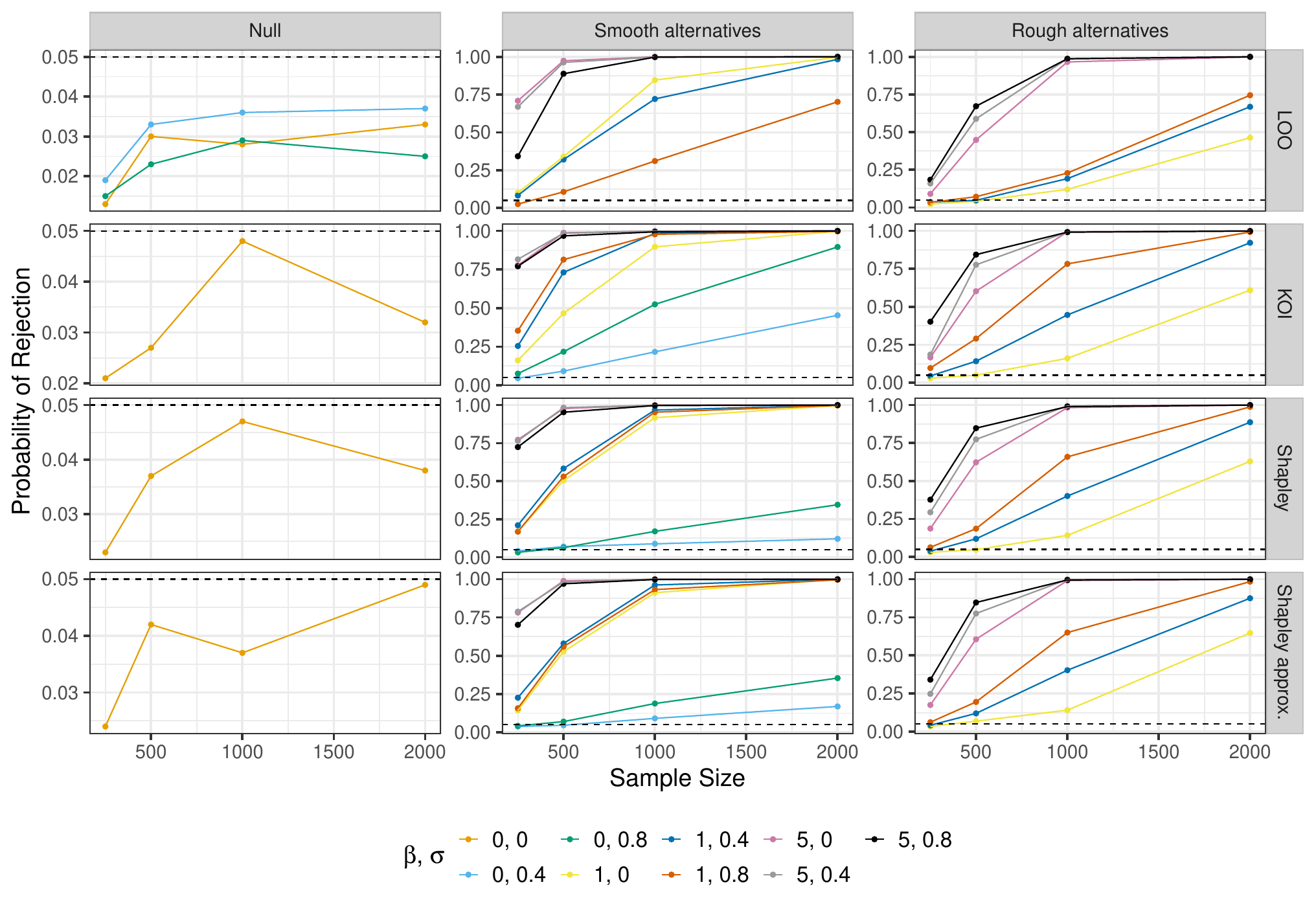}  
\end{center}
\caption{Simulation results from Experiment 1 comparing different variable importance measures (rows) across different data-generating processes (columns).} \label{figure2a}
\end{figure}

\begin{figure}[tb]
\begin{center}
 \includegraphics[width=\textwidth]{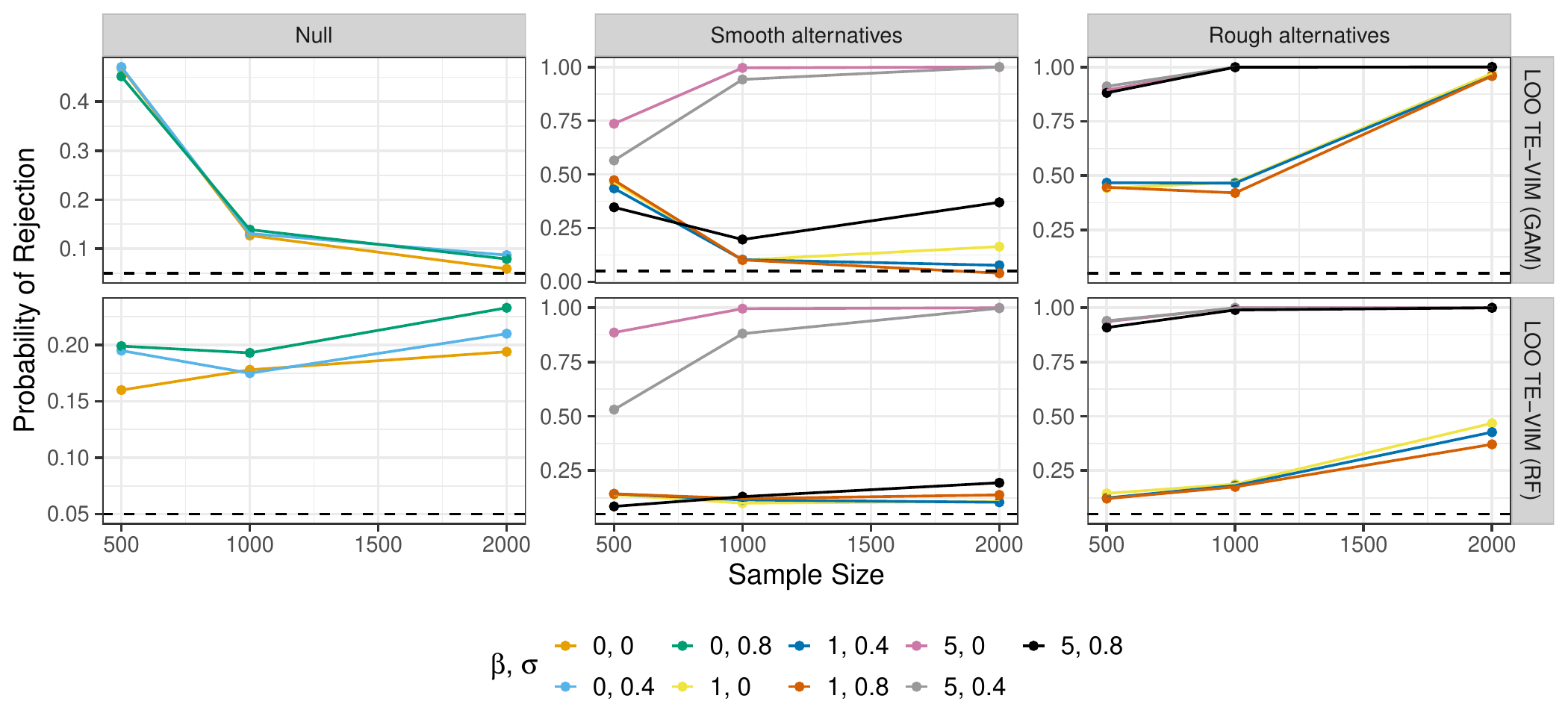}  
\end{center}
\caption{Simulation results from Experiment 1 comparing the leave-one-out (LOO) variable importance measure computed using the TE-VIM approach of \cite{Hines2025}, across different choices of nuisance parameter estimation (rows) and data-generating processes (columns).} \label{figureOH}
\end{figure}

\subsection{Experiment 2: 10-dimensional covariate}

This simulation scenario mirrors that of Experiment 1, but incorporates a higher-dimensional covariate. The observations consist of $(X, A, Y)$, where $X \coloneqq (X_1, X_2, \ldots, X_{10})$ is a 10-dimensional covariate vector drawn from a Gaussian copula with covariance matrix $\Sigma$. The off-diagonal elements of $\Sigma$ are set to $\sigma \in \lbrace 0, 0.4, 0.8 \rbrace$. The data-generating process for $(A,Y)|(X_1,\ldots,X_5)$ is the same as in Experiment 1 and, conditionally on $(X_1,\ldots,X_5)$, $(A,Y)$ is independent of $(X_6,\ldots,X_{10})$. All nuisance estimation settings were the same, except that, to account for the increased dimensionality, the regularization parameter $\lambda$ in KRR was set to $\log^2(n)/n^{1/2}$.

Figure~\ref{figure2b} displays the results. Similarly as in Experiment 1, although the method exhibits low power for small sample sizes, power improves substantially as sample size increases. Exact Shapley values were not computed due to computational constraints. However, the Shapley approximation---which again used 40 permutations of the covariates---showed reasonable Type I error control and power. 

\begin{figure}[tb]
\begin{center}
 \includegraphics[width=\textwidth]{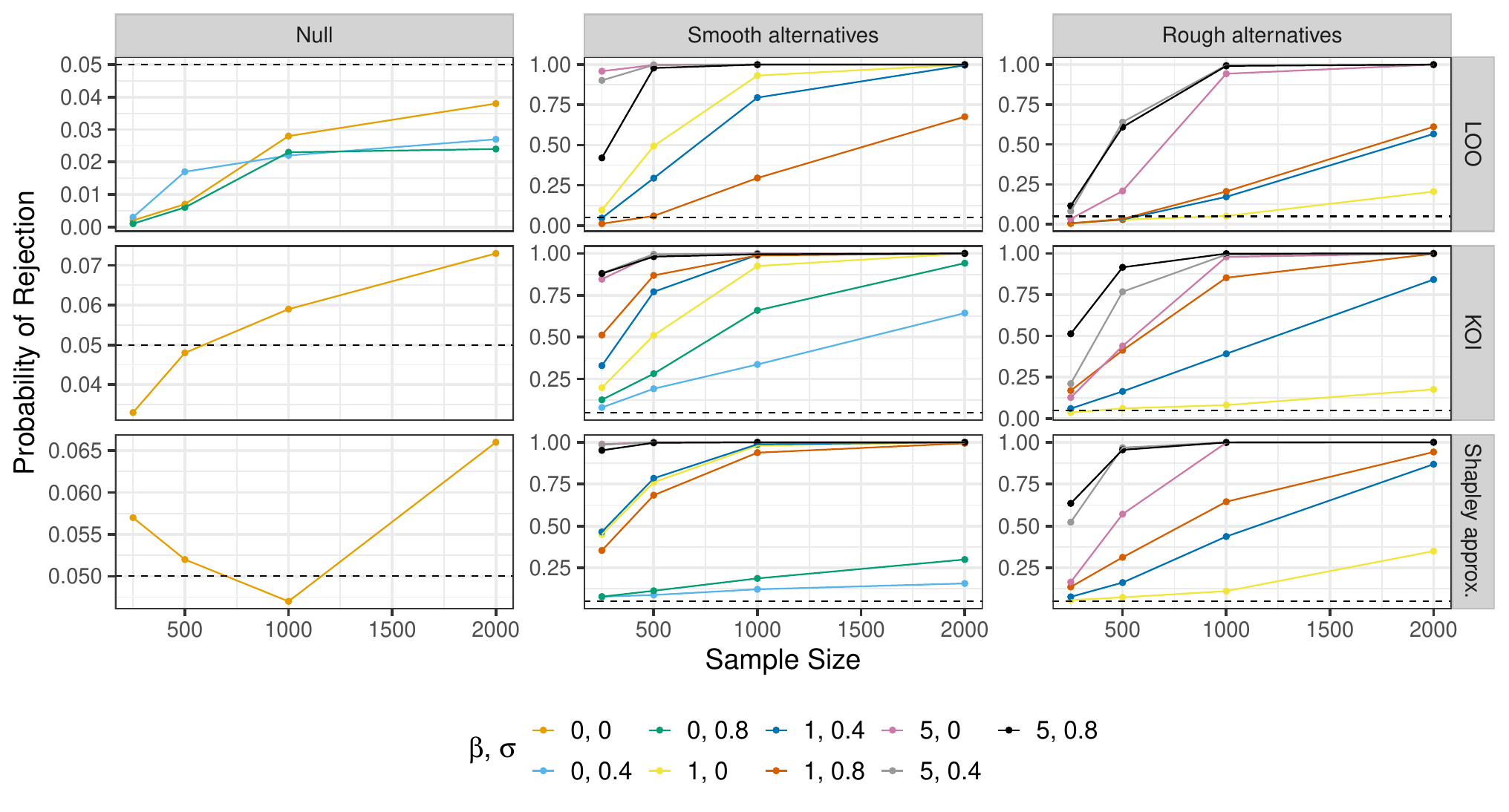}  
\end{center}
\caption{Simulation results from Experiment 2 comparing different variable importance measures (rows) across different data-generating processes (columns).} \label{figure2b}
\end{figure}

\section{Application to infectious disease prevention strategies} \label{Section5}

A global Phase 3 randomized, placebo-controlled efficacy trial VAT00008 (NCT04904549) was conducted among adults aged 18 and older to compare two COVID-19 vaccine regimens to placebo. Vaccinations or placebo were administered on Day 1 (D01) and Day 22 (D22). The study was conducted in two stages: Stage 1 assessed a monovalent vaccine containing the original D614 SARS-CoV-2 Spike protein, while Stage 2 evaluated a bivalent formulation including both D614 and the Beta SARS-CoV-2 Spike protein \citep{Dayan2023a,Dayan2023b}. Stage 2 of the trial enrolled 13,002 participants between October 19, 2021, and February 15, 2022. 

Our analysis focuses on the per-protocol Stage 2 cohort that had evidence of previous SARS-CoV-2 infection at enrollment, where per-protocol means both assigned doses were received without major protocol violations. The objective of our analysis is to examine how various baseline characteristics influence the effect of COVID-19 vaccination on a key immunological outcome: the log-transformed neutralizing antibody titer ($\log_{10}$ nAb ID50) against the BA.1 (B.1.1.529.1) strain of SARS-CoV-2, measured at Day 43. The treatment variable $A$ reflects membership in the per-protocol vaccine ($n = 305$) or per-protocol placebo ($n = 292$) group. The baseline covariates $X$ include demographic factors (age, sex, body mass index, and country) and immunological markers ($\log_{10}$-transformed baseline neutralizing antibody titer against the BA.1 strain and $\log_{10}$-transformed concentration of IgG binding antibodies against the Omicron (B.1.1.529) strain). The dataset includes 597 individuals from Ghana, Kenya, Honduras, India, Japan, Mexico, and the USA, all of whom have complete data for the variables listed above. 

Figure \ref{figure3} displays the estimated variable importance---defined as the norm of the RKHS embedding of the local variable importance measures: KOI, LOO, and Shapley values---along with nominal $95\%$ confidence intervals. Based on the KOI and Shapley values, there is evidence that two variables, `Country' and `BA.1 NAb', have non-zero importance. Supplementary results, including p-values and descriptive graphics, are provided in Table \ref{table1} and Figure \ref{figure4} in Appendix \ref{app:resultsempirical}. When testing the null hypothesis of no importance for multiple variables, it is advisable to apply a multiple testing adjustment. Accordingly, we report Benjamini–Hochberg–adjusted p-values in Table \ref{table1}. The result for BA.1 NAb shows a dose-response relationship wherein individuals with lower baseline titers have greater vaccine effect on Day 43 titer. This occurs in part because the immune system has a cap on the number of SARS-CoV-2 specific antibodies that it can make, as it must balance its caloric-cost investment into antibody production against myriad microbial pathogen threats (not only SARS-CoV-2), such that individuals starting with high levels of pre-vaccination neutralizing antibodies have less room for vaccination to further increase antibody levels. 

The vaccine effect was estimated to be highest in Ghana and the US, and lowest in Japan, Kenya, and Mexico, with the placebo arm distribution of Omicron BA.1 nAb titer notably higher in the latter three countries (see Figure \ref{figure4} in Appendix \ref{app:resultsempirical} for more details). One potential explanation is a greater frequency of prior infection with Omicron SARS-CoV-2 (compared to with Delta SARS-CoV-2) in these countries, given that the studied titer is against an Omicron strain.

\begin{figure}[tb]
\begin{center}
 \includegraphics[width=\textwidth]{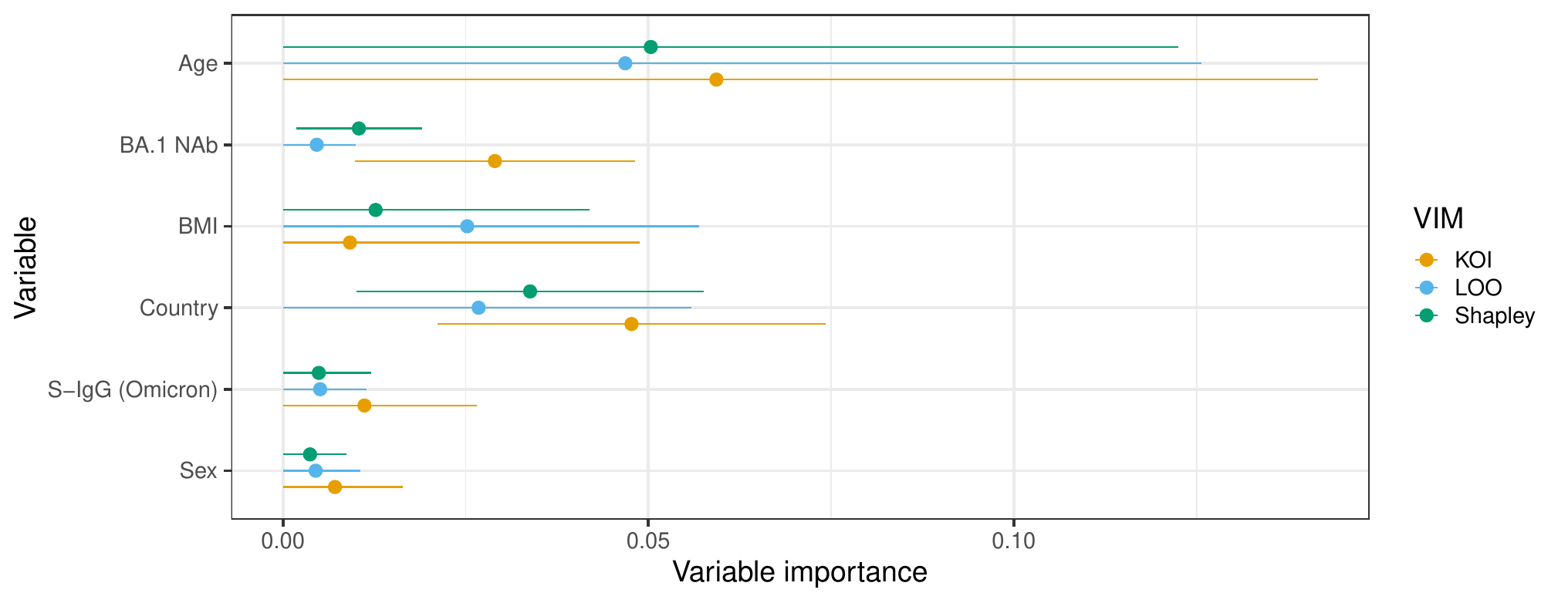}  
\end{center}
\caption{Variable importance, defined as the norm of the RKHS embedding of the local variable importance measures: KOI, LOO, and Shapley values, along with nominal $95\%$ confidence intervals. Intervals are truncated at zero because variable importance measures are non-negative.} 
\label{figure3}
\end{figure}

\section{Discussion} \label{Section6}

A key strength of our approach lies in its ability to perform inference on the parameter regularized through an RKHS embedding, facilitating robust hypothesis testing in the challenging context of zero-importance evaluation. While the inference is tailored to the regularized parameter, tests of the null hypothesis of no importance remain valid and equivalent when applied to the original parameter provided the kernel of the RKHS is universal.

Beyond evaluating global variable importance, our hypothesis testing strategy can be adapted to test the localized null
\begin{equation*}
    H_0 \colon \gamma_\omega (P)(X) = 0 \quad \text{for} \quad X \in \widetilde{\mathcal{X}} \subset \mathcal{X},
\end{equation*}
where $\widetilde{\mathcal{X}}$ denotes a specified region of the covariate space. To handle this case, one can use the kernel embedding $\widetilde{\gamma}_\omega^\mathcal{K} (P) (x) \coloneqq \int_\mathcal{X} \mathcal{K} (x, x^\prime)\, \gamma_\omega (P) (x^\prime)\, P_{X} (dx^\prime\,|\, X\in\widetilde{\mathcal{X}})$. This parameter can be shown to have EIF $\widetilde{\phi}_{P}^\omega(z)(x')=1(x\in\widetilde{\mathcal{X}})\phi_{P(\,\cdot\,
\mid X\in\widetilde{\mathcal{X}})}^\omega(z)(x')/P(X\in\widetilde{\mathcal{X}})$ \citep[][Ch. 25.5]{vanderVaart1998}, where $\phi_{P(\,\cdot\,
\mid X\in\widetilde{\mathcal{X}})}^\omega$ is the EIF of $\gamma_\omega^{\mathcal{K}}$ at the conditional distribution of $Z\mid X\in\widetilde{\mathcal{X}}$ under $P$. The resulting estimation and inference procedure then follow analogously to the one from this work, using this EIF to debias an initial plug-in estimator.

The proposed variable importance framework, developed here in the context of heterogeneous treatment effects, can be readily extended to traditional prediction settings. The parameter of interest---analogous to (\ref{eqn1})---is then defined as
\begin{align*}
\bar{\gamma}_\omega (P) (x) \coloneqq \sum_{\mathcal{V} \subseteq [d]} \omega_\mathcal{V}  \, \bar{\nu}_P ( \mathcal{V}) (x_\mathcal{V}),
\end{align*}
where $\bar{\nu}_P (\mathcal{V}) (x) \coloneqq \mathbb{E}_P ( Y \mid X_\mathcal{V} = x_\mathcal{V} )$ denotes the conditional expectation of the outcome given a subset of covariates. The estimation procedure and statistical inference follow from the same strategy in Section~\ref{Section3}, by simply replacing $\psi_P$ by $\bar{\psi}_P(z) \coloneqq y$, $\nu_P$ by $\bar{\nu}_P$, and $\gamma_\omega^{\mathcal{K}}$ by $\bar{\gamma}_\omega^{\mathcal{K}}$.

Given that our procedures rely on kernel-based methods, scalability with respect to sample size is an important consideration. For larger sample sizes, one practical approach is to employ Nystr\"{o}m approximation \citep{Williams2001, Rasmussen2006} or randomly pivoted Cholesky approximation \citep{Chen2025} of the kernel matrix, which significantly reduce the computational burden by avoiding operations on large kernel matrices.

There are several promising directions for future work. One limitation of the Shapley values discussed in this paper is their high computational cost. While we outline some strategies to mitigate these limitations in Appendix \ref{app:computationalconsiderations}, further research is needed to scale the proposed approach to settings with a larger number of covariates. In our analysis, we focused on the Gaussian kernel with bandwidth selected using the median heuristic. Future work should explore the impact of different kernel functions and bandwidth selection methods. Although the proposed framework is quite flexible, it does not currently accommodate some widely used variable importance methods, such as Local Interpretable Model-agnostic Explanations (LIME) \citep{Ribeiro2016}. Investigating these methods in the context of heterogeneous treatment effects would be a valuable avenue for future research.

\section*{Supporting Information}

Additional supporting material, including proofs of the main results, is provided in the Appendix. The \texttt{R} code for the numerical experiments is available at \texttt{https://github.com/pmorzywolek/VIMsimulations}.

\section*{Acknowledgments}

The authors thank Antonio Olivas-Martinez for helpful discussions.  Research reported in this work was funded through a Patient-Centered Outcomes Research Institute (PCORI) Award (ME-2024C2-39990) and two National Institutes of Health (NIH) Awards (5UM1AI068635-20, DP2LM013340-01). Pawe{\l} Morzywo{\l}ek was partially supported through the Pacific Institute for the Mathematical Sciences (PIMS) and eScience Institute at the University of Washington. The statements in this work are solely the responsibility of the authors and do not necessarily represent the views of PCORI, its Board of Governors or Methodology Committee, NIH, or PIMS. 
The VAT00008 data were generated from federal funds from the Department of Health and Human Services and from Sanofi.

\section*{Data Availability Statement}

Qualified researchers can request access to participant-level data and related study documents, including the statistical analysis report. Participant-level data will be anonymized to protect the privacy of trial participants. Further details on Sanofi’s data sharing criteria, eligible studies, and process for requesting access can be found at \texttt{https://vivli.org/}.

\bibliography{VIM}
\bibliographystyle{abbrvnat}

\appendix

\setcounter{equation}{0}
\renewcommand{\theequation}{S\arabic{equation}}
\setcounter{theorem}{0}
\setcounter{figure}{0}
\setcounter{table}{0}
\setcounter{lemma}{0}
\setcounter{corollary}{0}
\renewcommand{\thetheorem}{S\arabic{theorem}}
\renewcommand{\thecorollary}{S\arabic{corollary}}
\renewcommand{\thelemma}{S\arabic{lemma}}
\renewcommand{\thefigure}{S\arabic{figure}}
\renewcommand{\thetable}{S\arabic{table}}
\renewcommand{\thealgorithm}{S\arabic{algorithm}}

\section*{\LARGE Appendices}

\DoToC

\section{Review of semiparametric efficiency theory} \label{SectionA1}

Consider a statistical model $\mathcal{M}$ dominated by a $\sigma$-finite measure $\lambda$. A submodel $\lbrace P_\epsilon \colon \epsilon \in [ 0, \delta ) \rbrace \subset \mathcal{M}$ is called quadratic mean differentiable at $P$ if and only if there exists a score function $s\in L^2 ( P )$ such that 
\begin{align*}
    \| p_\epsilon^{1/2} - p^{1/2} - \epsilon s p^{1/2} / 2 \|_{L^2 ( \lambda )} = o ( \epsilon ), 
\end{align*}
where $p_\epsilon^{1/2} = \sqrt{d P_\epsilon/d \lambda}$ for $\epsilon \geq 0$, and $p^{1/2} = \sqrt{dP/d\lambda}$. Let $\mathscr{P} ( P_0, \mathcal{M}, s )$ denote the set of models that are quadratic mean differentiable at $P$ with score function $s$. The set of corresponding score functions, $\lbrace s \in L^2 ( P ) \colon \mathscr{P} ( P_0, \mathcal{M}, s ) \neq \emptyset \rbrace$, is called the tangent set, while the closure of its linear span is called the tangent space of $\mathcal{M}$ at $P$, denoted by $\dot{\mathcal{M}}_P$. Note that for all $s \in \dot{\mathcal{M}}_P$, the function $s$ is bounded and satisfies $Ps = \int s dP = 0$. Therefore, the space $L_0^2 ( P ) \coloneqq \lbrace h \in L^2 ( P ) \colon Ph = 0 \rbrace$ is the largest possible tangent space at $P$. A statistical model $\mathcal{M}$ is called a (locally) nonparametric if its tangent space equals $L_0^2 ( P )$ at all $P$. 

Let $\Psi \colon \mathcal{M} \rightarrow \mathcal{G}$ be a parameter of interest. The action space $\mathcal{G}$ may be either finite- or infinite-dimensional. In what follows, we assume that it is an infinite-dimensional separable Hilbert space. The parameter $\Psi$ is said to be pathwise differentiable at $P$ if and only if there exists a continuous linear operator $\dot{\Psi}_P \colon \dot{\mathcal{M}}_P \rightarrow \mathcal{G}$, called the local parameter, such that for all submodels in the tangent set, $\lbrace P_\epsilon \colon \epsilon \in [ 0,\delta ) \rbrace \in \mathscr{P} ( P_0, \mathcal{M}, s )$, we have 
\begin{align*}
    \| \Psi ( P_\epsilon ) - \Psi ( P ) - \epsilon \dot{\Psi}_P ( s ) \|_\mathcal{G} = o ( \epsilon ). 
\end{align*}
The Hermitian adjoint of $\dot{\Psi}_P$, denoted by $\dot{\Psi}_P^\ast \colon \mathcal{G} \rightarrow \dot{\mathcal{M}}_P$, is referred to as the efficient influence operator. Furthermore, the parameter $\Psi$ is said to admit an efficient influence function (EIF), also known as a canonical gradient, $\phi_P \colon \mathcal{Z} \rightarrow \mathcal{G}$ if there exists a $P$-probability-one set $\mathcal{Z}^\prime$ such that $\dot{\Psi}_P^\ast (h) (z) = \langle h, \phi_P (z) \rangle_\mathcal{G}$ for all $( h, z ) \in \mathcal{G} \times \mathcal{Z}^\prime$. If it exists, the EIF is unique. Figure \ref{figure2} illustrates the basic notions of the semiparametric theory. See \cite{Luedtke2024} for a more comprehensive review. 

The development of semiparametric theory has primarily focused on the estimation of finite-dimensional parameters \citep{Bickel1993, Pfanzagl1990, vanderLaan2003}. When such a parameter is pathwise differentiable, it also admits an EIF, which is the Riesz representer of the pathwise derivative. The EIF is used as a building block for constructing efficient estimators, such as those derived via one-step estimation \citep{Bickel1993, vanderLaan2003}, estimating equations \citep{vanderVaart1998}, double/debiased machine learning \citep{chernozhukov2018double} or targeted minimum loss-based estimation (TMLE) \citep{vanderLaan2006, vanderLaan2011}. In contrast, the semiparametric theory for infinite-dimensional parameters has been less explored. As with finite-dimensional cases, the primary focus remains on pathwise differentiable parameters. However, in infinite-dimensional settings, pathwise differentiability does not guarantee the existence of an EIF.

As noted by \cite{Luedtke2024}, certain parameters may not be pathwise differentiable when viewed as finite-dimensional objects (e.g., point evaluations of a function), yet they can be pathwise differentiable as infinite-dimensional parameters. An example of such a parameter is $\gamma_\omega ( P )$, as defined in (\ref{eqn1}). One can view $\gamma_\omega ( P )$ either as an element of the function space $L_2 (P)$ or as its pointwise evaluation $\gamma_\omega (P)(x)$ for some $x$. Unfortunately, this pointwise evaluation is not pathwise differentiable, rendering standard semiparametric tools inapplicable. A common workaround involves kernel regression methods \citep{Kennedy2017, Zimmert2019}. 

In contrast, we adopt an alternative approach by treating $\gamma_\omega (P)$ as an infinite-dimensional parameter, which turns out to be pathwise differentiable because it represents a weighted sum of heterogeneous treatment effects---a parameter shown by \cite{Luedtke2024} to be pathwise differentiable. However, as mentioned above, even in this case the existence of an EIF is not automatic. To address this, \cite{Luedtke2024} propose a regularized parameter for which the EIF does exist. By Mercer's Theorem \citep{berlinet2003reproducing}, the proposed regularization can be interpreted as an embedding into a reproducing kernel Hilbert space (RKHS) $\mathcal{H}$.

\begin{figure} 
\begin{center}
 \includegraphics[width=\textwidth]{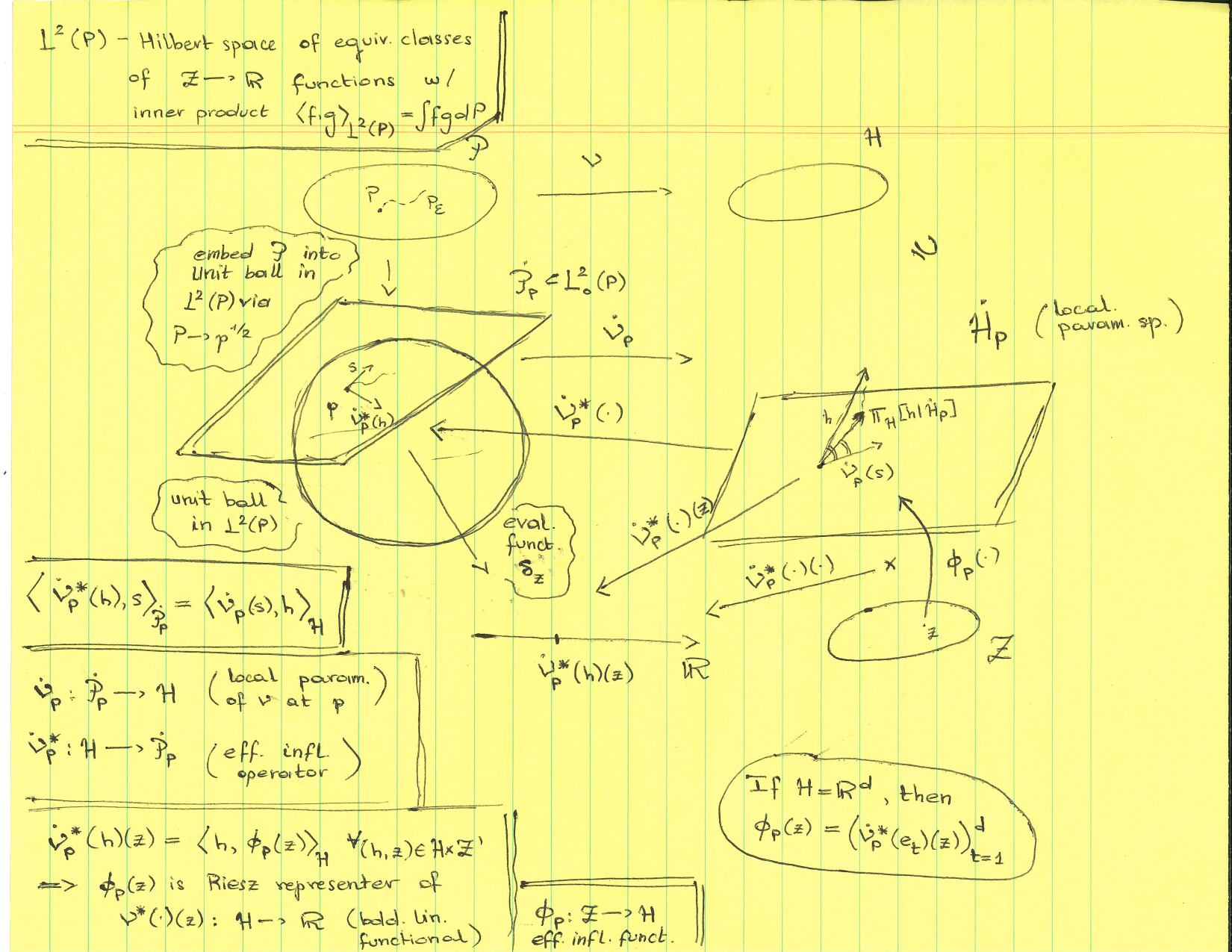}    
\end{center}
\caption{Overview of the semiparametric theory for Hilbert-valued parameters.} \label{figure2}
\end{figure}

\section{Shapley values axioms in the context of variable importance for heterogeneous treatment effects} \label{SectionA2}

Shapley values \citep{Shapley1953, Lundberg2017, Covert2020, Williamson2020, Verdinelli2023} are a widely used measure of variable importance, originally developed in cooperative game theory. They were initially proposed as a method for fairly distributing payouts among a group of cooperative players. More recently, Shapley values have been adopted in explainable machine learning to evaluate feature importance in predictive models. In the context of quantifying variable importance for heterogeneous treatment effects, the Shapley value for the $i$-th feature is defined for $\beta_\mathcal{V} \coloneqq \frac{1}{d} \binom{d-1}{\lvert \mathcal{V} \rvert}^{-1}$ as:
\begin{align*}
    \gamma_{SHAP,i} (P)(x) &\coloneqq \sum_{\mathcal{V} \subseteq [d] \backslash \lbrace i \rbrace} \beta_\mathcal{V} \lbrace \nu_P (  \mathcal{V} \cup \lbrace i \rbrace ) ( x_{\mathcal{V} \cup \lbrace i \rbrace} ) - \nu_P ( \mathcal{V} ) ( x_\mathcal{V} ) \rbrace.
\end{align*}
We are defining the Shapley values with respect to a fixed version of the CATE $\nu_P$, which is unique up to $P$-almost sure equivalence. The Shapley value is the only measure of variable importance that satisfies a set of axioms---namely, efficiency (distinct from its statistical meaning), symmetry, null player, and linearity---which are considered desirable properties of a payout function in cooperative game theory \citep{Shapley1953}. These axioms have also meaningful interpretations in the context of variable importance for heterogeneous treatment effects. In this setting, the efficiency axiom states that
    \begin{align*}
        \sum_{i=1}^d \gamma_{SHAP,i} (P) (x) = \mathbb{E} (Y^1 - Y^0 \mid X = x) - \mathbb{E} (Y^1 - Y^0),
    \end{align*}
which means that the sum of the individual contributions of variables $X_i$, $i=1,\dots,d$, to the heterogeneity of the treatment effect---quantified by their Shapley values---equals the total heterogeneity of the treatment effect attributable to the entire covariate set $X$. The symmetry axiom states that if $\nu_P ( \mathcal{V} \cup \lbrace i \rbrace )(x) = \nu_P ( \mathcal{V} \cup \lbrace j \rbrace )(x)$ for all subsets $\mathcal{V}$ that contain neither $i$ nor $j$, then the contribution of covariates $X_i$ and $X_j$ to treatment effect heterogeneity is indistinguishable, and thus both should be assigned the same Shapley value when $X=x$. The null axiom states that if the effect modification due to the covariate $X_i$ conditional on the covariates $X_\mathcal{V}$ is $0$ for all subsets of covariates $X_\mathcal{V}$, then the Shapley value assigned to $X_i$ should also be zero. Finally, the linearity axiom states that Shapley values are linear operators: if treatment effect heterogeneity is measured by a linear combination of outcomes---e.g., for outcome variables $Y$ and $W$, considering the CATE on $\alpha Y + \beta W$ for $\alpha, \beta \in\mathbb{R}$---then the resulting Shapley values are the corresponding linear combination of the Shapley values computed separately on $Y$ and $W$.

\section{Computation of the test statistic for the statistical test of no importance} \label{SectionA3}

\subsection{Test statistic for the test of no importance for a generic weighted parameter}

The parameter of interest is RKHS-valued function $\gamma^\mathcal{K}_\omega (P) \coloneqq \sum_{\mathcal{V} \subseteq [d]} \omega_\mathcal{V} \nu_P^\mathcal{K} ( \mathcal{V})$, where $\omega_\mathcal{V} \in \mathbb{R}$ are weights.
For $\hat{\nu}^\mathcal{K}_\mathcal{V}$ the one-step estimator of $\nu_0^\mathcal{K} ( \mathcal{V})$, the one-step estimator of $\gamma^\mathcal{K}_\omega (P_0)$ is of the form:
\begin{align*}
    \hat{\gamma}^\mathcal{K}_\omega (\cdot) &\coloneqq \gamma^\mathcal{K}_\omega ( \hat{P}_n ) (\cdot) + P_n \phi_{n}^\omega (\cdot) = \sum_{\mathcal{V} \subseteq [d]} \omega_\mathcal{V} \hat{\nu}^\mathcal{K}_\mathcal{V} (\cdot).
\end{align*}
\noindent
We wish to compute an explicit formula for the test statistic $\| \hat{\gamma}^\mathcal{K}_\omega\|_\mathcal{H}^2$, which is used to test for the importance of a given variable in Section \ref{Section3.3}. To this end, note that
\begin{align*}
    \nonumber
    \| \hat{\gamma}^\mathcal{K}_\omega \|_\mathcal{H}^2 &= \left\langle \hat{\gamma}^\mathcal{K}_\omega, \hat{\gamma}^\mathcal{K}_\omega \right\rangle_\mathcal{H} = \sum_{\mathcal{V} \subseteq [d]} \sum_{{\mathcal{V}^\prime} \subseteq [d]} \omega_\mathcal{V}  \omega_{\mathcal{V}^\prime} \left\langle \hat{\nu}^\mathcal{K}_\mathcal{V}, \hat{\nu}^\mathcal{K}_{\mathcal{V}^\prime} \right\rangle_\mathcal{H}.
\end{align*}
Let us consider first the one-step estimator $\hat{\nu}^\mathcal{K}_\mathcal{V}$:
\begin{align} \label{eqn7} 
    \nonumber
    \hat{\nu}_\mathcal{V}^\mathcal{K} (\cdot) &= \frac{1}{n} \sum_{i=1}^n \left[ \kappa_{n,\mathcal{V}} (X_{\mathcal{V},i}) (\cdot) \psi_n ( Z_i ) + \left\lbrace \mathcal{K} ( \cdot, X_i ) - \kappa_{n,\mathcal{V}} (X_{\mathcal{V},i}) (\cdot) \right\rbrace \nu_n (\mathcal{V}) ( X_{\mathcal{V},i} ) \right] \\
    &= \frac{1}{n} \sum_{i=1}^n \left[\alpha_{\mathcal{V},i} \mathcal{K} ( \cdot, X_i ) + \beta_{\mathcal{V},i} \kappa_{n,\mathcal{V}} (X_{\mathcal{V},i})(\cdot)\right],
\end{align}
where $\alpha_{\mathcal{V},i} \coloneqq \nu_n (\mathcal{V}) ( X_{\mathcal{V},i} )$ and $\beta_{\mathcal{V},i} \coloneqq \psi_n ( Z_i ) - \nu_n (\mathcal{V}) ( X_{\mathcal{V},i} )$. Furthermore, let us consider a single inner product in the expression above. Substituting the expression (\ref{eqn7}) we obtain
\begin{align} \label{eqn9}
    \nonumber
    \left\langle \hat{\nu}^\mathcal{K}_{\mathcal{V}}, \hat{\nu}^\mathcal{K}_{\mathcal{V}^\prime} \right\rangle_\mathcal{H} &= \frac{1}{n^2} \sum_{i=1}^n \sum_{j=1}^n \left\langle \alpha_{\mathcal{V},i} \mathcal{K} (\cdot, X_i ) + \beta_{\mathcal{V},i} \kappa_{n,\mathcal{V}} (X_{\mathcal{V},i}) (\cdot), \alpha_{\mathcal{V}^\prime,j} \mathcal{K} ( \cdot, X_j ) + \beta_{\mathcal{V}^\prime,j} \kappa_{n,\mathcal{V}^\prime} (X_{\mathcal{V}^\prime,j}) (\cdot) \right\rangle_\mathcal{H} \\
    \nonumber
    &= \frac{1}{n^2} \sum_{i=1}^n \sum_{j=1}^n \Big[\alpha_{\mathcal{V},i} \alpha_{\mathcal{V}^\prime,j} \left\langle \mathcal{K} ( \cdot, X_i ), \mathcal{K} ( \cdot, X_j ) \right\rangle_\mathcal{H} + \beta_{\mathcal{V},i} \beta_{\mathcal{V}^\prime,j} \left\langle  \kappa_{n,\mathcal{V}} (X_{\mathcal{V},i}) (\cdot), \kappa_{n,\mathcal{V}^\prime} (X_{\mathcal{V}^\prime,j}) (\cdot) \right\rangle_\mathcal{H} \\
    &\quad + \alpha_{\mathcal{V},i} \beta_{\mathcal{V}^\prime,j} \left\langle \mathcal{K} ( \cdot, X_i ), \kappa_{n,\mathcal{V}^\prime} (X_{\mathcal{V}^\prime,j}) (\cdot) \right\rangle_\mathcal{H} + \alpha_{\mathcal{V}^\prime,j} \beta_{\mathcal{V},i} \left\langle \mathcal{K} ( \cdot, X_j ), \kappa_{n,\mathcal{V}} (X_{\mathcal{V},i}) (\cdot) \right\rangle_\mathcal{H}\Big].
\end{align}
\noindent
A possible choice for the estimator $\kappa_{n,\mathcal{V}} (X_{\mathcal{V},i}) (\cdot)$ can be obtained from kernel ridge regression \citep{Park2020,Park2021}, so that
\begin{align} \label{eqn10}
    \kappa_{n,\mathcal{V}} (X_{\mathcal{V},i}) (\cdot) = \mathbf{k}_\mathcal{V}^\intercal ( X_{\mathcal{V},i} ) \mathbf{W}_\mathcal{V} \mathbf{k}_X (\cdot)
\end{align}
where $\mathbf{W}_\mathcal{V} = ( \mathbf{K}_\mathcal{V} + \lambda I_n )^{-1}$ with $[ \mathbf{K}_\mathcal{V} ]_{ij} = \mathcal{K}_\mathcal{V} ( X_{\mathcal{V},i}, X_{\mathcal{V},j} )$, $\mathbf{k}_\mathcal{V} ( x_\mathcal{V} ) = ( \mathcal{K}_\mathcal{V} ( X_{\mathcal{V},1}, x_\mathcal{V} ), \dots, \mathcal{K}_\mathcal{V} ( X_{\mathcal{V},n}, x_\mathcal{V} ) )^\intercal$ and $\mathbf{k}_X (x) = ( \mathcal{K} ( X_1, x ), \dots, \mathcal{K} ( X_n, x ) )^\intercal$. 
\noindent
Substituting the expression (\ref{eqn10}) for the estimator, one may compute the inner products from (\ref{eqn9}):
\begin{align*}
    \left\langle  \kappa_{n,\mathcal{V}} (X_{\mathcal{V},i}) (\cdot), \kappa_{n,\mathcal{V}^\prime} (X_{\mathcal{V}^\prime,j}) (\cdot) \right\rangle_\mathcal{H} &= \left\langle \mathbf{k}_\mathcal{V}^\intercal ( X_{\mathcal{V},i} ) \mathbf{W}_\mathcal{V} \mathbf{k}_X (\cdot), \mathbf{k}_{\mathcal{V}^\prime}^\intercal ( X_{\mathcal{V}^\prime,j} ) \mathbf{W}_{\mathcal{V}^\prime} \mathbf{k}_X (\cdot) \right\rangle_\mathcal{H} \\
    &= \mathbf{k}_\mathcal{V}^\intercal ( X_{\mathcal{V},i} ) \mathbf{W}_\mathcal{V} \mathbf{K} \mathbf{W}_{\mathcal{V}^\prime}^\intercal \mathbf{k}_{\mathcal{V}^\prime} ( X_{\mathcal{V}^\prime,j} ),
\end{align*}
where $\mathbf{K}_{ij} = \mathcal{K} ( X_i, X_j )$. Furthermore,
\begin{align*}
    \left\langle \kappa_{n,\mathcal{V}} (X_{\mathcal{V},j}) (\cdot), \mathcal{K} ( \cdot, X_i ) \right\rangle_\mathcal{H} &= \left\langle \mathbf{k}_\mathcal{V}^\intercal ( X_{\mathcal{V},j} ) \mathbf{W}_\mathcal{V} \mathbf{k}_X (\cdot), \mathcal{K} ( \cdot, X_i ) \right\rangle_\mathcal{H} = \mathbf{k}_\mathcal{V}^\intercal ( X_{\mathcal{V},j} ) \mathbf{W}_\mathcal{V} \mathbf{k}_X ( X_i ).
\end{align*}
Combining the above computations, we obtain the following matrix expression:
\begin{align*}
    \left\langle \hat{\nu}^\mathcal{K}_{\mathcal{V}}, \hat{\nu}^\mathcal{K}_{\mathcal{V}^\prime} \right\rangle_\mathcal{H} &= \frac{1}{n^2} \sum_{i=1}^n \sum_{j=1}^n \alpha_{\mathcal{V},i} \alpha_{\mathcal{V}^\prime,j} \mathcal{K} ( X_i, X_j ) + \beta_{\mathcal{V},i} \beta_{\mathcal{V}^\prime,j} \mathbf{k}_\mathcal{V}^\intercal ( X_{\mathcal{V},i} ) \mathbf{W}_\mathcal{V} \mathbf{K} \mathbf{W}_{\mathcal{V}^\prime}^\intercal \mathbf{k}_{\mathcal{V}^\prime} ( X_{\mathcal{V}^\prime,j} ) \\
    &\quad + \alpha_{\mathcal{V},i} \beta_{\mathcal{V}^\prime,j} \mathbf{k}_{\mathcal{V}^\prime}^\intercal ( X_{\mathcal{V}^\prime,j} ) \mathbf{W}_{\mathcal{V}^\prime} \mathbf{k}_X ( X_i ) + \alpha_{\mathcal{V}^\prime,j} \beta_{\mathcal{V},i} \mathbf{k}_{\mathcal{V}}^\intercal ( X_{\mathcal{V},i} ) \mathbf{W}_{\mathcal{V}} \mathbf{k}_X ( X_j ) \\
    &=\frac{1}{n^2} \lbrace \mathbf{\alpha}_{\mathcal{V}}^\intercal \mathbf{K} \mathbf{\alpha}_{\mathcal{V}^\prime} + \mathbf{\beta}_{\mathcal{V}}^\intercal \mathbf{K}_{\mathcal{V}}^\intercal \mathbf{W}_{\mathcal{V}} \mathbf{K} \mathbf{W}_{\mathcal{V}^\prime} \mathbf{K}_{\mathcal{V}^\prime} \mathbf{\beta}_{\mathcal{V}^\prime} + \mathbf{\beta}_{\mathcal{V}^\prime}^\intercal \mathbf{K}_{\mathcal{V}^\prime}^\intercal \mathbf{W}_{\mathcal{V}^\prime} \mathbf{K} \mathbf{\alpha}_{\mathcal{V}} + \mathbf{\beta}_{\mathcal{V}}^\intercal \mathbf{K}_{\mathcal{V}}^\intercal \mathbf{W}_{\mathcal{V}} \mathbf{K} \mathbf{\alpha}_{\mathcal{V}^\prime} \rbrace \\
    &= \frac{1}{n^2} ( \mathbf{\alpha}_{\mathcal{V}} +  \mathbf{W}_{\mathcal{V}} \mathbf{K}_{\mathcal{V}} \mathbf{\beta}_{\mathcal{V}} )^\intercal \mathbf{K}  ( \mathbf{\alpha}_{\mathcal{V}^\prime} +  \mathbf{W}_{\mathcal{V}^\prime} \mathbf{K}_{\mathcal{V}^\prime} \mathbf{\beta}_{\mathcal{V}^\prime} ).
\end{align*}
\noindent
Finally, we can express the test statistic as
\begin{align} \label{eqn8}
    \nonumber
    &\| \hat{\gamma}^\mathcal{K}_\omega \|_\mathcal{H}^2 \\
    \nonumber
    &= \frac{1}{n^2} \sum_{\mathcal{V} \subseteq [d]} \sum_{{\mathcal{V}^\prime} \subseteq [d]} \omega_\mathcal{V}  \omega_{\mathcal{V}^\prime} \lbrace \mathbf{\alpha}_{\mathcal{V}}^\intercal \mathbf{K} \mathbf{\alpha}_{\mathcal{V}^\prime} + \mathbf{\beta}_{\mathcal{V}}^\intercal \mathbf{K}_{\mathcal{V}}^\intercal \mathbf{W}_{\mathcal{V}} \mathbf{K} \mathbf{W}_{\mathcal{V}^\prime} \mathbf{K}_{\mathcal{V}^\prime} \mathbf{\beta}_{\mathcal{V}^\prime} + \mathbf{\beta}_{\mathcal{V}^\prime}^\intercal \mathbf{K}_{\mathcal{V}^\prime}^\intercal \mathbf{W}_{\mathcal{V}^\prime} \mathbf{K} \mathbf{\alpha}_{\mathcal{V}} + \mathbf{\beta}_{\mathcal{V}}^\intercal \mathbf{K}_{\mathcal{V}}^\intercal \mathbf{W}_{\mathcal{V}} \mathbf{K} \mathbf{\alpha}_{\mathcal{V}^\prime} \rbrace \\
    &= \frac{1}{n^2} \left\lbrace \sum_{\mathcal{V} \subseteq [d]} \omega_\mathcal{V} ( \mathbf{\alpha}_{\mathcal{V}} +  \mathbf{W}_{\mathcal{V}} \mathbf{K}_{\mathcal{V}} \mathbf{\beta}_{\mathcal{V}} ) \right\rbrace^\intercal \mathbf{K} \left\lbrace \sum_{\mathcal{V}^\prime \subseteq [d]} \omega_{\mathcal{V}^\prime} ( \mathbf{\alpha}_{\mathcal{V}^\prime} +  \mathbf{W}_{\mathcal{V}^\prime} \mathbf{K}_{\mathcal{V}^\prime} \mathbf{\beta}_{\mathcal{V}^\prime} ) \right\rbrace.
\end{align}

\subsection{Test statistic for the test of no importance for the LOCO variable importance measure}

We specialize the computation of the general weighted estimand to the specific case of the LOCO variable importance measure. For $\mathcal{P} ( [d] )$ the powerset of $[d]$, let $\omega \in \mathbb{R}^{\lvert \mathcal{P} ( [d] ) \rvert}$, such that $\omega_{\mathcal{V}} = 1$ and $\omega_{\mathcal{V}^\prime} = -1$, for $\mathcal{V}^\prime \subsetneq \mathcal{V}$, then $\gamma_\omega^\mathcal{K} = \gamma_{LOCO}^\mathcal{K}$. The parameter of interest is
\begin{align*}
    \gamma_{LOCO}^\mathcal{K} (P; \mathcal{V}, \mathcal{V}^\prime) (x) = \nu_P^\mathcal{K} (\mathcal{V}) (x) - \nu_P^\mathcal{K} (\mathcal{V}^\prime ) ( x ).
\end{align*}
The corresponding one-step estimator is given by $\hat{\gamma}_{LOCO}^\mathcal{K} (\cdot) = \hat{\nu}^\mathcal{K}_{\mathcal{V}} (\cdot) - \hat{\nu}^\mathcal{K}_{\mathcal{V}^\prime} (\cdot)$. Furthermore, the test statistic for the test of no importance for the LOCO variable importance measure is of the form
\begin{align*}
    \| \hat{\gamma}_{LOCO}^\mathcal{K} \|_\mathcal{H}^2 &= \left\langle \hat{\nu}^\mathcal{K}_{\mathcal{V}}, \hat{\nu}^\mathcal{K}_{\mathcal{V}} \right\rangle_\mathcal{H} + \left\langle \hat{\nu}^\mathcal{K}_{\mathcal{V}^\prime}, \hat{\nu}^\mathcal{K}_{\mathcal{V}^\prime} \right\rangle_\mathcal{H} - 2 \left\langle \hat{\nu}^\mathcal{K}_{\mathcal{V}}, \hat{\nu}^\mathcal{K}_{\mathcal{V}^\prime} \right\rangle_\mathcal{H}.
\end{align*}
Therefore, analogously to the expression (\ref{eqn8}), we can express the test statistic as
\begin{align*}
&\| \hat{\gamma}_{LOCO}^\mathcal{K} \|_\mathcal{H}^2 \\
&= \frac{1}{n^2} \lbrace ( \mathbf{\alpha}_{\mathcal{V}} +  \mathbf{W}_{\mathcal{V}} \mathbf{K}_{\mathcal{V}} \mathbf{\beta}_{\mathcal{V}} ) - ( \mathbf{\alpha}_{\mathcal{V}^\prime} +  \mathbf{W}_{\mathcal{V}^\prime} \mathbf{K}_{\mathcal{V}^\prime} \mathbf{\beta}_{\mathcal{V}^\prime} ) \rbrace^\intercal \mathbf{K} \lbrace ( \mathbf{\alpha}_{\mathcal{V}} +  \mathbf{W}_{\mathcal{V}} \mathbf{K}_{\mathcal{V}} \mathbf{\beta}_{\mathcal{V}} ) - ( \mathbf{\alpha}_{\mathcal{V}^\prime} +  \mathbf{W}_{\mathcal{V}^\prime} \mathbf{K}_{\mathcal{V}^\prime} \mathbf{\beta}_{\mathcal{V}^\prime} ) \rbrace.
\end{align*}

\subsection{Test statistic for the statistical test of no importance for the Shapley values}

We specialize the computation of the general weighted estimand to the particular case of Shapley values. Let $\omega \in \mathbb{R}^{\lvert \mathcal{P} ( [d] ) \rvert}$, such that $\omega_{\mathcal{V}} = -\frac{1}{d} \binom{d-1}{\lvert \mathcal{V} \rvert}^{-1}$ and $\omega_{\mathcal{V} \cup \lbrace i \rbrace} = \frac{1}{d} \binom{d-1}{\lvert \mathcal{V} \rvert}^{-1}$, for all $\mathcal{V} \subseteq [d] \backslash \lbrace i \rbrace$, then $\gamma_\omega^\mathcal{K} = \gamma_{SHAP,i}^\mathcal{K}$, which is given by
\begin{align}
    \gamma_{SHAP,i}^\mathcal{K} (P) (x) = \frac{1}{d} \sum_{\mathcal{V} \subseteq [d] \backslash \lbrace i \rbrace} \binom{d-1}{\lvert \mathcal{V} \rvert}^{-1} \lbrace \nu_P^\mathcal{K} ( {\mathcal{V} \cup \lbrace i \rbrace} ) (x_{\mathcal{V} \cup \lbrace i \rbrace}) - \nu_P^\mathcal{K} (\mathcal{V}) (x_\mathcal{V}) \rbrace. \label{eq:ShapleyNonPermRep}
\end{align}
The corresponding one-step estimator is
\begin{align*}
    \hat{\gamma}_{SHAP,i}^\mathcal{K} (\cdot) &= \frac{1}{d} \sum_{\mathcal{V} \subseteq [d] \backslash \lbrace i \rbrace} \binom{d-1}{\lvert \mathcal{V} \rvert}^{-1} \lbrace \hat{\nu}^\mathcal{K}_{\mathcal{V} \cup \lbrace i \rbrace} (\cdot) - \hat{\nu}^\mathcal{K}_{\mathcal{V}} (\cdot) \rbrace.
\end{align*}
Furthermore, the test statistic for the test of no importance for the Shapley values is of the form
\begin{align*}
    \| \hat{\gamma}_{SHAP,i}^\mathcal{K} \|_\mathcal{H}^2 &= \left\langle \frac{1}{d} \sum_{\mathcal{V} \subseteq [d] \backslash \lbrace i \rbrace} \binom{d-1}{\lvert \mathcal{V} \rvert}^{-1} ( \hat{\nu}^\mathcal{K}_{\mathcal{V} \cup \lbrace i \rbrace} - \hat{\nu}^\mathcal{K}_{\mathcal{V}} ), \frac{1}{d} \sum_{\mathcal{V}^\prime \subseteq [d] \backslash \lbrace i \rbrace} \binom{d-1}{\lvert \mathcal{V}^\prime \rvert}^{-1} ( \hat{\nu}^\mathcal{K}_{\mathcal{V}^\prime \cup \lbrace i \rbrace} - \hat{\nu}^\mathcal{K}_{\mathcal{V}^\prime} ) \right\rangle_\mathcal{H} \\
    &=\frac{1}{d^2} \sum_{\mathcal{V} \subseteq [d] \backslash \lbrace i \rbrace} \sum_{\mathcal{V}^\prime \subseteq [d] \backslash \lbrace i \rbrace} \binom{d-1}{\lvert \mathcal{V} \rvert}^{-1} \binom{d-1}{\lvert \mathcal{V}^\prime \rvert}^{-1} \\
    &\quad \times \left\lbrace \left\langle \hat{\nu}^\mathcal{K}_{\mathcal{V} \cup \lbrace i \rbrace}, \hat{\nu}^\mathcal{K}_{\mathcal{V}^\prime \cup \lbrace i \rbrace} \right\rangle_\mathcal{H} + \left\langle \hat{\nu}^\mathcal{K}_{\mathcal{V}}, \hat{\nu}^\mathcal{K}_{\mathcal{V}^\prime} \right\rangle_\mathcal{H} - \left\langle \hat{\nu}^\mathcal{K}_{\mathcal{V} \cup \lbrace i \rbrace}, \hat{\nu}^\mathcal{K}_{\mathcal{V}^\prime} \right\rangle_\mathcal{H} - \left\langle \hat{\nu}^\mathcal{K}_{\mathcal{V}^\prime \cup \lbrace i \rbrace}, \hat{\nu}^\mathcal{K}_{\mathcal{V}} \right\rangle_\mathcal{H} \right\rbrace.
\end{align*}
Therefore, analogously to (\ref{eqn8}), we can express the test statistic as
\begin{align*}
    &\| \hat{\gamma}_{SHAP,i}^\mathcal{K} \|_\mathcal{H}^2 \\
    &= \frac{1}{n^2 d^2} \left[ \sum_{\mathcal{V} \subseteq [d] \backslash \lbrace i \rbrace} \binom{d-1}{\lvert \mathcal{V} \rvert}^{-1} \left\lbrace \mathbf{\alpha}_{\mathcal{V} \cup \lbrace i \rbrace} +  \mathbf{W}_{\mathcal{V} \cup \lbrace i \rbrace} \mathbf{K}_{\mathcal{V} \cup \lbrace i \rbrace} \mathbf{\beta}_{\mathcal{V} \cup \lbrace i \rbrace} - ( \mathbf{\alpha}_{\mathcal{V}} +  \mathbf{W}_{\mathcal{V}} \mathbf{K}_{\mathcal{V}} \mathbf{\beta}_{\mathcal{V}} ) \right\rbrace \right]^\intercal \\
    &\quad \times \mathbf{K} \left[ \sum_{\mathcal{V}^\prime \subseteq [d] \backslash \lbrace i \rbrace} \binom{d-1}{\lvert \mathcal{V}^\prime \rvert}^{-1} \lbrace \mathbf{\alpha}_{\mathcal{V}^\prime \cup \lbrace i \rbrace} +  \mathbf{W}_{\mathcal{V}^\prime \cup \lbrace i \rbrace} \mathbf{K}_{\mathcal{V}^\prime \cup \lbrace i \rbrace} \mathbf{\beta}_{\mathcal{V}^\prime \cup \lbrace i \rbrace} - ( \mathbf{\alpha}_{\mathcal{V}^\prime} +  \mathbf{W}_{\mathcal{V}^\prime} \mathbf{K}_{\mathcal{V}^\prime} \mathbf{\beta}_{\mathcal{V}^\prime} ) \rbrace \right].
\end{align*}

\subsection{Test statistic for the statistical test of no importance for the Shapley values: permutation formulation}\label{app:permutation}

The Shapley value of a variable $X_i$ can be viewed as its marginal contribution to the subset of variables $X_\mathcal{V}$, where $\mathcal{V} \subseteq [d] \backslash \lbrace i \rbrace$, averaged over the set of all permutations of the features. Specifically, let $\mathcal{S}_d$ denote the symmetric group on $d$ elements and for $\sigma \in \mathcal{S}_d$, let $\sigma(i)=j$ indicate that feature $i$ is assigned rank $j$ by the permutation $\sigma$. Alternatively to \eqref{eq:ShapleyNonPermRep}, the Shapley value of $i$-th feature for CATE can be represented as
\begin{align*}
    \gamma_{SHAP,i}^\mathcal{K} ( P ) ( x ) &= \frac{1}{d!} \sum_{\sigma \in \mathcal{S}_d} \lbrace \nu_P^\mathcal{K} ( [ \sigma ]_{i} \cup \lbrace i \rbrace ) ( x_{[ \sigma ]_{i} \cup \lbrace i \rbrace} ) - \nu_P^\mathcal{K} ( [ \sigma ]_{i} ) ( x_{[ \sigma ]_{i}} ) \rbrace, 
\end{align*}
where $[ \sigma ]_{i}$ denotes the subset of features ranked lower than feature $X_i$ in the permutation $\sigma$ \citep{Mitchell2022, Verdinelli2023}. Therefore, one may express the one-step estimator for Shapley values using the equivalent permutation formulation:
\begin{align*}
    \hat{\gamma}_{SHAP,i}^\mathcal{K} (\cdot) = \frac{1}{d!} \sum_{\sigma \in \mathcal{S}_d} \lbrace \hat{\nu}^\mathcal{K}_{[ \sigma ]_{i} \cup \lbrace i \rbrace} (\cdot) - \hat{\nu}^\mathcal{K}_{[ \sigma ]_{i}} (\cdot) \rbrace.
\end{align*} 
Furthermore, the test statistic for the corresponding test of no importance may be expressed as:
\begin{align*}
    \| \hat{\gamma}_{SHAP,i}^\mathcal{K} \|_\mathcal{H}^2 &= \left\langle \frac{1}{d!} \sum_{\sigma \in \mathcal{S}_d} ( \hat{\nu}^\mathcal{K}_{[ \sigma ]_{i} \cup \lbrace i \rbrace} - \hat{\nu}^\mathcal{K}_{[ \sigma ]_{i}} ), \frac{1}{d!} \sum_{\sigma^\prime \in \mathcal{S}_d} ( \hat{\nu}^\mathcal{K}_{[ \sigma^\prime ]_{i} \cup \lbrace i \rbrace} - \hat{\nu}^\mathcal{K}_{[ \sigma^\prime ]_{i}} ) \right\rangle_\mathcal{H} \\
    &= \frac{1}{( d! )^2} \sum_{\sigma \in \mathcal{S}_d} \sum_{\sigma^\prime \in \mathcal{S}_d} \left\lbrace \left\langle \hat{\nu}^\mathcal{K}_{[ \sigma ]_{i} \cup \lbrace i \rbrace}, \hat{\nu}^\mathcal{K}_{[ \sigma^\prime ]_{i} \cup \lbrace i \rbrace} \right\rangle_\mathcal{H} + \left\langle \hat{\nu}^\mathcal{K}_{[ \sigma ]_{i}}, \hat{\nu}^\mathcal{K}_{[ \sigma^\prime ]_{i}} \right\rangle_\mathcal{H} \right.\\
    &\quad \left.- \left\langle \hat{\nu}^\mathcal{K}_{[ \sigma ]_{i} \cup \lbrace i \rbrace}, \hat{\nu}^\mathcal{K}_{[ \sigma^\prime ]_{i}} \right\rangle_\mathcal{H} - \left\langle \hat{\nu}^\mathcal{K}_{[ \sigma^\prime ]_{i} \cup \lbrace i \rbrace}, \hat{\nu}^\mathcal{K}_{[ \sigma ]_{i}} \right\rangle_\mathcal{H} \right\rbrace.
\end{align*}
Therefore, we can express the test statistic as the following matrix expression:
\begin{align} \label{eqn15}
    \nonumber
    \| \hat{\gamma}_{SHAP,i}^\mathcal{K} \|_\mathcal{H}^2
    &= \frac{1}{(n d! )^2} \left[ \sum_{\sigma \in \mathcal{S}_d} \lbrace \mathbf{\alpha}_{[ \sigma ]_{i} \cup \lbrace i \rbrace} +  \mathbf{W}_{[ \sigma ]_{i} \cup \lbrace i \rbrace} \mathbf{K}_{[ \sigma ]_{i} \cup \lbrace i \rbrace} \mathbf{\beta}_{[ \sigma ]_{i} \cup \lbrace i \rbrace} - ( \mathbf{\alpha}_{[ \sigma ]_{i}} +  \mathbf{W}_{[ \sigma ]_{i}} \mathbf{K}_{[ \sigma ]_{i}} \mathbf{\beta}_{[ \sigma ]_{i}} ) \rbrace \right]^\intercal \\
    &\quad \times \mathbf{K} \left[ \sum_{\sigma^\prime \in \mathcal{S}_d} \lbrace \mathbf{\alpha}_{[ \sigma^\prime ]_{i} \cup \lbrace i \rbrace} +  \mathbf{W}_{[ \sigma^\prime ]_{i} \cup \lbrace i \rbrace} \mathbf{K}_{[ \sigma^\prime ]_{i} \cup \lbrace i \rbrace} \mathbf{\beta}_{[ \sigma^\prime ]_{i} \cup \lbrace i \rbrace} - ( \mathbf{\alpha}_{[ \sigma^\prime ]_{i}} +  \mathbf{W}_{[ \sigma^\prime ]_{i}} \mathbf{K}_{[ \sigma^\prime ]_{i}} \mathbf{\beta}_{ [ \sigma^\prime ]_{i}} ) \rbrace \right].
\end{align}

\section{Computation of an asymptotically valid confidence set for the weighted estimand using the bootstrap} \label{SectionA4}

\subsection{Computation of an asymptotically valid confidence set for a generic weighted parameter using the bootstrap}

Theorem \ref{Theorem3} states that $\mathcal{C}_n^\omega ( \hat{\xi}_{\omega} )$ is an asymptotically valid confidence set, if we construct a consistent estimator of the $( 1 - \alpha )$-quantile of $\| \mathbb{H}_\omega \|_\mathcal{H}^2$. A consistent estimator of $\xi_{\omega, 1-\alpha}$ can be obtained via the bootstrap. 
Let $Z_1^\sharp, Z_2^\sharp, \dots, Z_n^\sharp$ be an iid sample from the empirical measure $P_n$ and let $P_n^\sharp$ be the empirical distribution of $Z_1^\sharp, Z_2^\sharp, \dots, Z_n^\sharp$. Furthermore, define $\mathbb{H}_{\omega,n}^\sharp \coloneqq n^{1/2} ( P_n^{\sharp} - P_n ) \phi_n^\omega$. By Theorem 4 in \cite{Luedtke2024}, the bootstrap estimator $\hat{\xi}_{\omega}$ obtained as $( 1 - \alpha )$-quantile of $\| \mathbb{H}_{\omega,n}^\sharp \|_\mathcal{H}^2$ is a consistent estimator of the threshold $\xi_{\omega, 1-\alpha}$. To approximate this quantile empirically, the following closed-form expression is useful:
\begin{align*}
    \| \mathbb{H}_{\omega,n}^\sharp \|_\mathcal{H}^2 &= \langle n^{1/2} ( P_n^{\sharp} - P_n ) \phi_n^\omega, n^{1/2} ( P_n^{\sharp} - P_n ) \phi_n^\omega \rangle_\mathcal{H} \\
    &= n \sum_{\mathcal{V} \subseteq [d]} \sum_{\mathcal{V}^\prime \subseteq [d]} \omega_\mathcal{V} \omega_{\mathcal{V}^\prime} \langle ( P_n^{\sharp} - P_n ) \phi_{n}^\mathcal{V}, ( P_n^{\sharp} - P_n ) \phi_{n}^{\mathcal{V}^\prime} \rangle_\mathcal{H}.
\end{align*}
\noindent
The inner product in the above expression rewrites as
\begin{align*}
    &\left\langle ( P_n^{\sharp} - P_n ) \phi_{n}^{\mathcal{V}}, ( P_n^{\sharp} - P_n ) \phi_{n}^{\mathcal{V}^\prime}\right\rangle_\mathcal{H} \\
    &=\left\langle ( P_n^{\sharp} - P_n ) \left\lbrace \beta_{\mathcal{V}} \kappa_{n,\mathcal{V}} (X_{\mathcal{V}}) (\cdot) + \alpha_{\mathcal{V}} \mathcal{K} ( \cdot, X ) - \frac{1}{n} \sum_{l=1}^n \alpha_{\mathcal{V},l} \mathcal{K} ( \cdot, X_l ) \right\rbrace, \right. \\
    &\quad \left. ( P_n^{\sharp} - P_n ) \left\lbrace \beta_{\mathcal{V}^\prime} \kappa_{n,\mathcal{V}^\prime} (X_{\mathcal{V}^\prime}) (\cdot) + \alpha_{\mathcal{V}^\prime} \mathcal{K} ( \cdot, X ) - \frac{1}{n} \sum_{k=1}^n \alpha_{\mathcal{V}^\prime,k} \mathcal{K} ( \cdot, X_k ) \right\rbrace \right\rangle_\mathcal{H} \\
    &= \langle ( P_n^{\sharp} - P_n ) \lbrace \beta_{\mathcal{V}} \kappa_{n,\mathcal{V}} (X_{\mathcal{V}}) (\cdot) + \alpha_{\mathcal{V}} \mathcal{K} ( \cdot, X ) \rbrace, ( P_n^{\sharp} - P_n ) \lbrace \beta_{\mathcal{V}^\prime} \kappa_{n,\mathcal{V}^\prime} (X_{\mathcal{V}^\prime}) (\cdot) + \alpha_{\mathcal{V}^\prime} \mathcal{K} ( \cdot, X ) \rbrace \rangle_\mathcal{H} \\
    &=\langle ( P_n^{\sharp} - P_n ) \beta_{\mathcal{V}} \kappa_{n,\mathcal{V}} (X_{\mathcal{V}}) (\cdot), ( P_n^{\sharp} - P_n ) \beta_{\mathcal{V}^\prime} \kappa_{n,\mathcal{V}^\prime} (X_{\mathcal{V}^\prime}) (\cdot) \rangle_\mathcal{H} \\
    &\quad + \langle ( P_n^{\sharp} - P_n ) \alpha_{\mathcal{V}} \mathcal{K} ( \cdot, X ), ( P_n^{\sharp} - P_n ) \beta_{\mathcal{V}^\prime} \kappa_{n, \mathcal{V}^\prime} (X_{\mathcal{V}^\prime}) (\cdot) \rangle_\mathcal{H} \\
    &\quad + \langle ( P_n^{\sharp} - P_n ) \beta_{\mathcal{V}} \kappa_{n,\mathcal{V}} (X_{\mathcal{V}}) (\cdot), ( P_n^{\sharp} - P_n ) \alpha_{\mathcal{V}^\prime} \mathcal{K} ( \cdot, X ) \rangle_\mathcal{H} \\
    &\quad + \langle ( P_n^{\sharp} - P_n ) \alpha_{\mathcal{V}} \mathcal{K} ( \cdot, X ), ( P_n^{\sharp} - P_n ) \alpha_{\mathcal{V}^\prime} \mathcal{K} ( \cdot, X ) \rangle_\mathcal{H} \\
    &= \mathbf{\tilde{\beta}}_{\mathcal{V}}^\intercal \mathbf{K}_{\mathcal{V}}^\intercal \mathbf{W}_{\mathcal{V}} \mathbf{K} \mathbf{W}_{\mathcal{V}^\prime} \mathbf{K}_{\mathcal{V}^\prime} \mathbf{\tilde{\beta}}_{\mathcal{V}^\prime} + \mathbf{\tilde{\beta}}_{\mathcal{V}^\prime}^\intercal \mathbf{K}_{\mathcal{V}^\prime}^\intercal \mathbf{W}_{\mathcal{V}^\prime} \mathbf{K} \mathbf{\tilde{\alpha}}_{\mathcal{V}} + \mathbf{\tilde{\beta}}_{\mathcal{V}}^\intercal \mathbf{K}_{\mathcal{V}}^\intercal \mathbf{W}_{\mathcal{V}} \mathbf{K} \mathbf{\tilde{\alpha}}_{\mathcal{V}^\prime} + \mathbf{\tilde{\alpha}}_{\mathcal{V}}^\intercal \mathbf{K} \mathbf{\tilde{\alpha}}_{\mathcal{V}^\prime}, 
\end{align*}
where $\mathbf{\tilde{\alpha}}_{\mathcal{V}} \coloneqq ( P_n^{\sharp} - P_n ) ( \alpha_{\mathcal{V},1}, \dots, \alpha_{\mathcal{V},n} )^\intercal$ and $\mathbf{\tilde{\beta}}_{\mathcal{V}} \coloneqq ( P_n^{\sharp} - P_n ) ( \beta_{\mathcal{V},1}, \dots, \beta_{\mathcal{V},n} )^\intercal$. Therefore, we obtain the following expression for $\| \mathbb{H}_{\omega,n}^\sharp \|_\mathcal{H}^2$:
\begin{align*}
    \| \mathbb{H}_{\omega,n}^\sharp \|_\mathcal{H}^2
    = n \left\lbrace \sum_{\mathcal{V} \subseteq [d]} \omega_\mathcal{V} ( \mathbf{\tilde{\alpha}}_{\mathcal{V}} +  \mathbf{W}_{\mathcal{V}} \mathbf{K}_{\mathcal{V}} \mathbf{\tilde{\beta}}_{\mathcal{V}} ) \right\rbrace^\intercal \mathbf{K} \left\lbrace \sum_{\mathcal{V}^\prime \subseteq [d]} \omega_{\mathcal{V}^\prime} ( \mathbf{\tilde{\alpha}}_{\mathcal{V}^\prime} +  \mathbf{W}_{\mathcal{V}^\prime} \mathbf{K}_{\mathcal{V}^\prime} \mathbf{\tilde{\beta}}_{\mathcal{V}^\prime} ) \right\rbrace.
\end{align*}

\subsection{Computation of an asymptotically valid confidence set for LOCO variable importance measure using the bootstrap}

For the LOCO variable importance measure, $\mathbb{H}_{\omega,n}^\sharp$ is of the form:
\begin{align*}
    \mathbb{H}_{LOCO,n}^\sharp \coloneqq n^{1/2} ( P_n^{\sharp} - P_n ) ( \phi_n^{\mathcal{V}} - \phi_n^{\mathcal{V}^\prime} ),
\end{align*}
where $\phi_n^\mathcal{V}$ is $\phi_n^\omega$ with $\omega_{\mathcal{V}} = 1$ and $\omega_{\mathcal{V}^\prime} = 0$, for $\mathcal{V}^\prime \neq \mathcal{V}$. Therefore,
\begin{align*}
    &\| \mathbb{H}_{LOCO,n}^\sharp \|_\mathcal{H}^2 \\
    &= \langle n^{1/2} ( P_n^{\sharp} - P_n ) ( \phi_n^{\mathcal{V}} - \phi_n^{\mathcal{V}^\prime} ), n^{1/2} ( P_n^{\sharp} - P_n ) ( \phi_n^{\mathcal{V}} - \phi_n^{\mathcal{V}^\prime} ) \rangle_\mathcal{H} \\
    &= n \lbrace ( \mathbf{\tilde{\alpha}}_{\mathcal{V}} +  \mathbf{W}_{\mathcal{V}} \mathbf{K}_{\mathcal{V}} \mathbf{\tilde{\beta}}_{\mathcal{V}} ) - ( \mathbf{\tilde{\alpha}}_{\mathcal{V}^\prime} +  \mathbf{W}_{\mathcal{V}^\prime} \mathbf{K}_{\mathcal{V}^\prime} \mathbf{\tilde{\beta}}_{\mathcal{V}^\prime} ) \rbrace^\intercal \mathbf{K} \lbrace ( \mathbf{\tilde{\alpha}}_{\mathcal{V}} +  \mathbf{W}_{\mathcal{V}} \mathbf{K}_{\mathcal{V}} \mathbf{\tilde{\beta}}_{\mathcal{V}} ) - ( \mathbf{\tilde{\alpha}}_{\mathcal{V}^\prime} +  \mathbf{W}_{\mathcal{V}^\prime} \mathbf{K}_{\mathcal{V}^\prime} \mathbf{\tilde{\beta}}_{\mathcal{V}^\prime} ) \rbrace.
\end{align*}

\subsection{Computation of an asymptotically valid confidence set for Shapley values using the bootstrap}

For the Shapley value, $\mathbb{H}_{\omega,n}^\sharp$ is of the form:
\begin{align*}
    &\mathbb{H}_{SHAP,n}^\sharp \coloneqq  \frac{n^{1/2}}{d} \sum_{\mathcal{V} \subseteq [d] \backslash \lbrace i \rbrace} \binom{d-1}{\lvert \mathcal{V} \rvert}^{-1} ( P_n^{\sharp} - P_n ) ( \phi_n^{\mathcal{V} \cup \lbrace i \rbrace} - \phi_n^{\mathcal{V}} ).
\end{align*}
Therefore,
\begin{align*}
    \| \mathbb{H}_{SHAP,n}^\sharp \|_\mathcal{H}^2 &= \frac{n}{d^2} \sum_{\mathcal{V} \subseteq [d] \backslash \lbrace i \rbrace} \sum_{\mathcal{V}^\prime \subseteq [d] \backslash \lbrace i \rbrace} \binom{d-1}{\lvert \mathcal{V} \rvert}^{-1} \binom{d-1}{\lvert \mathcal{V}^\prime \rvert}^{-1} \\
    &\quad \times \langle ( P_n^{\sharp} - P_n ) ( \phi_n^{\mathcal{V} \cup \lbrace i \rbrace} - \phi_n^{\mathcal{V}} ), ( P_n^{\sharp} - P_n ) ( \phi_n^{\mathcal{V}^\prime \cup \lbrace i \rbrace} - \phi_n^{\mathcal{V}^\prime} ) \rangle_\mathcal{H} \\
    &= \frac{n}{d^2} \left[ \sum_{\mathcal{V} \subseteq [d] \backslash \lbrace i \rbrace} \binom{d-1}{\lvert \mathcal{V} \rvert}^{-1} \lbrace \mathbf{\tilde{\alpha}}_{\mathcal{V} \cup \lbrace i \rbrace} +  \mathbf{W}_{\mathcal{V} \cup \lbrace i \rbrace} \mathbf{K}_{\mathcal{V} \cup \lbrace i \rbrace} \mathbf{\tilde{\beta}}_{\mathcal{V} \cup \lbrace i \rbrace} - ( \mathbf{\tilde{\alpha}}_{\mathcal{V}} +  \mathbf{W}_{\mathcal{V}} \mathbf{K}_{\mathcal{V}} \mathbf{\tilde{\beta}}_{\mathcal{V}} ) \rbrace \right]^\intercal \\
    &\quad \times \mathbf{K} \left[ \sum_{\mathcal{V}^\prime \subseteq [d] \backslash \lbrace i \rbrace} \binom{d-1}{\lvert \mathcal{V}^\prime \rvert}^{-1} \lbrace \mathbf{\tilde{\alpha}}_{\mathcal{V}^\prime \cup \lbrace i \rbrace} +  \mathbf{W}_{\mathcal{V}^\prime \cup \lbrace i \rbrace} \mathbf{K}_{\mathcal{V}^\prime \cup \lbrace i \rbrace} \mathbf{\tilde{\beta}}_{\mathcal{V}^\prime \cup \lbrace i \rbrace} - ( \mathbf{\tilde{\alpha}}_{\mathcal{V}^\prime} +  \mathbf{W}_{\mathcal{V}^\prime} \mathbf{K}_{\mathcal{V}^\prime} \mathbf{\tilde{\beta}}_{\mathcal{V}^\prime} ) \rbrace \right].
\end{align*}

\subsection{Computation of an asymptotically valid confidence set for Shapley values using the bootstrap - permutation formulation}

Alternatively, one may express the Shapley value $\mathbb{H}_{\omega,n}^\sharp$ using the permutation formulation as follows:
\begin{align*}
    &\mathbb{H}_{SHAP,n}^\sharp \coloneqq  \frac{n^{1/2}}{d!} \sum_{\sigma \in \mathcal{S}_d} ( P_n^{\sharp} - P_n ) ( \phi_n^{[ \sigma ]_{i} \cup \lbrace i \rbrace} - \phi_n^{[ \sigma ]_{i}} ).
\end{align*}
Therefore,
\begin{align} \label{eqn16}
    \nonumber
    \| \mathbb{H}_{SHAP,n}^\sharp \|_\mathcal{H}^2 &= \frac{n}{( d! )^2} \sum_{\sigma \in \mathcal{S}_d} \sum_{\sigma^\prime \in \mathcal{S}_d} \left\langle ( P_n^{\sharp} - P_n ) ( \phi_n^{[ \sigma ]_{i} \cup \lbrace i \rbrace} - \phi_n^{[ \sigma ]_{i}} ), ( P_n^{\sharp} - P_n ) ( \phi_n^{[ \sigma^\prime ]_{i} \cup \lbrace i \rbrace} - \phi_n^{[ \sigma^\prime ]_{i}} ) \right\rangle_\mathcal{H} \\
    \nonumber
    &= \frac{n}{( d! )^2} \sum_{\sigma \in \mathcal{S}_d} \sum_{\sigma^\prime \in \mathcal{S}_d} \left\lbrace \left\langle ( P_n^{\sharp} - P_n ) \phi_n^{[ \sigma ]_{i} \cup \lbrace i \rbrace}, ( P_n^{\sharp} - P_n ) \phi_n^{[ \sigma^\prime ]_{i} \cup \lbrace i \rbrace} \right\rangle_\mathcal{H} \right. \\
    \nonumber
    &\quad + \left\langle ( P_n^{\sharp} - P_n ) \phi_n^{[ \sigma ]_{i}}, ( P_n^{\sharp} - P_n ) \phi_n^{[ \sigma^\prime ]_{i}} \right\rangle_\mathcal{H} - \left\langle ( P_n^{\sharp} - P_n ) \phi_n^{[ \sigma ]_{i} \cup \lbrace i \rbrace}, ( P_n^{\sharp} - P_n ) \phi_n^{[ \sigma^\prime ]_{i}} \right\rangle_\mathcal{H} \\
    \nonumber
    &\quad \left. - \left\langle ( P_n^{\sharp} - P_n ) \phi_n^{[ \sigma ]_{i}}, ( P_n^{\sharp} - P_n ) \phi_n^{[ \sigma^\prime ]_{i} \cup \lbrace i \rbrace} \right\rangle_\mathcal{H} \right\rbrace \\
    \nonumber
    &= \frac{n}{(d! )^2} \left[ \sum_{\sigma \in \mathcal{S}_d} \lbrace \mathbf{\tilde{\alpha}}_{[ \sigma ]_{i} \cup \lbrace i \rbrace} +  \mathbf{W}_{[ \sigma ]_{i} \cup \lbrace i \rbrace} \mathbf{K}_{[ \sigma ]_{i} \cup \lbrace i \rbrace} \mathbf{\tilde{\beta}}_{[ \sigma ]_{i} \cup \lbrace i \rbrace} - ( \mathbf{\tilde{\alpha}}_{[ \sigma ]_{i}} +  \mathbf{W}_{[ \sigma ]_{i}} \mathbf{K}_{[ \sigma ]_{i}} \mathbf{\tilde{\beta}}_{[ \sigma ]_{i}} ) \rbrace \right]^\intercal \\
    &\quad \times \mathbf{K} \left[ \sum_{\sigma^\prime \in \mathcal{S}_d} \lbrace \mathbf{\tilde{\alpha}}_{[ \sigma^\prime ]_{i} \cup \lbrace i \rbrace} +  \mathbf{W}_{[ \sigma^\prime ]_{i} \cup \lbrace i \rbrace} \mathbf{K}_{[ \sigma^\prime ]_{i} \cup \lbrace i \rbrace} \mathbf{\tilde{\beta}}_{[ \sigma^\prime ]_{i} \cup \lbrace i \rbrace} - ( \mathbf{\tilde{\alpha}}_{[ \sigma^\prime ]_{i}} +  \mathbf{W}_{[ \sigma^\prime ]_{i}} \mathbf{K}_{[ \sigma^\prime ]_{i}} \mathbf{\tilde{\beta}}_{[ \sigma^\prime ]_{i}} ) \rbrace \right].
\end{align}

\section{Additional results} \label{SectionA5}

For $\mathcal{V} \subseteq [d]$, define the parameter $\nu_\cdot^\mathcal{K} (\mathcal{V} ) : \mathcal{M}\rightarrow\mathcal{H}$ as follows:
\begin{align*}
    \nu_P^\mathcal{K} ( \mathcal{V} ) ( x ) 
    &= \int_\mathcal{X} \mathcal{K} ( x,x^\prime ) \mathbb{E}_P \lbrace \mathbb{E}_P ( Y \mid A=1, X ) \mid X_\mathcal{V} = x^\prime_{\mathcal{V}} \rbrace P_{X} ( dx^\prime ).
\end{align*}
\begin{lemma}\label{lem:pd}
    Fix $\mathcal{V} \subseteq [d]$. Under the conditions of Theorem~\ref{Theorem1}, $\nu_\cdot^{\mathcal{K}}(\mathcal{V})$ is pathwise differentiable at any $P\in\mathcal{M}$, and the adjoint of its local parameter is given by
    \begin{align}
        \dot{\nu}_{P}^{\mathcal{K},*} ( h; \mathcal{V} ) ( z ) &= \mathbb{E}_P \lbrace h ( X ) \mid X_\mathcal{V}=x_\mathcal{V} \rbrace \left[ \frac{a}{g_P ( 1 \mid x )} \lbrace y - \mu_P (1,x) \rbrace + \mu_P (1,x) \right] \nonumber \\
        &\quad + [ h ( x ) - \mathbb{E}_P \lbrace h ( X ) \mid X_\mathcal{V}=x_\mathcal{V} \rbrace ] \mathbb{E}_P \lbrace \mathbb{E}_P ( Y \mid A=1, X ) \mid X_\mathcal{V}=x_\mathcal{V} \rbrace \nonumber \\
    &\quad - \mathbb{E}_P [ h ( X )  \mathbb{E}_P \lbrace \mathbb{E}_P ( Y \mid A=1, X ) \mid X_\mathcal{V} \rbrace ]. \label{eq:noAdjoint}
    \end{align}
\end{lemma}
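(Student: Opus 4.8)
Here is how I would prove Lemma~\ref{lem:pd}. The claim has two parts --- pathwise differentiability and the adjoint formula \eqref{eq:noAdjoint} --- and I would obtain both by differentiating $\nu^{\mathcal{K}}(\,\cdot\,;\mathcal{V})$ along an arbitrary quadratic-mean-differentiable submodel, reading off the local parameter, and then verifying that the resulting Gateaux expansion is actually a norm expansion. Write $z=(x,a,y)$ for a generic observation, set $r_{P}(x_{\mathcal{V}}):=\mathbb{E}_{P}\{\mu_{P}(1,X)\mid X_{\mathcal{V}}=x_{\mathcal{V}}\}$, and note that, since $\kappa_{P,X\mid X_{\mathcal{V}}}(x')=\mathbb{E}_{P}\{\mathcal{K}(x',X)\mid X_{\mathcal{V}}\}$ is $X_{\mathcal{V}}$-measurable, iterated expectations give the two equivalent forms $\nu^{\mathcal{K}}(P;\mathcal{V})(x')=\mathbb{E}_{P}[\mathcal{K}(x',X)\,r_{P}(X_{\mathcal{V}})]=\mathbb{E}_{P}[\kappa_{P,X\mid X_{\mathcal{V}}}(x')\,\mu_{P}(1,X)]$ (a well-defined $\mathcal{H}$-valued Bochner integral under \ref{cond:bounded} and \ref{cond:finitesecondmoment}); the first exposes the dependence on the law of $(X_{\mathcal{V}},\,X\mid X_{\mathcal{V}})$, the second the dependence on $\mu_{P}(1,\cdot)$. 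Fix $P\in\mathcal{M}$ and a submodel $\{P_{\epsilon}\}$ with score $s$; since $\mathcal{M}$ is nonparametric, $s$ ranges over $L^{2}_{0}(P)$, and I would organize the computation around its orthogonal components supported on the conditional laws of $X_{\mathcal{V}}$, of $X\mid X_{\mathcal{V}}$, of $A\mid X$ and of $Y\mid A,X$.

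First, I would compute $\frac{d}{d\epsilon}\big|_{0}\nu^{\mathcal{K}}(P_{\epsilon};\mathcal{V})(x')$ by the product rule. Perturbing the outer marginal $P_{X}$ with $r_{P}$ frozen contributes $\mathbb{E}_{P}[\mathcal{K}(x',X)\,r_{P}(X_{\mathcal{V}})\,s(Z)]$; perturbing the conditional law of $X$ given $X_{\mathcal{V}}$ and perturbing $\mu_{P}(1,\cdot)$, both inside $r_{P}$, together contribute $\mathbb{E}_{P}[\mathcal{K}(x',X)\,\dot r(X_{\mathcal{V}})]$, where --- using the standard identity $\frac{d}{d\epsilon}\big|_{0}\mu_{P_{\epsilon}}(1,x)=\mathbb{E}_{P}[\tfrac{A}{g_{P}(1\mid X)}\{Y-\mu_{P}(1,X)\}\,s(Z)\mid X=x]$ (well-defined by \ref{cond:positivity}) and the fact that the augmentation term in the pseudo-outcome $D_{1}(z):=\tfrac{a}{g_{P}(1\mid x)}\{y-\mu_{P}(1,x)\}+\mu_{P}(1,x)$ has mean zero given $X$ --- one finds $\dot r(x_{\mathcal{V}})=\mathbb{E}_{P}[\{D_{1}(Z)-r_{P}(X_{\mathcal{V}})\}\,s(Z)\mid X_{\mathcal{V}}=x_{\mathcal{V}}]$. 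Replacing $\mathcal{K}(x',X)$ by $\kappa_{P,X\mid X_{\mathcal{V}}}(x')$ in the second contribution (valid since $\dot r(X_{\mathcal{V}})$ is $X_{\mathcal{V}}$-measurable) and adding, one obtains $\frac{d}{d\epsilon}\big|_{0}\nu^{\mathcal{K}}(P_{\epsilon};\mathcal{V})(x')=\mathbb{E}_{P}[\phi_{P}^{\mathcal{V}}(Z)(x')\,s(Z)]$ with
\begin{align*}
    \phi_{P}^{\mathcal{V}}(z)(x')&:=\kappa_{P,X\mid X_{\mathcal{V}}=x_{\mathcal{V}}}(x')\,D_{1}(z)+\{\mathcal{K}(x',x)-\kappa_{P,X\mid X_{\mathcal{V}}=x_{\mathcal{V}}}(x')\}\,r_{P}(x_{\mathcal{V}}).
\end{align*}
Conditions \ref{cond:bounded}--\ref{cond:finitesecondmoment} give $\phi_{P}^{\mathcal{V}}\in L^{2}(P;\mathcal{H})$ (bound $\|\mathcal{K}(\cdot,x)\|_{\mathcal{H}}$ and $\|\kappa_{P,X\mid X_{\mathcal{V}}}(\cdot)\|_{\mathcal{H}}$ by $\sup_{x}\mathcal{K}(x,x)^{1/2}$, bound $|r_{P}|$ using \ref{cond:finitesecondmoment}, and note $D_{1}\in L^{2}(P)$ using $1/g_{P}\le 1/c$), so $s\mapsto\dot{\nu}_{P}^{\mathcal{K}}(s;\mathcal{V}):=\mathbb{E}_{P}[\phi_{P}^{\mathcal{V}}(Z)(\cdot)\,s(Z)]$ (a Bochner integral) is a bounded linear operator $L^{2}_{0}(P)\to\mathcal{H}$, the candidate local parameter.

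Next, I would read off the adjoint. Pulling the $\mathcal{H}$-inner product through the Bochner integral gives $\langle\dot{\nu}_{P}^{\mathcal{K}}(s;\mathcal{V}),h\rangle_{\mathcal{H}}=\mathbb{E}_{P}[s(Z)\,\langle\phi_{P}^{\mathcal{V}}(Z)(\cdot),h\rangle_{\mathcal{H}}]$ for $h\in\mathcal{H}$; since the adjoint must take values in $\dot{\mathcal{M}}_{P}=L^{2}_{0}(P)$ and $\mathbb{E}_{P}[s(Z)]=0$, the integrand may be centered without changing the inner product, so $\dot{\nu}_{P}^{\mathcal{K},*}(h;\mathcal{V})(z)=\langle\phi_{P}^{\mathcal{V}}(z)(\cdot),h\rangle_{\mathcal{H}}-\mathbb{E}_{P}[\langle\phi_{P}^{\mathcal{V}}(Z)(\cdot),h\rangle_{\mathcal{H}}]$. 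Expanding with the reproducing property $\langle\mathcal{K}(\cdot,x),h\rangle_{\mathcal{H}}=h(x)$ and the conditional-mean-embedding identity $\langle\kappa_{P,X\mid X_{\mathcal{V}}=x_{\mathcal{V}}},h\rangle_{\mathcal{H}}=\mathbb{E}_{P}\{h(X)\mid X_{\mathcal{V}}=x_{\mathcal{V}}\}$ turns the uncentered term into the first two lines of \eqref{eq:noAdjoint}, while the centering constant simplifies --- via $\mathbb{E}_{P}\{D_{1}(Z)\mid X_{\mathcal{V}}\}=r_{P}(X_{\mathcal{V}})$ and $\mathbb{E}_{P}\{\mathcal{K}(\cdot,X)\mid X_{\mathcal{V}}\}=\kappa_{P,X\mid X_{\mathcal{V}}}$ --- to $\langle\nu^{\mathcal{K}}(P;\mathcal{V}),h\rangle_{\mathcal{H}}=\mathbb{E}_{P}[h(X)\,r_{P}(X_{\mathcal{V}})]$, which is exactly the third line. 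This yields \eqref{eq:noAdjoint}.

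The main obstacle is upgrading the Gateaux expansion to genuine pathwise differentiability, i.e.\ $\|\nu^{\mathcal{K}}(P_{\epsilon};\mathcal{V})-\nu^{\mathcal{K}}(P;\mathcal{V})-\epsilon\,\dot{\nu}_{P}^{\mathcal{K}}(s;\mathcal{V})\|_{\mathcal{H}}=o(\epsilon)$ along every quadratic-mean-differentiable submodel. The plan is to write out the exact increment of $\nu^{\mathcal{K}}(\,\cdot\,;\mathcal{V})$, subtract the linear term, and bound the second-order remainder, which is a sum of products of the increments of $P_{X}$, of the conditional law of $X\mid X_{\mathcal{V}}$, of $\mu_{P_{\epsilon}}(1,\cdot)$, and of $g_{P_{\epsilon}}(1\mid\cdot)$: \ref{cond:positivity} keeps $1/g_{P_{\epsilon}}(1\mid\cdot)$ bounded for small $\epsilon$, \ref{cond:finitesecondmoment} dominates the $Y$-dependent factors, and \ref{cond:bounded} converts pointwise $o(\epsilon)$ control into $\mathcal{H}$-norm control --- kernel boundedness makes $\|\mathcal{K}(\cdot,x')\|_{\mathcal{H}}^{2}=\mathcal{K}(x',x')$ and $\|\kappa_{P_{\epsilon},X\mid X_{\mathcal{V}}}(\cdot)\|_{\mathcal{H}}$ uniformly bounded, so the relevant $\mathcal{H}$-valued Bochner integrals converge and an integrand that is $o(\epsilon)$ pointwise with an integrable envelope is $o(\epsilon)$ in $\mathcal{H}$. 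Since $\nu^{\mathcal{K}}(\,\cdot\,;\mathcal{V})$ differs from the CATE embedding analyzed by \cite{Luedtke2024} only in using $\mu_{P}(1,\cdot)$ in place of $\mu_{P}(1,\cdot)-\mu_{P}(0,\cdot)$, I would invoke the remainder estimates there rather than reproduce the analysis. Everything preceding it --- the Gateaux computation and the adjoint identification --- is essentially bookkeeping with iterated conditional expectations and the reproducing kernel property; the delicate point is controlling that second-order remainder in the Hilbert norm uniformly over the evaluation point in the presence of the inverse-propensity weight.
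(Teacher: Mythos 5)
Your derivation is correct --- the Gateaux computation, the identification of $\phi_{P}^{\mathcal{V}}$, and the adjoint formula all check out and reproduce \eqref{eq:noAdjoint} exactly --- but you take a genuinely different route from the paper. The paper never computes a Gateaux derivative directly: it factors $\nu^{\mathcal{K}}(\,\cdot\,;\mathcal{V})$ as a composition of six primitive maps (conditional mean, fixing the binary argument, two domain lifts, a second conditional mean, and the kernel embedding), cites \cite{luedtke2024simplifying} for the total pathwise differentiability of each primitive together with the adjoint of its differential, and then obtains both the differentiability of the composite and the formula \eqref{eq:noAdjoint} from the chain rule (Theorem~1 and Algorithm~2 of that work). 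The payoff of the paper's approach is precisely at the step you flag as the main obstacle: upgrading the first-order expansion from a pointwise/Gateaux statement to an $o(\epsilon)$ expansion in $\mathcal{H}$-norm along every quadratic-mean-differentiable submodel. In the compositional framework that control is inherited automatically from the already-verified remainder bounds for each primitive map, whereas in your plan it must be established for the full parameter at once; your proposal to transfer the remainder estimates from the CATE embedding in \cite{Luedtke2024} (the parameter differs only in using $\mu_P(1,\cdot)$ in place of the contrast $\mu_P(1,\cdot)-\mu_P(0,\cdot)$) is plausible and in the same spirit as the paper's reliance on cited results, but it is the one step of your argument that is sketched rather than executed. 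Conversely, your direct computation makes the origin of each term in \eqref{eq:noAdjoint} --- the inverse-propensity-weighted residual from perturbing $\mu_P(1,\cdot)$, the $h(x)-\mathbb{E}_P\{h(X)\mid X_{\mathcal{V}}=x_{\mathcal{V}}\}$ term from perturbing the conditional law of $X$ given $X_{\mathcal{V}}$, and the centering term $\langle \nu^{\mathcal{K}}(P;\mathcal{V}),h\rangle_{\mathcal{H}}$ --- more transparent than the paper's algorithmic composition of adjoints, and it does not require the reader to unpack Algorithm~2 of \cite{luedtke2024simplifying}.
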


\begin{theorem}[Consistency against fixed alternatives] \label{Theorem5}

Assume that $\hat{\gamma}_{\omega}^\mathcal{K} \xrightarrow{P} \gamma_{\omega}^{\mathcal{K}}(P_0)$ as $n \to \infty$ under any fixed distribution $P_0$. Then the test which rejects $H_0$ if and only if
\[
0 \notin \mathcal{C}_n^\omega(\hat{\xi}_{\omega})
\]
is consistent against fixed alternatives, i.e., if $\gamma_{\omega}^{\mathcal{K}}(P_0) \ne 0$, then
\[
\lim_{n \to \infty} P_{0} \left( 0 \notin \mathcal{C}_n^\omega(\hat{\xi}_{\omega}) \right) = 1.
\]
\end{theorem}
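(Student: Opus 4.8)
The plan is to rewrite the rejection event in terms of the test statistic $\|\hat{\gamma}_\omega^\mathcal{K}\|_\mathcal{H}^2$ and then show that this statistic eventually dominates the threshold $\hat{\xi}_{\omega,n}/n$. Since $\mathcal{C}_n^\omega(\hat{\xi}_{\omega,n})$ is the closed $\mathcal{H}$-ball of squared radius $\hat{\xi}_{\omega,n}/n$ centered at $\hat{\gamma}_\omega^\mathcal{K}$, one has $0\notin\mathcal{C}_n^\omega(\hat{\xi}_{\omega,n})$ if and only if $\|\hat{\gamma}_\omega^\mathcal{K}\|_\mathcal{H}^2>\hat{\xi}_{\omega,n}/n$, equivalently $n\|\hat{\gamma}_\omega^\mathcal{K}\|_\mathcal{H}^2>\hat{\xi}_{\omega,n}$. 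So it suffices to prove $\mathbb{P}_0\{n\|\hat{\gamma}_\omega^\mathcal{K}\|_\mathcal{H}^2>\hat{\xi}_{\omega,n}\}\to 1$ whenever $c\coloneqq\|\gamma_\omega^\mathcal{K}(P_0)\|_\mathcal{H}>0$, which is exactly the content of the fixed-alternative assumption $\gamma_\omega^\mathcal{K}(P_0)\neq 0$ (a nonzero element of $\mathcal{H}$ has strictly positive norm).

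For the test statistic, the map $h\mapsto\|h\|_\mathcal{H}^2$ is continuous on $\mathcal{H}$, so the hypothesized consistency $\hat{\gamma}_\omega^\mathcal{K}\xrightarrow{P}\gamma_\omega^\mathcal{K}(P_0)$ and the continuous mapping theorem yield $\|\hat{\gamma}_\omega^\mathcal{K}\|_\mathcal{H}^2\xrightarrow{P}c^2>0$. Hence $n\|\hat{\gamma}_\omega^\mathcal{K}\|_\mathcal{H}^2\xrightarrow{P}\infty$: on the event $\{\|\hat{\gamma}_\omega^\mathcal{K}\|_\mathcal{H}^2>c^2/2\}$, whose probability tends to one, the statistic is at least $nc^2/2$, which diverges. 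For the threshold I would use that $\hat{\xi}_{\omega,n}=O_P(1)$; for the bootstrap estimator described after Theorem~\ref{Theorem3} this holds because the conditional law of $\|\mathbb{H}_{\omega,n}^\sharp\|_\mathcal{H}^2$ given the data is asymptotically tight, its weak limit $\|\mathbb{H}_\omega\|_\mathcal{H}^2$ being an almost surely finite random variable, so its $(1-\alpha)$-quantile is bounded in probability. Given $\epsilon>0$, choose $M_\epsilon$ with $\mathbb{P}_0(\hat{\xi}_{\omega,n}\le M_\epsilon)\ge 1-\epsilon$ for all large $n$; then
\begin{align*}
\mathbb{P}_0\{n\|\hat{\gamma}_\omega^\mathcal{K}\|_\mathcal{H}^2>\hat{\xi}_{\omega,n}\}\ \ge\ \mathbb{P}_0\{n\|\hat{\gamma}_\omega^\mathcal{K}\|_\mathcal{H}^2>M_\epsilon\}-\mathbb{P}_0\{\hat{\xi}_{\omega,n}>M_\epsilon\}\ \longrightarrow\ 1-\epsilon,
\end{align*}
where the first term tends to one by the divergence established above. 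Letting $\epsilon\downarrow 0$ gives $\liminf_n\mathbb{P}_0\{0\notin\mathcal{C}_n^\omega(\hat{\xi}_{\omega,n})\}=1$, which is the claim.

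The only delicate point, and the one I would spell out carefully, is the control of $\hat{\xi}_{\omega,n}$: the stated hypothesis bounds $\hat{\gamma}_\omega^\mathcal{K}$ but says nothing about the threshold, and the convergence $\hat{\xi}_{\omega,n}\to\xi_{\omega,1-\alpha}$ invoked in Theorem~\ref{Theorem3} was justified in the null regime. I would therefore either (i) verify that under the nuisance-estimation conditions of Theorem~\ref{Theorem2} the bootstrap process $\mathbb{H}_{\omega,n}^\sharp$ remains conditionally tight under a fixed alternative, so that its $(1-\alpha)$-quantile stays $O_P(1)$, or (ii) add $\hat{\xi}_{\omega,n}=O_P(1)$ as an explicit hypothesis of the theorem. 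Everything else is a routine combination of the continuous mapping theorem and a union bound.
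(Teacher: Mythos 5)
Your proof is correct and follows essentially the same route as the paper's: both arguments combine the assumed consistency of $\hat{\gamma}_\omega^\mathcal{K}$ with the fact that the radius $(\hat{\xi}_{\omega,n}/n)^{1/2}$ of the confidence ball vanishes, so that the ball eventually lies in a small neighborhood of $\gamma_\omega^\mathcal{K}(P_0)$ that excludes the origin. You are in fact more careful than the paper, which simply asserts that the diameter of $\mathcal{C}_n^\omega(\hat{\xi}_{\omega,n})$ is $o_P(1)$ because the set ``shrinks at a rate depending on $n$'' without flagging that this requires controlling $\hat{\xi}_{\omega,n}$ (e.g.\ $\hat{\xi}_{\omega,n}=o_P(n)$) --- exactly the delicate point you identify and propose to resolve either by verifying tightness of the bootstrap threshold under the alternative or by adding it as an explicit hypothesis.
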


\begin{theorem}[Local power of the hypothesis test of no importance] \label{Theorem4}
    Let $\mathcal{K}$ be a kernel and $h_0 \coloneqq \gamma_\omega^{\mathcal{K}}(P_0) \in \mathcal{H}$. Suppose $\gamma_\omega^\mathcal{K}$ is pathwise differentiable at $P_0$ with influence function $\phi_0^{\omega}$ and $\widehat{\xi}_{\omega,n}$ is a consistent estimator of $\xi_{\omega, 1-\alpha}$, the $\left( 1-\alpha \right)$-quantile of $\| \mathbb{H}_\omega \|_\mathcal{H}^2$. Fix $\left\lbrace P_\epsilon \colon \epsilon \in \left[ 0,\delta \right) \right\rbrace \in \mathscr{P} \left( P_0, \mathcal{M}, s \right)$ such that $\| \dot{\gamma}_{0,\omega}^\mathcal{K} \left( s \right) \|_\mathcal{H} > 0$. Then 
    \begin{equation*}
        P_{\epsilon = n^{-1/2}}^n \left\lbrace h_0 \notin \mathcal{C}_n^\omega \left( \hat{\xi}_{\omega}\right) \right\rbrace \xrightarrow{n \rightarrow \infty} \emph{Pr} \left\lbrace \| \mathbb{H}_\omega + \dot{\gamma}_{0,\omega}^\mathcal{K} \left( s \right) \|_\mathcal{H}^2 > \xi_{\omega, 1-\alpha} \right\rbrace > \alpha.
    \end{equation*}
    Furthermore, $h_n \coloneqq \gamma_\omega^\mathcal{K} \left( P_{\epsilon = n^{-1/2}}^n \right)$ is an $n^{1/2}$-rate local alternative, i.e. $\| h_n - h_0 \|_\mathcal{H} = \mathcal{O} \left( n^{-1/2} \right)$.
\end{theorem}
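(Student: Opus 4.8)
The plan is to transport the known behavior of the test statistic $n\|\hat{\gamma}_\omega^\mathcal{K}-h_0\|_\mathcal{H}^2$ under $P_0^n$ to the contiguous local alternative $P_{\epsilon=n^{-1/2}}^n$ via Le Cam's third lemma, and then to use Anderson's inequality to certify that the shifted Gaussian-norm limit exceeds the threshold with probability strictly above $\alpha$.

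First I would note that $h_0\notin\mathcal{C}_n^\omega(\hat{\xi}_{\omega,n})$ is exactly the event $n\|\hat{\gamma}_\omega^\mathcal{K}-h_0\|_\mathcal{H}^2>\hat{\xi}_{\omega,n}$, so everything reduces to the joint large-sample behavior of $(n\|\hat{\gamma}_\omega^\mathcal{K}-h_0\|_\mathcal{H}^2,\hat{\xi}_{\omega,n})$ under the alternative. Under $P_0^n$, the asymptotic linearity of Theorem~\ref{Theorem2}(i) gives $n^{1/2}[\hat{\gamma}_\omega^\mathcal{K}-h_0]=n^{-1/2}\sum_{i=1}^n\phi_0^\omega(Z_i)+o_P(1)$, while quadratic mean differentiability of $\{P_\epsilon\}$ gives the expansion of the log-likelihood ratio $\Lambda_n\coloneqq\log(dP_{n^{-1/2}}^n/dP_0^n)=n^{-1/2}\sum_{i=1}^n s(Z_i)-\tfrac12 P_0 s^2+o_P(1)$. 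A joint $(\mathcal{H}\times\mathbb{R})$-valued central limit theorem --- available since $\phi_0^\omega\in L^2(P_0;\mathcal{H})$ under the conditions of Theorem~\ref{Theorem2} and $s$ is bounded --- then shows that $(n^{1/2}[\hat{\gamma}_\omega^\mathcal{K}-h_0],\Lambda_n)$ converges weakly under $P_0^n$ to a jointly Gaussian limit, with cross-covariance equal, paired against any $h\in\mathcal{H}$, to $E_0[\langle\phi_0^\omega(Z),h\rangle_\mathcal{H}s(Z)]=\langle\dot{\gamma}_{0,\omega}^\mathcal{K}(s),h\rangle_\mathcal{H}$, the last identity by the adjoint characterization of the EIF. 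Le Cam's third lemma, in its Hilbert-valued form (cf.~\cite{Luedtke2024}), then yields $n^{1/2}[\hat{\gamma}_\omega^\mathcal{K}-h_0]\rightsquigarrow\mathbb{H}_\omega+\dot{\gamma}_{0,\omega}^\mathcal{K}(s)$ under $P_{n^{-1/2}}^n$. Since $\hat{\xi}_{\omega,n}\to\xi_{\omega,1-\alpha}$ in probability under $P_0^n$ and, by contiguity, also under $P_{n^{-1/2}}^n$, the continuous mapping theorem for $h\mapsto\|h\|_\mathcal{H}^2$ and Slutsky's lemma give $n\|\hat{\gamma}_\omega^\mathcal{K}-h_0\|_\mathcal{H}^2\rightsquigarrow\|\mathbb{H}_\omega+\dot{\gamma}_{0,\omega}^\mathcal{K}(s)\|_\mathcal{H}^2$ under the alternative; as $\xi_{\omega,1-\alpha}$ is a continuity point of this limit law (absolutely continuous when the covariance operator of $\mathbb{H}_\omega$ is nonzero, and equal to the constant $\|\dot{\gamma}_{0,\omega}^\mathcal{K}(s)\|_\mathcal{H}^2>0=\xi_{\omega,1-\alpha}$ otherwise), the portmanteau lemma produces the claimed limit $\Pr\{\|\mathbb{H}_\omega+\dot{\gamma}_{0,\omega}^\mathcal{K}(s)\|_\mathcal{H}^2>\xi_{\omega,1-\alpha}\}$.

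For the strict lower bound, write $c\coloneqq\dot{\gamma}_{0,\omega}^\mathcal{K}(s)\ne 0$, let $\mu$ be the law of $\mathbb{H}_\omega$, and set $K\coloneqq\{h\in\mathcal{H}:\|h\|_\mathcal{H}^2\le\xi_{\omega,1-\alpha}\}$, which is convex, symmetric, and satisfies $\mu(K)\ge 1-\alpha$. Anderson's inequality gives $\mu(K+c)\le\mu(K)$, hence $\Pr\{\|\mathbb{H}_\omega+c\|_\mathcal{H}^2>\xi_{\omega,1-\alpha}\}\ge\Pr\{\|\mathbb{H}_\omega\|_\mathcal{H}^2>\xi_{\omega,1-\alpha}\}=\alpha$ (the equality again using that $\xi_{\omega,1-\alpha}$ is a continuity point, as underlies the exactness in Theorem~\ref{Theorem3}). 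To get strict inequality I would show $c$ lies in the Cameron--Martin space of $\mathbb{H}_\omega$: writing $T\colon L^2(P_0)\to\mathcal{H}$, $Ts'\coloneqq E_0[s'(Z)\phi_0^\omega(Z)]$, which is bounded, one has $c=Ts$ and covariance operator $\Sigma=TT^*$, so the operator range-inclusion lemma gives $c\in\mathrm{ran}(T)\subseteq\mathrm{ran}(\Sigma^{1/2})$. With $c\ne 0$ in the Cameron--Martin space, the strict version of Anderson's inequality yields $\mu(K+c)<\mu(K)$ and hence the desired $\Pr\{\|\mathbb{H}_\omega+c\|_\mathcal{H}^2>\xi_{\omega,1-\alpha}\}>\alpha$ (in the degenerate covariance case this is immediate since then $\|\mathbb{H}_\omega+c\|_\mathcal{H}^2=\|c\|_\mathcal{H}^2>0=\xi_{\omega,1-\alpha}$ a.s.). The final assertion that $h_n=\gamma_\omega^\mathcal{K}(P_{n^{-1/2}})$ is an $n^{1/2}$-rate local alternative is immediate from pathwise differentiability (Theorem~\ref{Theorem1}): $\|h_n-h_0\|_\mathcal{H}=\|n^{-1/2}\dot{\gamma}_{0,\omega}^\mathcal{K}(s)\|_\mathcal{H}+o(n^{-1/2})=\mathcal{O}(n^{-1/2})$.

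The hard part will be the passage from the weak ($\ge\alpha$) to the strict ($>\alpha$) form of Anderson's inequality: this requires certifying that the perturbation $c=\dot{\gamma}_{0,\omega}^\mathcal{K}(s)$ genuinely lands in the Cameron--Martin space of $\mathbb{H}_\omega$ (via the range-inclusion argument above) and separately dispatching the degenerate case of a trivial covariance operator. A secondary technical hurdle is that the version of Le Cam's third lemma invoked in the first step must be the Hilbert-valued one, whose applicability rests on the joint central limit theorem for the pair (efficient influence function, score) and the moment conditions inherited from Theorem~\ref{Theorem2}.
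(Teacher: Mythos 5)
Your argument is correct, but it takes a genuinely different route from the paper: the paper's entire proof of this theorem is a one-line citation to Theorem~6 of \cite{Luedtke2024}, whereas you reconstruct the underlying argument from scratch. Your reconstruction hits all the essential points that such a general result must rest on: reducing the rejection event to $n\|\hat{\gamma}_\omega^{\mathcal K}-h_0\|_{\mathcal H}^2>\hat{\xi}_{\omega,n}$, the joint $(\mathcal H\times\mathbb{R})$-valued CLT for the linearized estimator and the LAN log-likelihood ratio, identification of the shift $\dot{\gamma}_{0,\omega}^{\mathcal K}(s)$ through the adjoint/EIF characterization, the Hilbert-valued Le Cam third lemma plus contiguity for the threshold, and---for the strict inequality $>\alpha$---Anderson's inequality sharpened by showing $c=Ts\in\mathrm{ran}(T)\subseteq\mathrm{ran}(\Sigma^{1/2})$ via the range-inclusion lemma, which is exactly the delicate step a reader of the paper would have to chase into the cited reference. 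Two small remarks: (i) the degenerate-covariance case you dispatch separately is actually vacuous under the hypothesis $\|\dot{\gamma}_{0,\omega}^{\mathcal K}(s)\|_{\mathcal H}>0$, since $\Sigma=0$ forces $\phi_0^\omega=0$ $P_0$-a.s.\ and hence $c=E_0[s(Z)\phi_0^\omega(Z)]=0$, so your parenthetical claim that $\|c\|^2>0$ there is internally inconsistent but harmless; (ii) when invoking the strict form of Anderson's inequality you should also record that $0<\mu(K)<1$, which holds because $\mu(K)\ge 1-\alpha>0$ and, $\Sigma$ being nonzero, $\|\mathbb{H}_\omega\|_{\mathcal H}^2$ has unbounded support. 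The trade-off is the usual one: the paper's citation is economical but opaque, while your proof is self-contained, makes the source of the strict power gain explicit, and verifies that the moment conditions inherited from Theorem~\ref{Theorem2} suffice for the joint CLT.
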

Theorem \ref{Theorem4} establishes that we can detect the class of local alternatives defined via smooth parametric submodels of $\mathcal{M}$. These models are characterized by a first-order direction, given by the value of the local parameter $\dot{\gamma}_{0,\omega}^\mathcal{K} \left( s \right)$, which remains fixed as the sample size increases. It is important to note that our test does not possess non-trivial asymptotic power against local alternatives whose direction varies with the sample size. For a more detailed discussion on this point, see \cite{Luedtke2024}.

\section{Proofs} \label{SectionA6}

\begin{proof}[Proof of Lemma~\ref{lem:pd}]
    We will write $\nu_\cdot^{\mathcal{K}}(\mathcal{V})$ as a composition of smooth maps and then apply a chain rule to establish its pathwise differentiability at $P$. Concretely, we will exhibit totally pathwise differentiable maps $\theta_j$, $j\in [6]$, such that
    \begin{align}
        \nu^{\mathcal{K}}(P')=\theta_6(P',\theta_5(P',\theta_4(P',\theta_3(P',\theta_2(P',\theta_1(P',0)))))) \ \textnormal{ for all } P'\in\mathcal{M}. \label{eq:nuComp}
    \end{align}
    This will allow us to apply Theorem~1 from \cite{luedtke2025simplifying} to establish the pathwise differentiability of the composition $\nu_\cdot^{\mathcal{K}}(\mathcal{V})$ and obtain an expression for the adjoint of its local parameter.
    
    For any probability measure $Q$, let $\mathcal{B}(Q)\coloneqq\{u\in L^\infty(P_{\mathcal{X}_\mathcal{V}}) : \|u\|_{L^\infty(P_{\mathcal{X}_\mathcal{V}})}\le m\}\subset L^2(P_{X_{\mathcal{V}}})$, with $m\coloneqq\sup_{P\in\mathcal{M}}\|\mathbb{E}_P[Y^2\mid (A,X)=\cdot\,]^{1/2}\|_{L^\infty(P_{A,X})}$ the quantity that \ref{cond:finitesecondmoment} shows is finite. Define the maps satisfying \eqref{eq:nuComp} as follows:
    \begin{align*}
        &\theta_1 : \mathcal{M}\times \{0\}\rightarrow \mathcal{B}(P_{A,X}),\; (P',0)\mapsto E_{P'}[Y\mid (A,X)=\cdot] \tag{conditional mean} \\
        &\theta_2 : \mathcal{M}\times \mathcal{B}(P_{A,X})\rightarrow \mathcal{B}(P_X),\; (P',u)\mapsto u(1,x) \tag{fix binary argument} \\
        &\theta_3 : \mathcal{M}\times \mathcal{B}(P_X)\rightarrow \mathcal{B}(P),\; (P',u)\mapsto [(x,a,y)\mapsto u(x)] \tag{lift domain} \\
        &\theta_4 : \mathcal{M}\times \mathcal{B}(P)\rightarrow \mathcal{B}(P_{X_{\mathcal{V}}}),\; (P',u)\mapsto E_{P'}\left[u(X,A,Y)\mid X_{\mathcal{V}}=\cdot\, \right] \tag{conditional mean} \\
        &\theta_5 : \mathcal{M}\times \mathcal{B}(P_{X_{\mathcal{V}}})\rightarrow \mathcal{B}(P_X),\; (P',u)\mapsto [x\mapsto u(x_{\mathcal{V}})] \tag{lift domain} \\
        &\theta_6 : \mathcal{M}\times \mathcal{B}(P_{X})\rightarrow \mathcal{H},\; (P',u)\mapsto {\textstyle\int} \mathcal{K}(\,\cdot\,,x)\,u(x)\, P_{X}'(dx). \tag{kernel embedding}
    \end{align*}
    Note that, by \ref{cond:finitesecondmoment} and Jensen's inequality, $\mathcal{B}(P_X)$ is indeed a codomain of $\theta_1$; similarly, since the domains for $u$ of all of the other maps consist of functions a.s. bounded by $m$, the outputs of these maps are also a.s. bounded by $m$.
    
    Under the conditions of Theorem~\ref{Theorem1}, each of the maps in \eqref{eq:nuComp} is totally pathwise differentiable at a generic $(P',u')$ in its domain with $P'=P$. Here we cite the relevant results. For $\theta_1$, we leverage \ref{cond:finitesecondmoment} and Example~5 in \cite{Luedtke2024} to show that $\theta_1(\,\cdot\,,0)$ is pathwise differentiable; Appendix~C.3.1 in \cite{luedtke2025simplifying} then shows $\theta_1$ is totally pathwise differentiable at $(P,0)$, and its differential operator has adjoint
    \begin{align*}
        \dot{\theta}_{1,P,0}^*(w)&= \left(z\mapsto w(x)[y-\mu_P(a,x)],0\right).
    \end{align*}
    For $\theta_2$, the results in Appendix~C.4.11 of \cite{luedtke2025simplifying} and the fact that \ref{cond:positivity} holds show $\theta_2$ is totally pathwise differentiable at $(P,u)$ with
    \begin{align*}
        \dot{\theta}_{2,P,0}^*(w)&= \left(0,(a,x)\mapsto \frac{a}{g_P(a|x)}\,w(x)\right).
    \end{align*}
    For $\theta_3$, the results in Appendix~C.4.10 of  \cite{luedtke2025simplifying} yield total pathwise differentiability with
    \begin{align*}
        \dot{\theta}_{3,P,u}^*(w)&= \left(0,E_P\left[w(Z)\mid X=\cdot\,\right]\right).
    \end{align*}
    For $\theta_4$, we again use Lemma~S8 from \cite{luedtke2025simplifying} with $r=1$ to show $\theta_4$ is totally pathwise differentiable at $(P,u)$ with
    \begin{align*}
        \dot{\theta}_{4,P,u}^*(w)&= \left(z\mapsto [u(z)-\theta_4(P,u)(x_{\mathcal{V}})]w(x_{\mathcal{V}}),z\mapsto w(x)\right).
    \end{align*}
    For $\theta_5$, we again apply the results in Appendix~C.4.10 of \cite{luedtke2025simplifying} to find total pathwise differentiability with
    \begin{align*}
        \dot{\theta}_{5,P,u}^*(w)&= \left(0,E_P\left[w(X)\mid X_{\mathcal{V}}=\cdot\,\right]\right).
    \end{align*}
    Finally, for $\theta_6$, \ref{cond:bounded} and Lemma~S11 yield total pathwise differentiability with
    \begin{align*}
         \dot{\theta}_{6,P,u}^*(w)&= \left(wu-E_P[u(X_{\mathcal{V}})w(X_{\mathcal{V}})],w\right).
    \end{align*}

    The proof is at hand. Eq.~\ref{eq:nuComp} shows $\nu_\cdot^{\mathcal{K}}(\mathcal{V})$ writes as in the composition in Algorithm~1 from \cite{luedtke2025simplifying}, with $\mathrm{pa}(1)=\emptyset$ and $\mathrm{pa}(j+1)=\{j\}$, $j\in [5]$. Since $\theta_j$, $j\in [6]$, are all totally pathwise differentiable, Theorem~1 from that work shows $\nu_\cdot^{\mathcal{K}}(\mathcal{V})$ is totally pathwise differentiable; that result further shows that Algorithm~2 from that work provides the form of the adjoint. Plugging the forms of the $\theta_j$ maps and the adjoints of their differential operators into that algorithm shows $\dot{\nu}_{P}^{\mathcal{K},*} ( h; \mathcal{V} )$ writes as in \eqref{eq:noAdjoint}.
\end{proof}

\begin{proof}[Proof of Theorem \ref{Theorem1}]

It follows from Lemma~\ref{lem:pd} that $\nu_\cdot^\mathcal{K} ( \mathcal{V} )$ is pathwise differentiable with the adjoint operator $\dot{\nu}^{\mathcal{K},*}_{P}$. We wish to apply Theorem~1 in \cite{Luedtke2024} to show the EIF takes the form 
\begin{align*}
    \phi_{P}^\mathcal{V} ( z ) ( x^\prime ) &= \dot{\nu}^{\mathcal{K},*}_{P} ( \mathcal{K} ( \cdot, x^\prime ); \mathcal{V} ) ( z ) \\
    &= \mathbb{E}_P \lbrace \mathcal{K} ( x^\prime, X ) \mid X_\mathcal{V}=x_\mathcal{V} \rbrace \left[ \frac{a}{g_P ( 1 \mid x )} \lbrace y - \mu_P (1,x) \rbrace + \mu_P (1,x) \right] \\
    &\quad + [ \mathcal{K} ( x^\prime, x ) - \mathbb{E}_P \lbrace \mathcal{K} ( x^\prime, X ) \mid X_\mathcal{V}=x_\mathcal{V} \rbrace ] \mathbb{E}_P \lbrace \mu_P (1,X) \mid X_\mathcal{V}=x_\mathcal{V} \rbrace \\
    &\quad - \underbrace{ \mathbb{E}_P [ \mathcal{K} (x^\prime, X )  \mathbb{E}_P \lbrace \mu_P (1, X) \mid X_\mathcal{V} \rbrace ] }_{\nu_P^\mathcal{K} ( \mathcal{V} ) ( x^\prime )}.
\end{align*}
The corresponding results for the parameter $\gamma_\omega^\mathcal{K} (P)$ follow directly from linearity.

To apply Theorem~1 in \cite{Luedtke2024}, we need to show that $\| \phi_{P}^\mathcal{V} \|_{L^2(P; \mathcal{H})}^2 \coloneqq \mathbb{E}_{P} \left[ \| \phi_{P}^\mathcal{V}(Z) \|_{\mathcal{H}}^2 \right]< \infty$. We have the following expansion for the squared RKHS norm:
\begin{align*}
\| \phi_{P}^\mathcal{V}(z) \|_{\mathcal{H}}^2 &= \lbrace \psi_P (z) - \nu_P (\mathcal{V})(x) \rbrace^2 \cdot \| \kappa_{P,\mathcal{V}} (X_{\mathcal{V}}) \|_{\mathcal{H}}^2 + \nu_P (\mathcal{V})(x)^2 \cdot \mathcal{K}(x, x) + \| \nu_P^\mathcal{K} ( \mathcal{V} ) \|_{\mathcal{H}}^2 \\
&\quad + 2 \lbrace \psi_P (z) - \nu_P ( \mathcal{V})(x) \rbrace \nu_P (\mathcal{V})(x) \langle \kappa_{P,\mathcal{V}} (X_{\mathcal{V}}) (\cdot), \mathcal{K}(\cdot, x) \rangle_{\mathcal{H}} \\
&\quad - 2 \lbrace \psi_P (z) - \nu_P ( \mathcal{V})(x) \rbrace \langle \kappa_{P,\mathcal{V}} (X_{\mathcal{V}}) (\cdot), \nu_P^\mathcal{K} ( \mathcal{V} ) (\cdot) \rangle_{\mathcal{H}} \\
&\quad - 2 \nu_P (\mathcal{V})(x) \langle \mathcal{K}(\cdot, x), \nu_P^\mathcal{K} ( \mathcal{V} ) (\cdot) \rangle_{\mathcal{H}}. \\
&= \lbrace \psi_P (z) - \nu_P ( \mathcal{V})(x) \rbrace^2 \cdot \| \kappa_{P,\mathcal{V}} (X_{\mathcal{V}}) \|_{\mathcal{H}}^2 + \nu_P ( \mathcal{V})(x)^2 \cdot \mathcal{K}(x, x) + \| \nu_P^\mathcal{K} ( \mathcal{V} ) \|_{\mathcal{H}}^2 \\
&\quad + 2 \lbrace \psi_P (z) - \nu_P ( \mathcal{V})(x) \rbrace \nu_P (\mathcal{V})(x) \kappa_{P,\mathcal{V}} (X_{\mathcal{V}}) (x) \\
&\quad - 2 \lbrace \psi_P(z) - \nu_P ( \mathcal{V})(x) \rbrace \cdot \mathbb{E}_{P}\left[ \nu_P (\mathcal{V})(X) \cdot \mathbb{E}_P \lbrace \mathcal{K}(X, X^\prime) \mid X_\mathcal{V} = x_\mathcal{V} \rbrace \right] \\
&\quad - 2 \nu_P (\mathcal{V})(x) \nu_P^\mathcal{K} ( \mathcal{V} ) (x).
\end{align*}
We wish to compute the expected squared RKHS norm:
\begin{align*}
\| \phi_{P}^\mathcal{V} \|_{L^2(P; \mathcal{H})}^2 &= 
\mathbb{E}_P \left[ \operatorname{Var}_P(\psi_P(Z) \mid X_{\mathcal{V}}) \cdot \left\| \kappa_{P,\mathcal{V}} (X_{\mathcal{V}}) \right\|_{\mathcal{H}}^2 \right] + \mathbb{E}_P \left[ \nu_P (\mathcal{V})(X)^2 \cdot \mathcal{K}(X,X) \right] \\
&\quad + \left\| \nu_P^\mathcal{K}( \mathcal{V}) \right\|_{\mathcal{H}}^2 + 2 \mathbb{E}_P \left[ \left( \psi_P(Z) - \nu_P (\mathcal{V})(X) \right) \cdot \nu_P (\mathcal{V})(X) \cdot \kappa_{P,\mathcal{V}} (X_{\mathcal{V}}) (X) \right] \\
&\quad - 2 \mathbb{E}_P \left[ \left( \psi_P(Z) - \nu_P (\mathcal{V})(X) \right) \cdot \mathbb{E}_{P}\left[ \nu_P (\mathcal{V})(X) \cdot \mathbb{E}_P \lbrace \mathcal{K}(X, X^\prime) \mid X_\mathcal{V} = x_\mathcal{V} \rbrace \right] \right] \\
&\quad - 2 \mathbb{E}_P \left[ \nu_P (\mathcal{V})(X) \cdot \nu_P^\mathcal{K}(\mathcal{V})(X) \right].
\end{align*}
All the terms are finite under conditions \ref{cond:bounded}-\ref{cond:finitesecondmoment}. Therefore, the norm $\| \phi_{P}^\mathcal{V} \|_{L^2(P; \mathcal{H})}^2 = \mathbb{E}_{P} \left[ \| \phi_{P}^\mathcal{V}(z) \|_{\mathcal{H}}^2 \right]$ is finite under the assumptions stated in Theorem \ref{Theorem1}.
\end{proof}

\begin{proof}[Proof of Theorem \ref{Theorem2}]
To establish asymptotic linearity, we show that $\| \hat{\gamma}^\mathcal{K}_\omega - \gamma_\omega^\mathcal{K} ( P_0 ) - P_n \phi_{0}^\omega \|_{\mathcal{H}}=o_P(n^{-1/2})$. To this end, consider
\begin{align*}
    &\| \hat{\gamma}^\mathcal{K}_\omega - \gamma_\omega^\mathcal{K} ( P_0 ) - P_n \phi_{0}^\omega \|_{\mathcal{H}} \\
    &= \| \gamma_\omega^\mathcal{K} ( \hat{P}_n ) + P_0 \phi_{n}^\omega - \gamma_\omega^\mathcal{K} ( P_0 ) + ( P_n - P_0 ) ( \phi_{n}^\omega - \phi_{0}^\omega ) \|_\mathcal{H} \\
    &= \| \sum_{\mathcal{V} \subseteq [d]} \omega_\mathcal{V} \nu_n^\mathcal{K} ( \mathcal{V} ) + P_0 \sum_{\mathcal{V} \subseteq [d]} \omega_\mathcal{V} \phi_{n}^\mathcal{V} - \sum_{\mathcal{V} \subseteq [d]} \omega_\mathcal{V} \nu_0^\mathcal{K} ( \mathcal{V} ) + ( P_n - P_0 ) \sum_{\mathcal{V} \subseteq [d]} \omega_\mathcal{V} ( \phi_{n}^\mathcal{V} - \phi_{0}^\mathcal{V} ) \|_\mathcal{H} \\
    &\leq \| \mathcal{R}_{n}^\omega \|_\mathcal{H} + \| \mathcal{D}_{n}^\omega \|_\mathcal{H},
\end{align*}
where
\begin{align*}
    \mathcal{R}_{n}^\omega &\coloneqq \sum_{\mathcal{V} \subseteq [d]} \omega_\mathcal{V} \nu_n^\mathcal{K} ( \mathcal{V} ) + P_0 \sum_{\mathcal{V} \subseteq [d]} \omega_\mathcal{V} \phi_{n}^\mathcal{V} - \sum_{\mathcal{V} \subseteq [d]} \omega_\mathcal{V} \nu_0^\mathcal{K} ( \mathcal{V} ) = \sum_{\mathcal{V} \subseteq [d]} \omega_\mathcal{V} \mathcal{R}_{n}^\mathcal{V} \\
    \mathcal{D}_{n}^\omega &\coloneqq ( P_n - P_0 ) \sum_{\mathcal{V} \subseteq [d]} \omega_\mathcal{V} ( \phi_{n}^\mathcal{V} - \phi_{0}^\mathcal{V} ) = \sum_{\mathcal{V} \subseteq [d]} \omega_\mathcal{V} \mathcal{D}_{n}^\mathcal{V}
\end{align*}
and
\begin{align*}
    \mathcal{R}_{n}^\mathcal{V} &\coloneqq \nu_n^\mathcal{K} ( \mathcal{V} ) + P_0 \phi_{n}^\mathcal{V} - \nu_0^\mathcal{K} ( \mathcal{V} ) \\
    \mathcal{D}_{n}^\mathcal{V} &\coloneqq ( P_n - P_0 ) ( \phi_{n}^\mathcal{V} - \phi_{0}^\mathcal{V} ).
\end{align*}
Therefore, it follows from $\| \mathcal{R}_{n}^\mathcal{V} \|_\mathcal{H} = o_{P} ( n^{-1/2} )$ and $\| \mathcal{D}_{n}^\mathcal{V} \|_\mathcal{H} = o_{P} ( n^{-1/2} )$, for all $\mathcal{V}$ with $\omega_{\mathcal{V}}\not=0$, that $\| \mathcal{R}_{n}^\omega \|_\mathcal{H} = o_{P} ( n^{-1/2} )$ and $\| \mathcal{D}_{n}^\omega \|_\mathcal{H} = o_{P} ( n^{-1/2} )$. Hence, we need to show that $\|\mathcal{R}_{n}^\mathcal{V} \|_\mathcal{H} = o_{P} ( n^{-1/2} )$ and $\| \mathcal{D}_{n}^\mathcal{V} \|_\mathcal{H} = o_{P} ( n^{-1/2} )$ for all $\mathcal{V}$ with $\omega_{\mathcal{V}}\not=0$. We have:
\begin{align*}
    \nu_0^\mathcal{K} ( \mathcal{V} ) &= \int_\mathcal{X} \mathcal{K} ( \cdot, x^\prime ) \nu_0 ( \mathcal{V} ) ( x^\prime ) P_{0,X} ( dx^\prime ), \\
    \nu_n^\mathcal{K} ( \mathcal{V} ) &= P_n \left[ \mathcal{K} ( \cdot, X ) P_n \left\lbrace \mu_{n} ( 1, X ) - \mu_{n} ( 0, X ) \mid X_{\mathcal{V}} \right\rbrace \right],
\end{align*}
and
\begin{align*}
    P_0 \phi_{n}^\mathcal{V} &= \int \kappa_{n,\mathcal{V}} (x_\mathcal{V})(\cdot) \left[ \frac{a}{g_n ( 1 \mid x )} \left\lbrace y - \mu_{n} ( 1,x ) \right\rbrace - \frac{1-a}{g_n ( 0 \mid x )} \left\lbrace y - \mu_{n} ( 0,x ) \right\rbrace \right] \\
    &\quad + \kappa_{n,\mathcal{V}} (x_\mathcal{V})(\cdot) \left\lbrace \mu_{n} ( 1,x ) - \mu_{n} ( 0,x ) \right\rbrace + \left[ \mathcal{K} ( \cdot, x ) - \kappa_{n,\mathcal{V}} (x_\mathcal{V})(\cdot) \right] \nu_n (\mathcal{V}) ( x_{\mathcal{V}} ) \\
    &\quad - P_n \left[ \mathcal{K} ( \cdot, X ) \nu_n (\mathcal{V}) ( X_{\mathcal{V}} ) \right] P_0 ( dz ) \\
    &= \int \kappa_{n,\mathcal{V}} (x_\mathcal{V})(\cdot) \left[ \frac{g_0 ( 1 \mid x )}{g_n ( 1 \mid x )} \left\lbrace \mu_{0} ( 1, x ) - \mu_{n} ( 1,x ) \right\rbrace - \frac{g_0 ( 0 \mid x )}{g_n ( 0 \mid x )} \left\lbrace \mu_{0} ( 0,x ) - \mu_{n} ( 0,x ) \right\rbrace \right] \\
    &\quad + \kappa_{n,\mathcal{V}} (x_\mathcal{V})(\cdot) \left\lbrace \mu_{n} ( 1,x ) - \mu_{n} ( 0,x ) \right\rbrace + \left[ \mathcal{K} ( \cdot, x ) - \kappa_{n,\mathcal{V}} (x_\mathcal{V})(\cdot) \right] \nu_n (\mathcal{V}) ( x_{\mathcal{V}} ) \\
    &\quad - P_n \left[ \mathcal{K} ( \cdot, X ) \nu_n (\mathcal{V}) ( X_{\mathcal{V}} ) \right] P_0 ( dz ).
\end{align*}
In what follows, we separately study the remainder term ($\mathcal{R}_{n}^\mathcal{V}$) and the drift term ($\mathcal{D}_{n}^\mathcal{V}$).

\subsubsection*{Remainder term, $\mathcal{R}_{n}^\mathcal{V}$}

We have the following:
\begin{align*}
    \mathcal{R}_{n}^\mathcal{V} &= \nu_n^\mathcal{K} ( \mathcal{V} ) + P_0 \phi_{n}^\mathcal{V} - \nu_0^\mathcal{K} ( \mathcal{V} ) \\
    &= P_n \left[ \mathcal{K} ( \cdot, X ) \nu_n (\mathcal{V})(X_{\mathcal{V}}) \right] + \sum_{a=0}^1 (2a-1) \int \kappa_{n,\mathcal{V}} (x_\mathcal{V})(\cdot) \frac{g_0 ( a \mid x )}{g_n ( a \mid x )} \left\lbrace \mu_{0} ( a, x ) - \mu_{n} ( a,x ) \right\rbrace P_0 ( dz ) \\
    &\quad + \int \kappa_{n,\mathcal{V}} (x_\mathcal{V})(\cdot) \left\lbrace \mu_{n} ( 1,x ) - \mu_{n} ( 0,x ) \right\rbrace P_0 ( dz ) + \int \left[ \mathcal{K} ( \cdot, x ) - \kappa_{n,\mathcal{V}} (x_\mathcal{V})(\cdot) \right] \nu_n (\mathcal{V}) ( x_{\mathcal{V}} ) P_0 ( dz ) \\
    &\quad - \int P_n \left[ \mathcal{K} ( \cdot, X ) \nu_n (\mathcal{V}) ( X_{\mathcal{V}} ) \right] P_0 ( dz ) - \int_\mathcal{X} \mathcal{K} ( \cdot, x^\prime ) \nu_0 ( \mathcal{V} ) ( x^\prime ) P_{0,X} ( dx^\prime )\\
    &= \sum_{a=0}^1 (2a-1) \int \kappa_{n,\mathcal{V}} (x_\mathcal{V})(\cdot) \frac{g_0 ( a \mid x )}{g_n ( a \mid x )} \left\lbrace \mu_{0} ( a, x ) - \mu_{n} ( a,x ) \right\rbrace P_0 ( dz ) \\
    &\quad + \int \kappa_{n,\mathcal{V}} (x_\mathcal{V})(\cdot) \left\lbrace \mu_{n} ( 1,x ) - \mu_{n} ( 0,x ) \right\rbrace P_0 ( dz ) - \int \kappa_{n,\mathcal{V}} (x_\mathcal{V})(\cdot) \left\lbrace \mu_{0} ( 1,x ) - \mu_{0} ( 0,x ) \right\rbrace P_0 ( dz ) \\
    &\quad + \int \kappa_{n,\mathcal{V}} (x_\mathcal{V})(\cdot) \left\lbrace \mu_{0} ( 1,x ) - \mu_{0} ( 0,x ) \right\rbrace P_0 ( dz ) - \int \kappa_{n,\mathcal{V}} (x_\mathcal{V})(\cdot) \nu_n (\mathcal{V}) ( x_{\mathcal{V}} ) P_0 ( dz ) \\
    &\quad - \int_\mathcal{X} \mathcal{K} ( \cdot, x^\prime ) \left\lbrace \nu_n ( \mathcal{V} ) ( x^\prime ) - \nu_0 ( \mathcal{V} ) ( x^\prime ) \right\rbrace P_{0,X} ( dx^\prime ) \\
    &= \sum_{a=0}^1 (2a-1) \int \kappa_{n,\mathcal{V}} (x_\mathcal{V})(\cdot) \frac{g_0 ( a \mid x )}{g_n ( a \mid x )} \left\lbrace \mu_{0} ( a, x ) - \mu_{n} ( a,x ) \right\rbrace P_0 ( dz ) \\
    &\quad + \int \kappa_{n,\mathcal{V}} (x_\mathcal{V})(\cdot) \left\lbrace \mu_{n} ( 1,x ) - \mu_{0} ( 1, x ) \right\rbrace P_0 ( dz ) - \int \kappa_{n,\mathcal{V}} (x_\mathcal{V})(\cdot) \left\lbrace \mu_{n} ( 0,x ) - \mu_{0} ( 0,x ) \right\rbrace P_0 ( dz ) \\
    &\quad + \int \kappa_{n,\mathcal{V}} (x_\mathcal{V})(\cdot) \left\lbrace \mu_{0} ( 1,x ) - \mu_{0} ( 0,x ) \right\rbrace P_0 ( dz ) - \int \kappa_{n,\mathcal{V}} (x_\mathcal{V})(\cdot) \nu_n (\mathcal{V}) ( x_{\mathcal{V}} ) P_0 ( dz ) \\
    &\quad - \int_\mathcal{X} \mathcal{K} ( \cdot, x^\prime ) \left\lbrace \nu_n ( \mathcal{V} ) ( x^\prime ) - \nu_0 ( \mathcal{V} ) ( x^\prime ) \right\rbrace P_{0,X} ( dx^\prime ) \\
    &= \sum_{a=0}^1 (2a-1) \int \kappa_{n,\mathcal{V}} (x_\mathcal{V})(\cdot) \left\lbrace 1 - \frac{g_0 ( a \mid x )}{g_n ( a \mid x )} \right\rbrace \left\lbrace \mu_{0} ( a, x ) - \mu_{n} ( a,x ) \right\rbrace P_0 ( dz ) \\
    &\quad + \int \kappa_{n,\mathcal{V}} (x_\mathcal{V})(\cdot) \left\lbrace \mu_{0} ( 1,x ) - \mu_{0} ( 0,x ) \right\rbrace P_0 ( dz ) - \int \kappa_{n,\mathcal{V}} (x_\mathcal{V})(\cdot) \nu_n (\mathcal{V}) ( x_{\mathcal{V}} ) P_0 ( dz ) \\
    &\quad - \int_\mathcal{X} \mathcal{K} ( \cdot, x^\prime ) \left\lbrace \nu_n ( \mathcal{V} ) ( x^\prime ) - \nu_0 ( \mathcal{V} ) ( x^\prime ) \right\rbrace P_{0,X} ( dx^\prime ) \\
    &= \sum_{a=0}^1 (2a-1) \int \kappa_{n,\mathcal{V}} (x_\mathcal{V})(\cdot) \left\lbrace 1 - \frac{g_0 ( a \mid x )}{g_n ( a \mid x )} \right\rbrace \left\lbrace \mu_{0} ( a, x ) - \mu_{n} ( a,x ) \right\rbrace P_0 ( dz ) \\
    &\quad - \int_\mathcal{X} \left[ \kappa_{n,\mathcal{V}} (x_\mathcal{V})(\cdot) - \mathcal{K} ( \cdot, x^\prime ) \right] \left\lbrace \nu_n ( \mathcal{V} ) ( x^\prime ) - \nu_0 ( \mathcal{V} ) ( x^\prime ) \right\rbrace P_{0,X} ( dx^\prime ) \\
    &= \sum_{a=0}^1 (2a-1) \int \kappa_{n,\mathcal{V}} (x_\mathcal{V})(\cdot) \left\lbrace 1 - \frac{g_0 ( a \mid x )}{g_n ( a \mid x )} \right\rbrace \left\lbrace \mu_{0} ( a, x ) - \mu_{n} ( a,x ) \right\rbrace P_0 ( dz ) \\
    &\quad - \int_\mathcal{X} \left[ \kappa_{n,\mathcal{V}} (x_\mathcal{V})(\cdot) - \kappa_{0,\mathcal{V}} (x_\mathcal{V})(\cdot) \right] \left\lbrace \nu_n ( \mathcal{V} ) ( x^\prime ) - \nu_0 ( \mathcal{V} ) ( x^\prime ) \right\rbrace P_{0,X} ( dx^\prime ).
\end{align*}
We can now bound the squared norm of the first and second term in the sum above, for $a=0,1$:
\begin{align*}
& \left\| \int \kappa_{n,\mathcal{V}} (x_\mathcal{V}) \left\lbrace 1 - \frac{g_0 ( a \mid x )}{g_n ( a \mid x )} \right\rbrace \left\lbrace \mu_{0} ( a, x ) - \mu_{n} ( a,x ) \right\rbrace P_{0,X} ( dx ) \right\|^2_\mathcal{H} \\
& \leq \left( \int \left\| \kappa_{n,\mathcal{V}} (x_\mathcal{V}) \right\|^2_\mathcal{H} P_{0,X} ( dx ) \right)  \left( \int \left\lbrace 1 - \frac{g_0 ( a \mid x )}{g_n ( a \mid x )} \right\rbrace^2 \left\lbrace \mu_{0} ( a, x ) - \mu_{n} ( a,x ) \right\rbrace^2 P_{0,X} ( dx ) \right) \\
&= \left\| \kappa_{n,\mathcal{V}} (X_\mathcal{V}) \right\|^2_{L^2(P_0;\mathcal{H})}
\left\| \left\lbrace 1 - \frac{g_0 ( a \mid x )}{g_n ( a \mid x )} \right\rbrace \left\lbrace \mu_{0} ( a, x ) - \mu_{n} ( a,x ) \right\rbrace \right\|^2_{L^2(P_0)} \\
&\leq \left\| \kappa_{n,\mathcal{V}} (X_\mathcal{V}) \right\|^2_{L^2(P_0;\mathcal{H})}
\left\| 1 - \frac{g_0 ( a \mid x )}{g_n ( a \mid x )} \right\|^2_{L^4(P_0)} \left\| \mu_{0} ( a, x ) - \mu_{n} ( a,x ) \right\|^2_{L^4(P_0)}
\end{align*}

We now bound the squared norm of the remaining term:
\begin{align*}
    &\left\|\int_\mathcal{X} \left[ \kappa_{n,\mathcal{V}} (x_\mathcal{V}) - \kappa_{0,\mathcal{V}} (x_\mathcal{V}) \right] \left\lbrace \nu_n ( \mathcal{V} ) ( x^\prime ) - \nu_0 ( \mathcal{V} ) ( x^\prime ) \right\rbrace P_{0,X} ( dx^\prime )\right\|_{\mathcal{H}}^2 \\
    &= \iint \left\langle \kappa_{n,\mathcal{V}} (x_{1,\mathcal{V}}) - \kappa_{0,\mathcal{V}} (x_{1,\mathcal{V}}), \kappa_{n,\mathcal{V}} (x_{2,\mathcal{V}}) - \kappa_{0,\mathcal{V}} (x_{2,\mathcal{V}}) \right\rangle_{\mathcal{H}} \\
    &\hspace{3em}\times\left[\prod_{j=1}^2 \left\lbrace \nu_n ( \mathcal{V} ) ( x_j ) - \nu_0 ( \mathcal{V} ) ( x_j ) \right\rbrace \right]P_{0,X} ( dx_1 )P_{0,X} ( dx_2 ) \\
    &\le \iint \left\| \kappa_{n,\mathcal{V}} (x_{1,\mathcal{V}}) - \kappa_{0,\mathcal{V}} (x_{1,\mathcal{V}}) \right\|_{\mathcal{H}}\left\| \kappa_{n,\mathcal{V}} (x_{2,\mathcal{V}}) - \kappa_{0,\mathcal{V}} (x_{2,\mathcal{V}}) \right\|_{\mathcal{H}} \\
    &\hspace{3em}\times\left|\prod_{j=1}^2 \left\lbrace \nu_n ( \mathcal{V} ) ( x_j ) - \nu_0 ( \mathcal{V} ) ( x_j ) \right\rbrace \right|P_{0,X} ( dx_1 )P_{0,X} ( dx_2 ) \\
    &= \left[\int \left\| \kappa_{n,\mathcal{V}} (x_{\mathcal{V}}) - \kappa_{0,\mathcal{V}} (x_{\mathcal{V}}) \right\|_{\mathcal{H}}\left|\nu_n ( \mathcal{V} ) ( x ) - \nu_0 ( \mathcal{V} ) ( x ) \right|P_{0,X} ( dx )\right]^2 \\
    &\le \left[\int \left\| \kappa_{n,\mathcal{V}} (x_{\mathcal{V}}) - \kappa_{0,\mathcal{V}} (x_{\mathcal{V}}) \right\|_{\mathcal{H}}^2 P_{0,X} ( dx )\right]\left[\int \left|\nu_n ( \mathcal{V} ) ( x ) - \nu_0 ( \mathcal{V} ) ( x ) \right|^2 P_{0,X} ( dx )\right] \\
    &= \left\| \kappa_{n,\mathcal{V}} (X_\mathcal{V}) - \kappa_{0,\mathcal{V}} (X_\mathcal{V}) \right\|_{L^2(P_{0,X};\mathcal{H})}^2 \left\| \nu_n ( \mathcal{V} ) - \nu_0 ( \mathcal{V} ) \right\|_{L^2 ( P_{0,X} )}^2 
\end{align*}
Therefore, we have the following upper bound:
\begin{align*}
    \left\| \mathcal{R}_{n}^\mathcal{V} \right\|_\mathcal{H} &\le 2 \left\| \kappa_{n,\mathcal{V}} (X_{\mathcal{V}}) \right\|_{L^2(P_{0,X};\mathcal{H})} \max_{a=0,1} \left\lbrace \left\| 1 - \frac{g_0 ( a \mid x )}{g_n ( a \mid x )} \right\|_{L^4 ( P_{0,X} )} \left\| \mu_{0} ( a, x ) - \mu_{n} ( a,x ) \right\|_{L^4 ( P_{0,X} )} \right\rbrace \\
    &\quad + \left\| \kappa_{n,\mathcal{V}} (X_{\mathcal{V}}) - \kappa_{0,\mathcal{V}} (X_{\mathcal{V}}) \right\|_{L^2(P_{0,X};\mathcal{H})} \left\| \nu_n ( \mathcal{V} ) - \nu_0 ( \mathcal{V} ) \right\|_{L^2 ( P_{0,X} )}.
\end{align*}
Therefore, $\| \mathcal{R}_{n}^\mathcal{V} \|_\mathcal{H} = o_{P} ( n^{-1/2} )$ whenever 
\begin{align*}
    \max_{a=0,1} \left\lbrace \left\| 1 - \frac{g_0 ( a \mid x )}{g_n ( a \mid x )} \right\|_{L^4 ( P_{0,X} )} \left\| \mu_{0} ( a, x ) - \mu_{n} ( a,x ) \right\|_{L^4 ( P_{0,X} )} \right\rbrace = o_{P} ( n^{-1/2} )
\end{align*}
and
\begin{align*}
    \left\| \kappa_{n,\mathcal{V}} (X_{\mathcal{V}}) - \kappa_{0,\mathcal{V}} (X_{\mathcal{V}}) \right\|_{L^2(P_{0,X};\mathcal{H})} \left\| \nu_n ( \mathcal{V} ) - \nu_0 ( \mathcal{V} ) \right\|_{L^2 ( P_{0,X} )} = o_{P} ( n^{-1/2} ).
\end{align*}

\subsubsection*{Drift term, $\mathcal{D}_{n}^\mathcal{V}$}

We want to show that $\left\| \mathcal{D}_n^\mathcal{V} \right\|_\mathcal{H} = o_P ( n^{-1/2})$. By Chebyshev's inequality for Hilbert-valued random elements \citep{grenander1963probabilities}, $\left\| \phi_{n}^\mathcal{V} - \phi_{0}^\mathcal{V} \right\|_{L^2 ( P_0; \mathcal{H} )} = o_P ( 1 )$ implies $\left\| \mathcal{D}_n^\mathcal{V} \right\|_\mathcal{H} = o_P ( n^{-1/2})$\footnote{If cross-fitting had been used instead of sample splitting, we could have instead employed Lemma 3 in \cite{Luedtke2024}.}. Therefore, it suffices to show that $\left\| \phi_{n}^\mathcal{V} - \phi_{0}^\mathcal{V} \right\|_{L^2 ( P_0; \mathcal{H} )} = o_P ( 1 )$. We have the following:
\begin{align*}
&\phi_{n}^\mathcal{V} - \phi_{0}^\mathcal{V} \\
&= \kappa_{n,\mathcal{V}} (X_\mathcal{V}) (\cdot) \psi_n ( z ) + \left\lbrace \mathcal{K} ( \cdot, x ) - \kappa_{n,\mathcal{V}} (X_\mathcal{V}) (\cdot) \right\rbrace \nu_n (\mathcal{V}) ( x_{\mathcal{V}} ) - P_n \left\lbrace \mathcal{K} ( \cdot, X ) \nu_n (\mathcal{V}) ( X_\mathcal{V} ) \right\rbrace \\
 &\quad - \left[ \kappa_{0,\mathcal{V}} (X_\mathcal{V}) (\cdot) \psi_0 ( z ) + \left\lbrace \mathcal{K} ( \cdot, x ) - \kappa_{0,\mathcal{V}} (X_\mathcal{V}) (\cdot) \right\rbrace \nu_0 ( \mathcal{V}) ( x_{\mathcal{V}} ) - P_0 \left\lbrace \mathcal{K} ( \cdot, X ) \nu_0 (\mathcal{V}) ( X_\mathcal{V} ) \right\rbrace \right] \\
 &= \kappa_{n,\mathcal{V}} (X_\mathcal{V}) (\cdot) \psi_n ( z ) - \kappa_{0,\mathcal{V}} (X_\mathcal{V}) (\cdot) \psi_n ( z ) + \kappa_{0,\mathcal{V}} (X_\mathcal{V}) (\cdot) \psi_n ( z ) - \kappa_{0,\mathcal{V}} (X_\mathcal{V}) (\cdot) \psi_0 ( z ) \\
 &\quad + \mathcal{K} ( \cdot, x ) \left\lbrace \nu_n (\mathcal{V}) ( x_{\mathcal{V}} ) - \nu_0 (\mathcal{V}) ( x_{\mathcal{V}} ) \right\rbrace \\
 &\quad - \left\lbrace \kappa_{n,\mathcal{V}} (X_\mathcal{V}) (\cdot) \nu_n (\mathcal{V}) ( x_{\mathcal{V}} ) - \kappa_{0,\mathcal{V}} (X_\mathcal{V}) (\cdot)  \nu_n (\mathcal{V}) ( x_{\mathcal{V}} ) + \kappa_{0,\mathcal{V}} (X_\mathcal{V}) (\cdot)  \nu_n (\mathcal{V}) ( x_{\mathcal{V}} ) - \kappa_{0,\mathcal{V}} (X_\mathcal{V}) (\cdot)  \nu_0 (\mathcal{V}) ( x_{\mathcal{V}} ) \right\rbrace \\
 &\quad - \left[ P_n \left\lbrace \mathcal{K} ( \cdot, X ) \nu_n (\mathcal{V}) ( X_\mathcal{V} ) \right\rbrace - P_n \left\lbrace \mathcal{K} ( \cdot, X ) \nu_0 (\mathcal{V}) ( X_\mathcal{V} ) \right\rbrace + P_n \left\lbrace \mathcal{K} ( \cdot, X ) \nu_0 (\mathcal{V}) ( X_\mathcal{V} ) \right\rbrace \right.\\
 &\quad \quad \left. - P_0 \left\lbrace \mathcal{K} ( \cdot, X ) \nu_0 (\mathcal{V}) ( X_\mathcal{V} ) \right\rbrace \right] \\
 &= \left( \kappa_{n,\mathcal{V}} (X_\mathcal{V}) (\cdot) - \kappa_{0,\mathcal{V}} (X_\mathcal{V}) (\cdot) \right) \psi_n ( z ) + \kappa_{0,\mathcal{V}} (X_\mathcal{V}) (\cdot) \left\lbrace \psi_n ( z ) - \psi_0 ( z ) \right\rbrace + \mathcal{K} ( \cdot, x ) \left\lbrace \nu_n (\mathcal{V}) ( x_{\mathcal{V}} ) - \nu_0 (\mathcal{V}) ( x_{\mathcal{V}} ) \right\rbrace \\
 &\quad - \nu_n (\mathcal{V}) ( x_{\mathcal{V}} ) \left( \kappa_{n,\mathcal{V}} (X_\mathcal{V}) (\cdot) - \kappa_{0,\mathcal{V}} (X_\mathcal{V}) (\cdot) \right) - \kappa_{0,\mathcal{V}} (X_\mathcal{V}) (\cdot) \left\lbrace \nu_n (\mathcal{V}) ( x_{\mathcal{V}} ) - \nu_0 (\mathcal{V}) ( x_{\mathcal{V}} ) \right\rbrace \\
 &\quad - P_n \left[ \mathcal{K} ( \cdot, X ) \left\lbrace \nu_n (\mathcal{V}) ( X_\mathcal{V} ) - \nu_0 (\mathcal{V}) ( X_\mathcal{V} ) \right\rbrace \right] - \left[ P_n \left\lbrace \mathcal{K} ( \cdot, X ) \nu_0 (\mathcal{V}) ( X_\mathcal{V} ) \right\rbrace - P_0 \left\lbrace \mathcal{K} ( \cdot, X ) \nu_0 (\mathcal{V}) ( X_\mathcal{V} ) \right\rbrace \right].
\end{align*}
Therefore, by the triangle inequality, we have:
\begin{align*}
 &\left\| \phi_{n}^\mathcal{V} - \phi_{0}^\mathcal{V} \right\|_{L^2(P_0;\mathcal{H})} \\
 &\leq \left\| \left( \kappa_{n,\mathcal{V}} (X_\mathcal{V}) (\cdot) - \kappa_{0,\mathcal{V}} (X_\mathcal{V}) (\cdot) \right) \psi_n \right\|_{L^2(P_0;\mathcal{H})} + \left\| \kappa_{0,\mathcal{V}} (X_\mathcal{V}) (\cdot) \left( \psi_n - \psi_0 \right) \right\|_{L^2(P_0;\mathcal{H})} \\
 &\quad + \left\| \mathcal{K} ( \cdot, x ) \left\lbrace \nu_n (\mathcal{V}) - \nu_0 (\mathcal{V}) \right\rbrace \right\|_{L^2(P_0;\mathcal{H})} + \left\| \nu_n (\mathcal{V}) \left( \kappa_{n,\mathcal{V}} (X_\mathcal{V}) (\cdot) - \kappa_{0,\mathcal{V}} (X_\mathcal{V}) (\cdot) \right) \right\|_{L^2(P_0;\mathcal{H})} \\
 &\quad + \left\| \kappa_{0,\mathcal{V}} (X_\mathcal{V}) (\cdot) \left\lbrace \nu_n (\mathcal{V}) - \nu_0 (\mathcal{V}) \right\rbrace \right\|_{L^2(P_0;\mathcal{H})} + \left\| P_n \left[ \mathcal{K} ( \cdot, X ) \left\lbrace \nu_n (\mathcal{V}) ( X_\mathcal{V} ) - \nu_0 (\mathcal{V}) ( X_\mathcal{V} ) \right\rbrace \right] \right\|_{\mathcal{H}} \\
 &\quad + \left\| ( P_n - P_0 ) \left\lbrace \mathcal{K} ( \cdot, X ) \nu_0 (\mathcal{V}) ( X_\mathcal{V} ) \right\rbrace \right\|_{\mathcal{H}} \\
 &\leq \left\| \kappa_{n,\mathcal{V}} (X_\mathcal{V}) (\cdot) - \kappa_{0,\mathcal{V}} (X_\mathcal{V}) (\cdot) \right\|_{L^4(P_0;\mathcal{H})} \left\| \psi_n \right\|_{L^4(P_0)} + \left\| \kappa_{0,\mathcal{V}} (X_\mathcal{V}) (\cdot) \right\|_{L^\infty(P_0;\mathcal{H})} \left\| \psi_n - \psi_0 \right\|_{L^2(P_0)} \\
  &\quad + \left\| \mathcal{K} ( \cdot, x ) \right\|_{L^\infty(P_0;\mathcal{H})} \left\| \nu_n (\mathcal{V}) - \nu_0 (\mathcal{V}) \right\|_{L^2(P_0)} + \left\| \nu_n (\mathcal{V}) \right\|_{L^4(P_0)}  \left\| \kappa_{n,\mathcal{V}} (X_\mathcal{V}) (\cdot) - \kappa_{0,\mathcal{V}} (X_\mathcal{V}) (\cdot) \right\|_{L^4(P_0;\mathcal{H})} \\
   &\quad + \left\| \kappa_{0,\mathcal{V}} (X_\mathcal{V}) (\cdot) \right\|_{L^\infty(P_0;\mathcal{H})} \left\| \nu_n (\mathcal{V}) - \nu_0 (\mathcal{V}) \right\|_{L^2(P_0)} + \left\| \mathcal{K} ( \cdot, x ) \right\|_{L^\infty(P_0;\mathcal{H})} \left\| \nu_n (\mathcal{V}) - \nu_0 (\mathcal{V}) \right\|_{L^2(P_0)} \\
   &\quad + \left\| (P_n - P_0)  \left[ \mathcal{K} ( \cdot, X ) \left\lbrace \nu_n (\mathcal{V}) ( X_\mathcal{V} ) - \nu_0 (\mathcal{V}) ( X_\mathcal{V} ) \right\rbrace \right] \right\|_{\mathcal{H}} + \left\| ( P_n - P_0 ) \left\lbrace \mathcal{K} ( \cdot, X ) \nu_0 (\mathcal{V}) ( X_\mathcal{V} ) \right\rbrace \right\|_{\mathcal{H}}.
\end{align*}
The final bound follows from H\"older’s, Jensen’s, and the triangle inequalities. The first term vanishes by \ref{cond:consistentkappa} and \ref{cond:consistentpsi}. The second term vanishes by \ref{cond:bounded} and \ref{cond:consistentpsi}. The third, fifth and sixth vanish by \ref{cond:bounded} and \ref{cond:consistentnu}. The fourth vanishes by \ref{cond:consistentkappa} and \ref{cond:consistentnu}. Since we are using sample-splitting, the seventh and eighth terms vanish by the law of large numbers for triangular arrays. Therefore, $\left\| \phi_{n}^\mathcal{V} - \phi_{0}^\mathcal{V} \right\|_{L^2(P_0;\mathcal{H})} = o_P (1)$.

\end{proof}

\begin{proof}[Proof of Theorem \ref{Theorem3}]
This result follows from Theorem 3 in \cite{Luedtke2024}.
\end{proof}

\begin{proof}[Proof of Lemma \ref{lemma2}]
This follows directly from Theorem \ref{Theorem2} by an application of the delta method. 
\end{proof}

\begin{proof}[Proof of Theorem~\ref{Theorem:confint}]

    We first show that---under \ref{cond:bounded}-\ref{cond:finitesecondmoment}---$\sigma_\omega^2 = P_0 \left[ \langle \phi_0^\omega, \gamma_\omega^\mathcal{K} (P_0) \rangle_\mathcal{H}^2 \right] =0$ if and only if  $\gamma_\omega^\mathcal{K} (P_0)=0$. If $\gamma_\omega^\mathcal{K} (P_0)=0$, then $\langle \phi_0^\omega (Z), \gamma_\omega^\mathcal{K} (P_0) \rangle_\mathcal{H} = 0$, and therefore $\sigma_\omega^2=0$. 

    Now we show that $\sigma_\omega^2=0$ implies $\gamma_\omega^\mathcal{K} (P_0)=0$:
    \begin{align*}
        &P_0 \left[ \langle \phi_0^\omega, \gamma_\omega^\mathcal{K} (P_0) \rangle_\mathcal{H}^2 \right] = P_0 \left[ \left\langle \left( \phi_0^\omega \otimes \phi_0^\omega \right) \gamma_\omega^\mathcal{K} (P_0), \gamma_\omega^\mathcal{K} (P_0) \right\rangle_\mathcal{H} \right] = \left\langle \Sigma \gamma_\omega^\mathcal{K} (P_0), \gamma_\omega^\mathcal{K} (P_0) \right\rangle_\mathcal{H},
    \end{align*}
    where $\Sigma \coloneqq P_0 \left[ \phi_0^\omega \otimes \phi_0^\omega \right]$ is the covariance operator of $\phi_0^\omega (Z)$. The first equality follows by the definition of a tensor product. Since $\Sigma$ it is strictly positive definite by condition \ref{cond:spdcovariance}, therefore $\sigma_\omega^2=0$ implies $\gamma_\omega^\mathcal{K} (P_0)$ is equal zero. Therefore, we have shown that---under stated conditions---$\sigma_\omega^2 = P_0 \left[ \langle \phi_0^\omega, \gamma_\omega^\mathcal{K} (P_0) \rangle_\mathcal{H}^2 \right] =0$ if and only if $\gamma_\omega^\mathcal{K} (P_0)=0$.
    
    We then separately consider two cases to establish coverage of the union confidence set, $\mathcal{C}_{\omega,n}^{>0}\cup \mathcal{C}_{\omega,n}^{0}$. In the first, the variable importance measure is nonzero, so that $\sigma_\omega^2>0$. In this case, Lemma~\ref{lemma2} together with the condition \ref{cond:consistentest}, shows $\mathcal{C}_{\omega,n}^{>0}$ covers the variable importance measure at the desired asymptotic level. When the variable importance measure is $0$, the latter set in the union, $\mathcal{C}_{\omega,n}^{0}$, will cover the variable importance measure at the desired level; this follows since $\mathcal{C}_{\omega,n}^{0}$ contains $0$ if and only if the test of no importance from Section~\ref{Section3.3} fails to reject the null. 
\end{proof}

\begin{proof}[Proof of Theorem \ref{Theorem5}]
Suppose $P_0$ is such that $\gamma_{\omega}^{\mathcal{K}}(P_0) = \gamma_1 \neq 0$. This corresponds to a fixed alternative. Since $\hat{\gamma}_{\omega}^\mathcal{K} \xrightarrow{P} \gamma_1$, and the confidence set $\mathcal{C}_n^\omega(\hat{\xi}_{\omega})$ is centered at $\hat{\gamma}_{\omega}^\mathcal{K}$ and shrinks at a rate depending on $n$, we have:
\[
\text{diam}\left( \mathcal{C}_n^\omega(\hat{\xi}_{\omega}) \right) = o_P(1).
\]
Hence, for any $\varepsilon > 0$, there exists $N$ such that for all $n > N$, with high probability, the confidence set $\mathcal{C}_n^\omega(\hat{\xi}_{\omega})$ lies entirely within a ball of radius $\varepsilon$ centered at $\gamma_1$. Since $\gamma_1 \ne 0$, we can choose $\varepsilon$ small enough that $0 \notin B(\gamma_1, \varepsilon)$. Therefore, for large $n$,
\begin{align*}
    P_{0} \left( 0 \in \mathcal{C}_n^\omega(\hat{\xi}_{\omega}) \right) \to 0,
\end{align*}
which implies
\begin{align*}
    P_{0} \left( 0 \notin \mathcal{C}_n^\omega(\hat{\xi}_{\omega}) \right) \to 1.
\end{align*}
Hence, the test rejects $H_0$ with probability tending to one under a fixed alternative.
\end{proof}

\begin{proof}[Proof of Theorem \ref{Theorem4}]

This result follows from Theorem 6 in \cite{Luedtke2024}.

\end{proof}

\section{Supplementary results from the empirical example}\label{app:resultsempirical}

Figure \ref{figure4} displays Day 43 $\log_{10}$ neutralizing antibody titers against BA.1 by country and treatment (Panel A), and by baseline $\log_{10}$ neutralizing antibody titer against BA.1 and treatment (Panel B). Panel B suggests that individuals with undetectable baseline titers have greater potential for a vaccine-induced increase in neutralizing antibody titers than those with high baseline titers prior to vaccination.

\begin{figure}[tb]
\begin{center}
 \includegraphics[width=\textwidth]{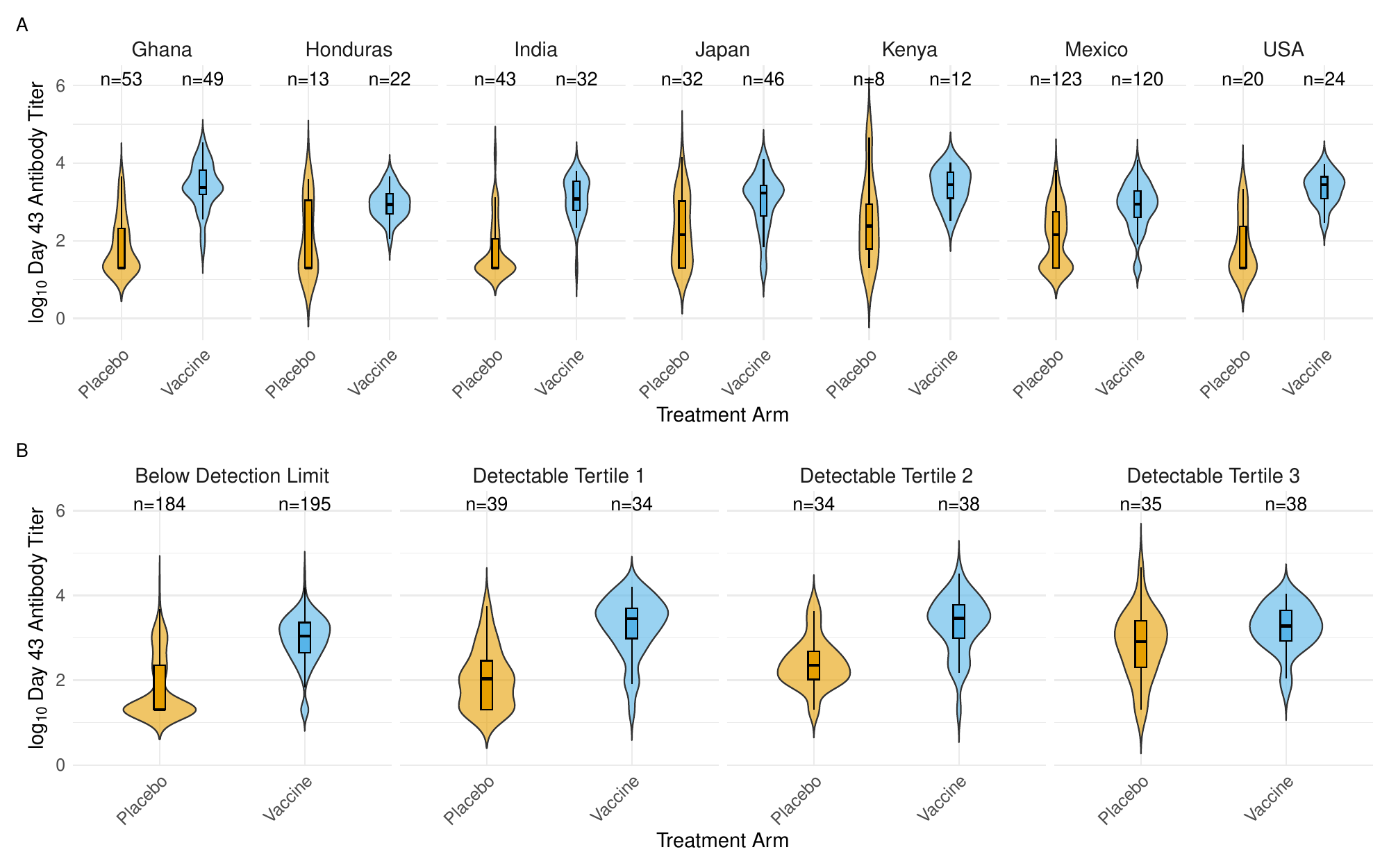}    
\end{center}
\caption{Panel A displays Day 43 $\log_{10}$ neutralizing antibody titer against BA.1 by country and treatment. Panel B displays Day 43 $\log_{10}$ neutralizing antibody titer against BA.1 by baseline titer and treatment.} \label{figure4}
\end{figure}

\begin{table}[tb]
\centering
\caption{Variable importance---defined as the norm of the RKHS embedding of the local variable importance measures---with standard errors and p-values, including Benjamini-Hochberg-adjusted p-values.} \label{table1}
\begin{adjustbox}{max width=0.95\textwidth}
\begin{tabular}{
  l
  l
  S[table-format=1.4]  
  S[table-format=1.4]  
  S[table-format=1.4]  
  S[table-format=1.4]  
}
\toprule
\textbf{Variable} & \textbf{Method} & \textbf{Var. Imp.} & \textbf{St. Err.} & \textbf{p-value} & \textbf{BH-adjusted p-value} \\
\midrule
\multirow{3}{*}{Age} 
  & LOO     & 0.0468 & 0.0788 & 0.5172 & 0.6206 \\
  & KOI     & 0.0593 & 0.0823 & 0.2532 & 0.3038 \\
  & Shapley & 0.0503 & 0.0722 & 0.3182 & 0.3818 \\
\midrule
\multirow{3}{*}{BA.1 NAb} 
  & LOO     & 0.0046 & 0.0053 & 0.2568 & 0.5136 \\
  & KOI     & 0.0290 & 0.0192 & \textbf{0.0028} & \textbf{0.0084} \\
  & Shapley & 0.0104 & 0.0086 & \textbf{0.0166} & \textbf{0.0498} \\
\midrule
\multirow{3}{*}{BMI} 
  & LOO     & 0.0252 & 0.0317 & 0.2538 & 0.5136 \\
  & KOI     & 0.0092 & 0.0396 & 0.9214 & 0.9214 \\
  & Shapley & 0.0127 & 0.0293 & 0.8046 & 0.8046 \\
\midrule
\multirow{3}{*}{Country} 
  & LOO     & 0.0268 & 0.0292 & 0.0920 & 0.5136 \\
  & KOI     & 0.0477 & 0.0266 & \textbf{0.0000} & \textbf{0.0000} \\
  & Shapley & 0.0338 & 0.0238 & \textbf{0.0028} & \textbf{0.0168} \\
\midrule
\multirow{3}{*}{S-IgG (Omicron)} 
  & LOO     & 0.0051 & 0.0064 & 0.4212 & 0.6206 \\
  & KOI     & 0.0111 & 0.0154 & 0.1682 & 0.2523 \\
  & Shapley & 0.0049 & 0.0071 & 0.1996 & 0.2994 \\
\midrule 
\multirow{3}{*}{Sex} 
  & LOO     & 0.0045 & 0.0061 & 0.6736 & 0.6736 \\
  & KOI     & 0.0071 & 0.0093 & 0.1436 & 0.2523 \\
  & Shapley & 0.0037 & 0.0050 & 0.1802 & 0.2994 \\
\bottomrule
\end{tabular}
\end{adjustbox}
\end{table}

\section{Supplementary results from the numerical experiments}\label{app:resultssimulations}

\subsection{Experiment 3: 3-dimensional covariate} \label{app:experiment3}

The observed data consist of $(X, A, Y)$, where $X \coloneqq (X_1, X_2, X_3)$ is a covariate vector sampled from a Gaussian copula with covariance matrix $\Sigma$. The off-diagonal elements of $\Sigma$ are set to $\sigma \in \lbrace 0, 0.4, 0.8 \rbrace$. The treatment satisfies $A| X\sim \textnormal{Bernoulli}(\mathrm{expit}[-0.4 X_1 + 0.1 X_1 X_2])$, where $\mathrm{expit}(x) = 1 / \lbrace 1 + \exp(-x) \rbrace$. The outcome satisfies $Y|A,X\sim \mathcal{N}(X_1 X_2 + 2 X_2^2 - X_1 + A\tau, 1)$, where the CATE is defined as $\tau = \beta g(X_1)$, with $\beta \in \lbrace 0, 1, 5 \rbrace$. We consider two kinds of alternatives: smooth alternatives, where $g(X_1) = X_1$, and rough alternatives, where $g(X_1) = \sin(5 \pi X_1)$.

Figure \ref{figure2c} presents the results of the simulation study. The left column shows the empirical Type I error rates under the null hypothesis of no importance for $X_1$ (i.e., $\beta = 0$), demonstrating that our method effectively controls Type I error across various values of the correlation parameter $\sigma$. The middle column displays the empirical rejection probabilities under the smooth alternative ($g(X_1) = X_1$) for $\beta = 1$ and $\beta = 5$, while the right column presents the corresponding results under the rough alternative ($g(X_1) = \sin(5 \pi X_1)$). Although our method exhibits low power for small sample sizes, its power improves noticeably as the sample size increases.

An important distinction between this simulation setting and Experiments 1 and 2 in the main text is that here, $\tau$ is a function of $X_1$ only, whereas in Experiments 1 and 2, it depended on multiple covariates. As a result, under the setting $\beta = 0$ and $\sigma \in \lbrace 0.4, 0.8 \rbrace$, the variable importance measures KOI and Shapley values correspond to a null scenario. In contrast, in Experiments 1 and 2, the same parameter settings represented an alternative scenario because $\tau$ depended on additional covariates, so correlation between $X_1$ and those covariates induced nonzero variable importance for $X_1$ under KOI and Shapley measures.

\begin{figure}[H]
\begin{center}
 \includegraphics[width=\textwidth]{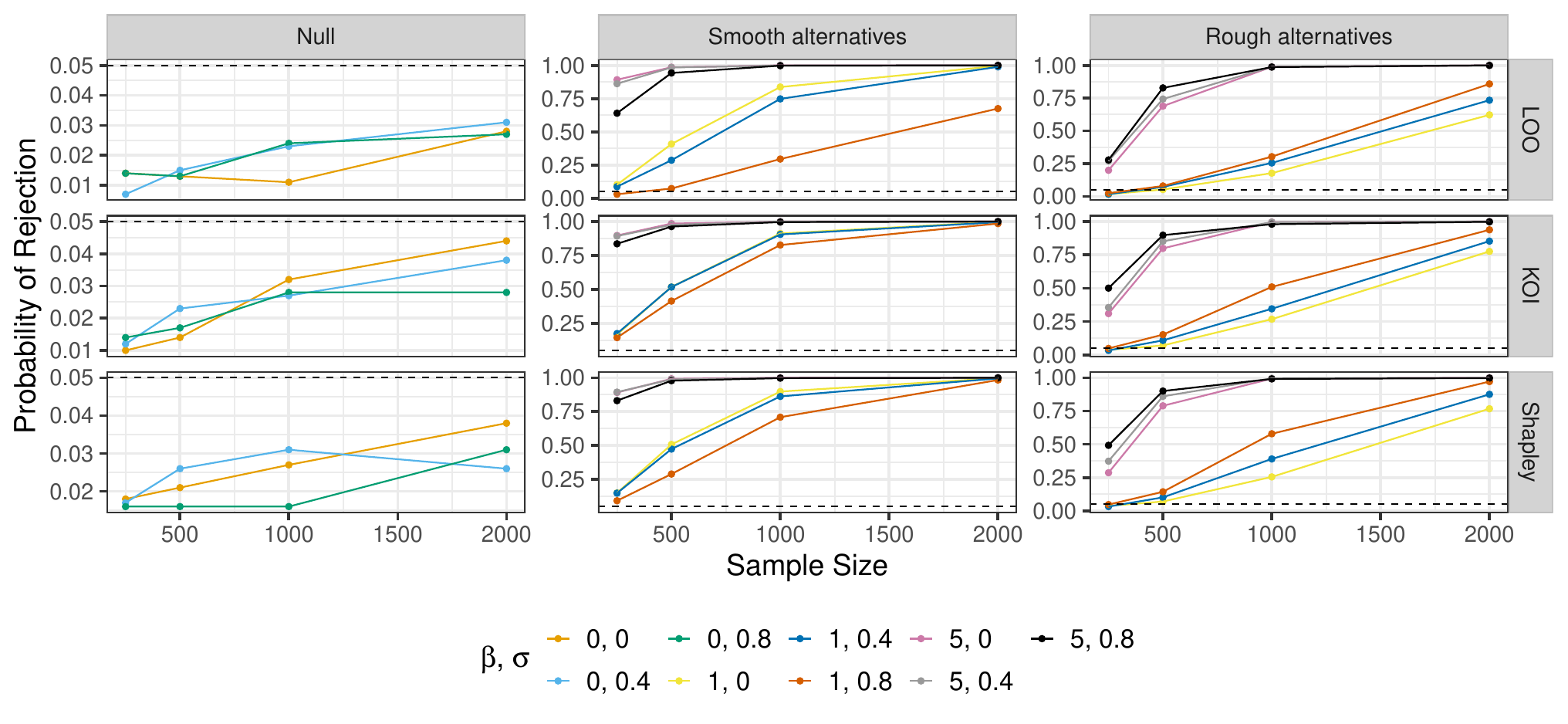}  
\end{center}
\caption{Simulation results from Experiment 3 comparing different variable importance measures (rows) across different data-generating processes (columns).} \label{figure2c}
\end{figure}

\subsection{Computational considerations for approximating Shapley values} \label{app:computationalconsiderations}

Due to their combinatorial nature, Shapley values are computationally expensive to calculate exactly, and quickly become infeasible as the number of covariates increases. We discuss practical considerations and provide recommendations for approximating Shapley value estimators in a computationally efficient manner.

To implement the test for zero importance based on Shapley values in our numerical experiments, we needed to compute two key expressions---the test statistic (\ref{eqn15}) and the corresponding bootstrap-based distribution (\ref{eqn16}). It was crucial to derive concise formulations involving as few matrix and vector multiplications as possible, and to execute these operations in an optimized order to avoid handling large matrices wherever feasible. Since our implementation was carried out in \texttt{R}, special care was taken to eliminate for-loops and instead express all computations using matrix and vector operations. Alternatively, these procedures could be implemented in faster languages such as \texttt{Python}, \texttt{Julia}, or \texttt{C++} for improved performance. Furthermore, we approximated the Shapley values via Monte Carlo sampling, by randomly drawing permutations of covariates. The error due to Monte Carlo sampling is of order $O(m^{-1/2})$, where $m$ is the number of samples. Consequently, the total approximation error is of order $O(m^{-1/2}) + o(n^{-1/2})$. Ideally, $m$ should be chosen larger than $n$ to ensure the sampling error remains negligible compared to the estimation error.

\end{document}